\pgfplotsset{compat=1.17}
\newcommand\scale[2]{\vstretch{#1}{\hstretch{#1}{#2}}}
\tikzset{>=latex} % for LaTeX arrow head
\pgfplotsset{scaled y ticks=false}
\newtheorem{defi}{\textbf{Definition}}
\newtheorem{thom}{\textbf{Theorem}}
\newtheorem{asp}{\textbf{Assumption}}
\newtheorem{rek}{\textbf{Remark}}
\newtheorem{lema}{\textbf{Lemma}}
\newdefinition{examp}{\textbf{Example}}
\newenvironment{proof}{\paragraph{Proof}}{\hfill$\square$ \vskip8pt}
\renewcommand*{\@opargbegintheorem}[3]{\trivlist
      \item[\hskip \labelsep{\bfseries #1\ #2}] \textbf{(#3)}\ \itshape}
\newcommand{\defiref}[1]{Definition~\ref{#1}}
\newcommand{\thomref}[1]{Theorem~\ref{#1}}
\newcommand{\aspref}[1]{Assumption~\ref{#1}}
\newcommand{\algoref}[1]{Algorithm~\ref{#1}}
\newcommand{\lemaref}[1]{Lemma~\ref{#1}}
\newcommand{\tabref}[1]{Table~\ref{#1}}
\newcommand{\figref}[1]{Fig.~\ref{#1}}
\newcommand{\sectref}[1]{Section~\ref{#1}}
\newcommand{\eqnref}[1]{(\ref{#1})}
\newcommand{\egref}[1]{Example~\ref{#1}}
\renewcommand{\emptyset}{\varnothing}
\DeclareMathOperator*{\argmax}{argmax}
\newcommand{\Act}{\mathit{Act}}
\newcommand{\cO}{\mathcal{O}}
\newcommand{\obs}{\mathit{obs}}
\newcommand{\Loc}{\mathit{Loc}}
\newcommand{\Per}{\mathit{Per}}
\newcommand{\loc}{\mathit{loc}}
\newcommand{\per}{\mathit{per}}
\newcommand{\pomdp}{\mathsf{P}}
\newcommand{\fpath}{\mathit{FPath}}
\newcommand{\last}{\mathit{last}}
\newcommand{\mdp}{\mathsf{M}}
\newcommand{\agent}{\mathsf{Ag}}
\newcommand{\sem}[1]{\llbracket {#1} \rrbracket}
\newcommand{\supp}{\mathit{supp}}
\newcommand{\tr}{\mathit{tr}}
\newcommand{\ad}{\mathit{ad}}
\tikzset{
  xyz frame/.code n args={3}{%
    \begingroup
    \tikz@scan@one@point\pgfutil@firstofone#1\relax
    \pgf@xa=\pgf@x
    \pgf@ya=\pgf@y
    \tikz@scan@one@point\pgfutil@firstofone#2\relax
    \pgf@xb=\pgf@x
    \pgf@yb=\pgf@y
    \tikz@scan@one@point\pgfutil@firstofone#3\relax
    \edef\tikz@marshall{\noexpand\endgroup\noexpand\pgfsetxvec{\noexpand\pgfpoint{\the\pgf@xa}{\the\pgf@ya}}%
      \noexpand\pgfsetyvec{\noexpand\pgfpoint{\the\pgf@xb}{\the\pgf@yb}}%
      \noexpand\pgfsetzvec{\noexpand\pgfpoint{\the\pgf@x}{\the\pgf@y}}}%
    \tikz@marshall
  },
  on layer/.code={
    \pgfonlayer{#1}\begingroup
    \aftergroup\endpgfonlayer
    \aftergroup\endgroup
  },
}
\newcommand{\grid}[2]{
    \node at (#1 + 3.95, #2 + -0.3) {$4$};
    \node at (#1 + -0.2, #2 + -0.3) {$0$};
    \node at (#1 + -0.25, #2 + 3.95) {$4$};
    \node at (#1 + -0.25, #2 + 3) {$3$};
    \node at (#1 + -0.25, #2 + 2) {$2$};
    \node at (#1 + -0.25, #2 + 1) {$1$};
    \node at (#1 + 3, #2 + -0.3) {$3$};
    \node at (#1 + 2, #2 + -0.3) {$2$};
    \node at (#1 + 1, #2 + -0.3) {$1$};
    \draw[black, very thick] (#1, #2 + 0) rectangle (#1 + 4, #2 + 4);
    \draw[black, thick] (#1, #2 + 1) -- (#1 + 4, #2 + 1);
    \draw[black, thick] (#1, #2 + 2) -- (#1 + 4, #2 + 2);
    \draw[black, thick] (#1, #2 + 3) -- (#1 + 4, #2 + 3);
    \draw[black, thick] (#1 + 1, #2 + 0) -- (#1 + 1, #2 + 4);
    \draw[black, thick] (#1 + 2, #2 + 0) -- (#1 + 2, #2 + 4);
    \draw[black, thick] (#1 + 3, #2 + 0) -- (#1 + 3, #2 + 4);
}
\newcommand{\startpara}[1]{{%
\vskip5pt\noindent
{\bf #1.}}}
\newcommand{\startparazero}[1]{{%
\vskip0pt\noindent
{\bf #1.}}}
\newcommand{\marta}[1]{\rm {\raggedright\color{orange}\textsf{MK: #1}\marginpar{$\star$}}} 
\newcommand{\dave}[1]{\rm {\raggedright\color{teal}\textsf{DP: #1}\marginpar{$\star$}}}
\newcommand{\daveM}[1]{{\marginpar{\color{teal}\textsf{DP: #1}}}} 
\newcommand{\martaM}[1]{{\marginpar{\color{orange}\textsf{MK: #1}}}}
\newcommand{\revise}{}
\journal{Artificial Intelligence}
\begin{document}

\begin{frontmatter}

%% Title, authors and addresses

%% use the tnoteref command within \title for footnotes;
%% use the tnotetext command for theassociated footnote;
%% use the fnref command within \author or \address for footnotes;
%% use the fntext command for theassociated footnote;
%% use the corref command within \author for corresponding author footnotes;
%% use the cortext command for theassociated footnote;
%% use the ead command for the email address,
%% and the form \ead[url] for the home page:
%% \title{Title\tnoteref{label1}}
%% \tnotetext[label1]{}
%% \author{Name\corref{cor1}\fnref{label2}}
%% \ead{email address}
%% \ead[url]{home page}
%% \fntext[label2]{}
%% \cortext[cor1]{}
%% \affiliation{organization={},
%%             addressline={},
%%             city={},
%%             postcode={},
%%             state={},
%%             country={}}
%% \fntext[label3]{}

\title{Point-Based Value Iteration for {\revise POMDPs \\ with Neural Perception Mechanisms}}

%% use optional labels to link authors explicitly to addresses:
%% \author[label1,label2]{}
%% \affiliation[label1]{organization={},
%%             addressline={},
%%             city={},
%%             postcode={},
%%             state={},
%%             country={}}
%%
%% \affiliation[label2]{organization={},
%%             addressline={},
%%             city={},
%%             postcode={},
%%             state={},
%%             country={}}

\author[Oxford]{Rui~Yan}\ead{rui.yan@cs.ox.ac.uk}
\author[Oxford]{Gabriel~Santos}\ead{gabriel.santos@cs.ox.ac.uk}
\author[Oxford,Glasgow]{Gethin~Norman}\ead{gethin.norman@glasgow.ac.uk}
\author[Oxford]{David~Parker}\ead{david.parker@cs.ox.ac.uk}
\author[Oxford]{Marta~Kwiatkowska}\ead{marta.kwiatkowska@cs.ox.ac.uk}

\address[Oxford]{Department of Computer Science, University of Oxford, Oxford, OX1 3QD, UK}
\address[Glasgow]{School of Computing Science, University of Glasgow, Glasgow, G12 8QQ, UK}

% \affiliation{organization={},%Department and Organization
%             addressline={}, 
%             city={},
%             postcode={}, 
%             state={},
%             country={}}

\begin{abstract}
{\revise The increasing trend to integrate neural networks and conventional software components in safety-critical settings calls for methodologies for their formal modelling, verification and correct-by-construction policy synthesis.}
In this paper, we introduce neuro-symbolic partially observable Markov decision processes (NS-POMDPs), {\revise a variant of continuous-state POMDPs with discrete observations and actions, in which the agent}
perceives a continuous-state environment using a neural {\revise perception mechanism} and makes decisions symbolically. 
{\revise The perception mechanism classifies inputs such as images and sensor values into symbolic percepts, which are used in decision making.}

We study the problem of optimising discounted cumulative rewards {\revise for NS-POMDPs.}
{\revise Working directly with the continuous state space, we exploit the underlying structure of the model and the neural perception mechanism to} 
propose a novel 
piecewise linear and convex representation (P-PWLC) in terms of polyhedra covering the state space and value vectors,
and extend Bellman backups to this representation. We prove the convexity and continuity of value functions and present two value iteration algorithms that ensure finite representability.   
The first is a classical (exact) value iteration algorithm extending the $\alpha$-functions of Porta {\em et al} (2006) 
to the P-PWLC representation for continuous-state spaces. The second is a point-based (approximate) method called NS-HSVI, which uses the P-PWLC representation and belief-value induced functions to approximate value functions from below and above for two types of beliefs, particle-based and region-based.
Using a prototype implementation, we show the practical applicability of our approach on two case studies that employ (trained) ReLU neural networks as perception functions, dynamic car parking and an aircraft collision avoidance system, by synthesising (approximately) optimal strategies. 
An experimental comparison with the finite-state POMDP solver SARSOP demonstrates that NS-HSVI is more robust to particle disturbances.
\end{abstract}

\begin{keyword}
%% keywords here, in the form: keyword \sep keyword
Neuro-symbolic systems \sep continuous-state POMDPs \sep {\revise neural perception} \sep point-based value iteration \sep heuristic search value iteration

%% PACS codes here, in the form: \PACS code \sep code

%% MSC codes here, in the form: \MSC code \sep code
%% or \MSC[2008] code \sep code (2000 is the default)

\end{keyword}

\end{frontmatter}

%% \linenumbers

%% main text

\section{Introduction}

An emerging trend in artificial intelligence is to integrate traditional symbolic techniques with data-driven components in sequential decision making and optimal control. Application domains include mobile robotics \cite{JKG-ME-MK:17}, visual reasoning \cite{SA-HP-AP-YH-KK:20}, autonomous driving \cite{SSS-SS-AS:16} and aircraft control \cite{MEA-EB-PK-AL:20}. In real-world autonomous navigation systems, agents rely on unreliable sensors to perceive the environment, typically represented using continuous state spaces, and planning and control must deal with environmental uncertainty. Neural networks (NNs) have proven effective in these complex settings at providing fast data-driven perception mechanisms capable of performing tasks such as object detection or localisation.
{\revise They are increasingly often deployed in conjunction with conventional controllers based on symbolic approaches, which can provide high interpretability, provable correctness guarantees and ease of inserting human expert knowledge~\cite{MKS-LZ-AE-PH}. }
Because of the potential applicability in safety-critical domains, there is growing interest in methodologies for {\revise formal modelling, verification and correct-by-construction policy synthesis for such settings,}
but methods for these are currently lacking.

\emph{Partially observable Markov decision processes} (POMDPs) are a convenient mathematical framework to plan under uncertainty. %, which can be modelled in probabilistic fashion. 
Solving POMDPs in a scalable and efficient manner is 
% already
challenging for finite-state models~\cite{CHP-JNT:87,CCGK16}, but significant progress has been made, e.g., through point-based methods~\cite{GS-JP-RK:13}, which extend the classic value iteration algorithm for MDPs by applying it to a selected set of \emph{belief states} of the POMDP.
Typically, a belief state is a distribution over the states of the model
representing an agent's knowledge about the current state. Since the resulting belief MDP is infinite-state, conventional value iteration cannot be directly applied and instead point-based methods rely on a so-called {\em $\alpha$-vector} parameterisation, a linear function characterised by its values in the vertices of the belief simplex, which is finitely representable since the value function is piecewise linear and convex. 

Compared to finite-state POMDPs, solving continuous-state POMDPs suffers from additional challenges due to the uncountably infinite underlying state space. The common approach {\revise of discretising
and then using techniques for finite-state models
can yield exponential growth of the state space,
depending on the granularity and time horizon.}
{\revise An alternative
method that has been shown to outperform discretisation relies on exploiting structure in the underlying model
and working directly with the continuous state space~\cite{ZF-RD-NM-RW:04} through 
a piecewise constant representation of the value function,
based on a partition of the state space created dynamically during solution.}

\iffalse{
\marta{Drop this para}Additionally, belief spaces for continuous-state POMDPs have infinitely many dimensions, which further complicates the problem. Since functions over continuous spaces can have arbitrary forms
not amenable to computation, a key challenge is finding an efficient representation of the value function that allows closed-form belief updates and Bellman backups for the 
% underlying 
(parameterisable) transition and reward functions. This problem was addressed by Porta {\em et al} in~\cite{JMP-NV-MTS-PP:06}, where it was proved that continuous-state POMDPs with discrete observations and actions have a piecewise linear and convex value function and admit a finite representation in terms of so-called {\em $\alpha$-functions}, which generalise  $\alpha$-vectors by replacing weighted summation with integration. Working with a representation in terms of linear combinations of Gaussian mixtures, they derive point-based value iteration and implement it by randomly sampling belief points to approximate the value function. 
}
\fi

{\revise In this paper, we propose \emph{neuro-symbolic POMDPs} (NS-POMDPs),
a variant of continuous-state POMDPs with discrete observations and actions. 
In NS-POMDPs, the agent observes the environment using a data-driven \emph{neural perception} mechanism, which classifies inputs such as images and sensor values into a finite set of symbolic \emph{percepts}, and makes decisions symbolically. 
We constrain the interface between the neural perception and symbolic decision making mechanisms so that the agent transitions to its next local state based on the current local state and percept, rather than accessing the state of the environment directly.
This enables knowledge acquisition of the learnt concepts from the neural perception mechanism.}
Our model is expressive enough for realistic perception functions, {\revise such as ReLU NNs,} while being sufficiently tractable to solve.

We address the problem of synthesising policies
that optimise discounted cumulative rewards for NS-POMDPs.
Working directly with continuous state spaces, 
we propose a novel finite representation of the value function inspired by %the 
\emph{$\alpha$-functions}, {\revise introduced by Porta {\em et al} in~\cite{JMP-NV-MTS-PP:06}, which generalise  $\alpha$-vectors by replacing weighted summation with integration. 
Our representation exploits the fact that
the 
neural perception functions
are classifiers, and thus
% ReLU NN classifiers
induce a finite decomposition
of the continuous environment into regions.} 

We prove convergence and continuity of the value function, and present two algorithms for this representation: classical value iteration (VI) and a variant of the HSVI (Heuristic Search Value Iteration) algorithm~\cite{TS-RS:04}. 
We first demonstrate
that, by exploiting the structure of NS-POMDPs, one can indeed find 
an $\alpha$-function representation, namely {\em piecewise linear and convex representation under piecewise constant $\alpha$-functions (P-PWLC)}, that has a simple parameterisation and is closed with respect to belief updates and the Bellman operator. More specifically, we show that value functions can be represented using pointwise maxima of piecewise constant $\alpha$-functions (a finite set of polyhedra and a value vector), 
in conjunction with mild assumptions that ensure closure with respect to the transition and reward functions of NS-POMDPs.

Since $\alpha$-functions for VI increase exponentially in the number of observations,
we propose a variant of HSVI, called NS-HSVI,
which approximates the value function from above and below. 
Starting with the polyhedral preimage of the model's %NN 
perception function, NS-HSVI works by progressively subdividing the continuous state space during value backups to compute lower bounds.  
We use a lower $K$-Lipschitz envelope of a convex hull to approximate an upper bound. We formulate two 
% $\alpha$-function
representations of the belief space, which have closed forms for the quantities of interest: \emph{particle-based}, which relies on sampling of individual points, and \emph{region-based}, which places a (uniform) distribution over a region of continuous space. 

We develop a prototype implementation of the techniques and provide experimental results for strategy (policy) synthesis for particle- and region-based beliefs on two case studies: a dynamic car parking scenario and an aircraft collision avoidance system{\revise, both with ReLU NN perception mechanisms}. We find that region-based values are more robust to disturbance than particle-based. We also compare our particle-based NS-HSVI to a finite-state POMDP approximation of an NS-POMDP model using SARSOP~\cite{HK-DH-WL:08}, and observe that our method consistently yields tighter lower bound values, at a higher computational cost due to expensive polyhedra computations, because the accuracy of SARSOP's lower bound depends 
on the length of the horizon considered when building the model.

\startpara{Contributions} 
In summary, this paper makes the following contributions.

\begin{enumerate}
\item	We propose neuro-symbolic POMDPs, {\revise a variant} of continuous-state POMDPs with discrete observations and actions, and observation functions {\revise defined by neural networks whose outputs are stored locally as symbolic percepts and used in the agent's decision making}.
\item 	We propose a novel piecewise constant $\alpha$-function representation of the value function (as a pointwise maximum function over a set of piecewise constant $\alpha$-functions defined over the continuous state space). We show that this representation admits a finite polyhedral representation and is closed with respect to the Bellman operator. 
\item	We prove continuity and convexity of the value function for discounted cumulative rewards and derive a value iteration (VI) algorithm. 
\item	We present a new point-based method called NS-HSVI for approximating values of NS-POMDPs, proving that 
piecewise constant $\alpha$-functions
are a suitable representation for lower bound approximations of values. 
We develop two variants of the algorithm, one based on the popular particle-based beliefs and the other on novel region-based beliefs, and show they have closed forms for computing the quantities of interest.
\item	We provide experimental results to demonstrate the applicability of NS-HSVI in practice for neural 
{\revise perception mechanisms} %networks 
whose preimage (or that of their approximation) is in polyhedral form. 
\end{enumerate}

\startpara{Structure of the paper} 
The remainder of the paper is structured as follows.
Section~\ref{background-sect} provides the relevant background material.
Section~\ref{nspomdp-sect} proposes our model of neuro-symbolic POMDPs, together with its belief MDP, and gives an illustrative example. Section \ref{sec:value-iteration} introduces piecewise constant representations for functions in NS-POMDPs, and shows that they have a finite representation (P-PWLC) that ensures closure under the Bellman operator. A new value iteration (VI) algorithm is also proposed, and we prove the convexity and continuity of the value function. Section \ref{sec:HSVI} presents a new HSVI algorithm for NS-POMDPs, which uses P-PWLC functions and belief-value induced functions to approximate the value function from below and above, and considers two belief representations for the implementation. Section \ref{sec:case-study} presents a prototype implementation and experimental evaluation of our approach
on two case studies.
{\revise Section~\ref{rw-sect} discusses related work
and} Section \ref{sec:conclusions} concludes the paper.
To ease presentation, proofs of the theorems and lemmas have been placed in the Appendix.

\section{Background}\label{background-sect}
This section introduces notation and preliminaries concerning {\revise neural networks,} Markov decision processes (MDPs), their partially observable variant (POMDPs)
and the construction of the (fully observable) belief MDP. 

\startpara{Notation}
The space of probability measures on a Borel space $X$ is denoted $\mathbb{P}(X)$, and the space of bounded real-valued functions on $X$ is denoted $\mathbb{F}(X)$.
{\revise 
A \emph{finite connected partition (FCP)} of $X$, denoted $\Phi$, is a finite collection of disjoint connected subsets (regions) that cover $X$. 

\begin{defi}[PWC function]\label{defi:PWC-func}
A function $f: X \to \mathbb{R}$ is piecewise constant (PWC) if there exists an FCP $\Phi$ of $X$ such that $f : \phi \to \mathbb{R}$ is constant for all $\phi \in \Phi$. Such an FCP $\Phi$ is called a constant-FCP of $X$ for $f$. We denote by $\mathbb{F}_{C}(X)$ the subset of PWC functions of $\mathbb{F}(X)$. 
\end{defi}
\begin{defi}[PWL function]\label{defi:PWL-Borel-func}
A function $f:X\to \mathbb{R}$ is  piecewise linear (PWL) if there exists a FCP $\Phi$ of $X$ such that  $f : \phi \to \mathbb{R}$ is linear and bounded for all $\phi \in \Phi$.
\end{defi} 
A function $f: X \to \mathbb{R}$ is \emph{piecewise continuous} if there exists an FCP $\Phi$ of $X$ such that $f : \phi \to \mathbb{R}$ is continuous for all $\phi \in \Phi$.
}

{\revise \startpara{Neural networks} 
A \emph{neural network (NN)} is a real vector-valued function $f:\mathbb{R}^m \to \mathbb{R}^c$, where $m,c \in \mathbb{N}$, and is said to be a \emph{classifier} for a set of classes $C$ of size $c$ if, for any input $x \in \mathbb{R}^m$, the output $f(x) \in \mathbb{R}^c$ is a probability vector, where the $i$th element of $f(x)$ represents the confidence probability of the $i$th class of $C$, i.e., a classifier is a function $f : \mathbb{R}^m \rightarrow \mathbb{P}(C)$.

Let $f^{\max} : \mathbb{R}^m \rightarrow C$ denote a function that returns the class with the largest confidence probability in $f(x)$, and call $f^{\max}(x)$ the \emph{class} of $x$. To allow for situations where the class with the highest probability returned by $f$ is not unique,
% and hence $f^{\max}(x)$ would be undefined,
we assume the classifier includes
a \emph{tie-breaking rule} defined by a function $\kappa :2^{C} \to C$ which, given a set of classes, i.e., those with the highest probability, returns the selected class.
}

{\revise
Given an NN classifier $f$ with the tie-breaking rule $\kappa$, the \emph{preimage} of $f$ divides $\mathbb{R}^m$ into an FCP $\Phi$ of $\mathbb{R}^m$, i.e., for any $\phi \in \Phi$, there exists a class $y$ such that $f^{\max}(x)=y$ for all $x \in \phi$.
For an NN classifier with PWL activation functions, this FCP $\Phi$ can be extracted, or approximated, by analysing its preimage \cite{KM-FF:20}, which can be computed offline for a trained NN.}

\startpara{MDPs}
We focus on (Borel measurable) continuous-state MDPs, which model a single agent executing in a continuous environment by transitioning probabilistically between states.
Formally, an MDP
is given as a tuple $\mdp=(S,\Act,\Delta,\delta)$, where $S$ is a Borel measurable set of states, $\Act$ a finite set of actions, $\Delta : S \rightarrow 2^\Act$ an available action function and $\delta : (S {\times} \Act) \rightarrow \mathbb{P}(S)$ a probabilistic transition function.  

\iffalse
\begin{defi}[MDP]
An MDP is a tuple $\mdp=(S,\Act,\Delta,\delta)$, where $S$ is a Borel measurable set of states, $\Act$ a finite set of actions, $\Delta : S {\rightarrow} 2^\Act$ an available action function and $\delta : (S {\times} \Act) {\rightarrow} \mathbb{P}(S)$ a probabilistic transition function.
\end{defi}
\fi

When in state $s$ of MDP $\mdp$, the agent has a choice
between the available actions $\Delta(s)$ and, if $a \in \Delta(s)$ is chosen, then the probability of moving to state $s'$ is $\delta(s,a)(s')$. A path of $\mdp$ is a sequence $\pi = s_0 \xrightarrow{a_0} s_1 \xrightarrow{a_1} \cdots$ such that $s_i \in S$, $a_i \in \Delta(s_i)$ and $\delta(s_i,a_i)(s_{i+1})>0$ for all $i$.  We let $\pi(i)=s_i$ and $\pi[i]=a_i$ for all $i$. $\fpath_\mdp$ is the set of finite paths of $\mdp$ and $\last(\pi)$ is the last state of $\pi$ for any $\pi \in \fpath_\mdp$.

A {\em strategy (policy)} of $\mdp$ resolves the choices in each state based on the execution so far. Formally, a strategy $\sigma$ is a Borel measurable mapping $\sigma : \fpath_\mdp \rightarrow \mathbb{P}(\Act)$ such that,  if $\sigma(\pi)(a)>0$, then $a \in \Delta(\last(\pi))$. We denote by $\Sigma_\mdp$ the set of strategies of $\mdp$. A strategy is memoryless if the choice depends only on the last state of each path.

\startpara{POMDPs}
POMDPs are an extension of MDPs, in which the agent cannot perceive the underlying state but instead must infer it based on observations. 
Formally, a POMDP is a tuple $\pomdp=(S,\Act,\Delta,\delta,\cO,\obs)$,  where $(S,\Act,\Delta,\delta)$ is an MDP, $\cO$ is a finite set of \emph{observations} and $\obs : S \rightarrow \cO$ {\revise is a (deterministic) \emph{observation function}, i.e., $\obs(s)$ is the observation made upon entering state $s$. We require that,
}
for any $s,s' \in S$, if $\obs(s)=\obs(s')$ then $\Delta(s)=\Delta(s')$. Note that the underlying state space of the POMDP is uncountably infinite with a continuous-state structure.

\iffalse
\begin{defi}[POMDP]
A POMDP is a tuple $\pomdp=(S,\Act,\Delta,\delta,\cO,\obs)$,  where $(S,\Act,\Delta,\delta)$ is an MDP, $\cO$ is a finite set of observations and $\obs : S \rightarrow \cO$ is a labelling of states with observations such that, for any $s,s' \in S$, if $\obs(s)=\obs(s')$ then $\Delta(s)=\Delta(s')$.
\end{defi}
\fi
%
When in a state $s$ of a POMDP $\pomdp$, a strategy cannot determine $s$ directly, but only the observation $\obs(s)$. 
The definitions of paths and strategies for $\pomdp$ carry over from MDPs. However, the set of strategies $\Sigma_\pomdp$ of $\pomdp$  includes only {\em observation-based strategies}. Formally, a strategy $\sigma$ is observation-based if, for paths $\pi = s_0 \xrightarrow{a_0}  \cdots \xrightarrow{a_{n-1}} s_n$ and $\pi' = s_0' \xrightarrow{a_0}  \cdots \xrightarrow{a_{n-1}} s_n'$ such that $\obs(s_i)=\obs({\revise s_i'})$ for $0 \leq i \leq n$, then we have $\sigma(\pi)=\sigma(\pi')$.

\startpara{Objectives, values and optimal strategies}
We focus on the \emph{discounted cumulative reward} objectives, since they balance the importance of immediate rewards compared to future rewards, and allow optimisation of the behaviour over an infinite horizon. 
We note that the problem of undiscounted reward objectives is undecidable even for finite-state POMDPs.
For a \emph{reward structure} $r=(r_A,r_S)$, where $r_A:(S {\times} \Act)\to \mathbb{R}$ and $r_S:S\to \mathbb{R}$ are action and state bounded reward functions, the discounted cumulative reward for a path $\pi$ of a POMDP $\pomdp$ is given by:
\[  \begin{array}{c}  Y(\pi)= \mbox{$\sum_{k=0}^{\infty}$} \, \beta^k \big(r_A(\pi(k),\pi[k])+r_S(\pi(k)) \big)
\end{array} \]
where $\beta\in(0,1)$ is the discount factor. Given a state $s$ and strategy $\sigma$ of $\pomdp$, $\mathbb{E}_{s}^{\sigma}[Y]$ denotes the expected value of $Y$ when starting from $s$ under $\sigma$. {\revise Solving $\pomdp$ means finding a strategy $\sigma^{\star} \in \Sigma_\pomdp$, called an {\em optimal strategy}, that maximises the expected value of $Y$, }
and the (optimal) {\em value function} $V^\star : S \to \mathbb{R}$, which is defined as $V^\star(s) = \mathbb{E}_{s}^{\sigma^\star}[Y]$ for $s \in S$. 

\startpara{Belief MDP}
A strategy of POMDP $\pomdp$ can infer the current state from the observations and actions performed. The usual way of representing this knowledge is as a \emph{belief} $b \in \mathbb{P}(S)$. In general, observation-based strategies are %represent
more informative than belief-based strategies. However, since we focus on  discounted cumulative rewards, under the Markov assumption belief-based strategies carry sufficient information to plan optimally
\cite{DB:12}, and therefore, for a given objective $Y$, there exists an {\em optimal (observation-based) strategy} $\sigma$ of $\pomdp$, which can be represented as $\sigma : \mathbb{P}(S) \to \Act$. The strategy updates its belief $b$ to $b^{a,o}$ via Bayesian inference based on the executed action $a$ and observation $o$, i.e.\ for $s' \in S$:
\[
\begin{array}{c}
    b^{a,o} (s') = (P(o \mid s')/P(o \mid b, a)) \int_{s \in S} \delta(s,a)(s')b(s) \textup{d} s \, .
\end{array}
\]
Using this update we can define the corresponding (fully observable) \emph{belief MDP} in a standard way \cite{LPK-MLL-ARC:98}, from which an optimal strategy can be derived. We remark that belief spaces for continuous-state POMDPs are continuous and have infinitely many dimensions.

\section{Neuro-Symbolic POMDPs}\label{nspomdp-sect}

In this section we introduce our model of neuro-symbolic POMDPs, aimed at scenarios where the agent perceives its environment using a data-driven perception mechanism {\revise and then makes decisions symbolically based on the outcome}. 
We also give an illustrative example and then describe how a (fully observable) belief MDP can be obtained for an NS-POMDP.

\startpara{NS-POMDPs}
The model of \emph{neuro-symbolic POMDPs}
comprises a neuro-symbolic \emph{agent} acting in a continuous-state environment. 
The agent has finitely many local states and actions, and 
{\revise 
observes the environment through a (data-driven) \emph{neural perception} mechanism (called the agent's perception function), which can depend on the agent's current local state, while relying on a \emph{symbolic} decision-making mechanism (the agent's transition function). During execution, the agent alternates between invoking perception and symbolic decisions, where the interface is constrained to enable symbolic reasoning with the (exactly) learnt concepts (regions of the continuous inputs space), which we call \emph{percepts} to distinguish them from local states. When invoking perception, continuous inputs are converted into symbolic percepts, and the agent transitions  to the next local state based on the current local state and percept, rather than the environment state, and can thus model knowledge acquisition from the neural perception mechanism. }

\begin{defi}[Syntax of NS-POMDPs]\label{defi:NS-POMDP}
An NS-POMDP $\pomdp$ comprises
an agent $\agent  = (S_A,\Act,\Delta_A,\obs_A,\delta_A)$ and environment $E=(S_E,\delta_E)$ where:
\begin{itemize}
    \item $S_A = \Loc \times \Per$ is a set of states for $\agent$, where {\revise $\Loc$ and $\Per$}
    are finite sets of local states and percepts, respectively; 

    \item $S_E\subseteq \mathbb{R}^e$ is a closed set of continuous environment states;
  
    \item $\Act$ is a nonempty finite set of actions for $\agent$;

    \item $\Delta_A: S_A \to 2^\Act$ is an available action function for $\agent$;

    \item $\obs_A : (\Loc \times S_E) \to \Per$ is $\agent$'s perception function; 
    
    \item $\delta_A : (S_A \times \Act) \to \mathbb{P}(\Loc)$ is $\agent$'s probabilistic transition function;

    \item $\delta_E: (S_E \times \Act) \to \mathbb{P}(S_E)$ is a finitely-branching probabilistic  transition function for the environment.
\end{itemize}
\end{defi}

\iffalse
\noindent
NS-POMDPs are a subclass of continuous-state POMDPs with discrete observations (i.e., 
agent states $S_A$, which are pairs consisting of a local state and percept) and actions. This model captures a number of key properties of POMDP models that we target. The environment is continuous, as many real-world systems such as robot navigation are naturally modelled by continuous states, and probabilities are used to account for uncertainties. At the same time, the agent's state space is finite to ensure tractability.
\fi

The system executes as follows. 
A (global) state for an NS-POMDP $\pomdp$ comprises an agent state $s_A = (\loc, \per)$, 
where $\loc$ is its local state and $\per$ is the percept, and environment state $s_E$. In state $s=(s_A, s_E)$, the agent $\agent$ chooses an action $a$ available in $s_A$, 
%i.e.\ $a \in \Delta_A(s_A)$, 
then updates its local state to $\loc'$ according to the distribution $\delta_A(s_A,a)$. At the same time, the environment updates its state to $s_E'$ according to $\delta_E(s_E,a)$. Finally, the agent, based on $\loc'$ (since it may require different information regarding the environment depending on its local state), observes $s_E'$ to generate a new percept $\per' = \obs_A(\loc',s_E')$ and $\pomdp$ reaches the state $s'=((\loc', \per'), s_E')$.

While the NS-POMDP model admits any (deterministic) function $\obs_A$ from the continuous environment to a finite set of percepts, 
in this work we focus on neural perception functions {\revise represented by (trained) NN classifiers (see Section~\ref{background-sect}).}
While this restriction of perception to deterministic functions with discrete outputs is limiting, it is well aligned with NNs in applications such as object detection and localisation that we target.

{\revise
Using NN classifiers with PWL activation functions for neural perception also yields a polyhedral decomposition of the continuous state space, which can be obtained by computing the function's preimage \cite{KM-FF:20}.
Furthermore, as will be discussed in \sectref{sect:pwcreprs}, we impose mild restrictions on the transition probability function and reward structure for an NS-POMDP.
Together, these will later allow us to compute finite representations of lower and upper bounds for its value function over a (polyhedral) partition of its state space.}

\begin{rek}\label{rek1}
{\revise We note that our NS-POMDP model is a variant of conventional continuous-state (or hybrid) POMDPs, for which observation functions could also be defined using NNs.
In addition, since our model defines separate transition probability functions for the state space of the agent and the environment, there are also similarities with factored POMDPs~\cite{CB-DP:96}. % {MK-DK:99}
The key distinction between NS-POMDPs and both of these formalisms is the way observations are used, in particular the fact that they are are stored in the agent's local state as percepts and then used for decision making.
These introduce dependencies between the transitions of the agent and the environment that are not present in either conventional POMDPs or factored POMDPs.}
\end{rek}

\iffalse
{\revise
\begin{rek}[Positioning of NS-POMDPs]
    Computing value functions and synthesising strategies for general continuous-state POMDPs is challenging due to uncountable states. Our NS-POMDPs, as a subclass of continuous-state POMDPs with discrete observations (i.e., agent states $S_A$) and actions, use structured state spaces and transitions to allow for the development of efficient methods. This model also captures a number of key properties of POMDP models that we target. The environment is continuous, as many real-world systems such as robot navigation are naturally modelled by continuous states, and probabilities are used to account for uncertainties. At the same time, the agent's state space is finite to ensure tractability. 
    As in \figref{fig:neuro-symbolic-diagram}, if the perception is neural, following from the taxonomy of neuro-symbolic systems in \cite{HK:22}: if the environment state is symbolic, then NS-POMDPs fall into the design of Symbolic Neuro symbolic; if the environment state is nonsymbolic, then NS-POMDPs fall into the design of Neuro | Symbolic.
\end{rek}
}
\fi

\input{figures/neuro-symbolic-diagram}

To motivate our model, we consider a dynamic vehicle parking example, in which an autonomous vehicle uses {\revise a neural perception mechanism} %an NN 
for localisation while {\revise navigating to a preferred} parking spot. We are interested in automated synthesis of an optimal strategy to reach the {\revise chosen} spot.
{\revise Below, we formulate this as an NS-POMDP.
\figref{fig:neuro-symbolic-diagram} shows how its
neural and symbolic components interact.}

\begin{examp} \label{ex:parking:model}
%To illustrate the NS-POMDP model, 
{\revise Consider an agent $\agent$ (a vehicle) in a continuous environment $\mathcal{R} =\{(x, y) \in \mathbb{R}^2 \mid 0 \leq x, y \leq 4\}$, looking for one of two parking spots $\mathit{ps}_1,\mathit{ps}_2 \in \mathcal{R}$ (see \figref{fig:car_parking}, left). The vehicle uses an NN as a perception mechanism (\figref{fig:car_parking}, middle) that subdivides the continuous environment $\mathcal{R}$ into 16 cells, resulting in a grid-like abstraction of the environment. We trained a feed-forward NN classifier $f_\mathcal{R} : \mathcal{R} \rightarrow \mathbb{P}(\mathit{Grid})$, with one hidden ReLU layer and 14 neurons, on randomly generated data to take the coordinates of the vehicle as input and output a distribution over the 16 abstract grid {\revise cells} $\mathit{Grid} = \{(i,j)\mid i,j \in \{1, 2, 3, 4\}\}$. 
We assume that the agent can start from any position, has constant speed and initially has $\mathit{ps}_{1}$ as its preferred parking spot. It probabilistically changes its preferred parking spot if the other spot becomes closer to its current position.

Modelled as an NS-POMDP, the environment's state space corresponds to the continuous coordinates $\mathcal{R}$ of the vehicle.
Observations of its position, obtained from the NN classifier, are stored in the agent's percept
and its local state records the currently preferred parking spot.
Based on these, the agent chooses whether to move \emph{up}, \emph{down}, \emph{left} or \emph{right}, or \emph{park}.
Formally, we have the following.}

\begin{figure}[t]
%\vspace{-0.3cm}
\centering
\begin{subfigure}{0.3\textwidth}
\raisebox{-0.05\height}{\begin{tikzpicture}[scale = 0.6]
	bias = 4.5;
  	%\draw[white, thin, fill = green!30, opacity=0.8] (2, 3) rectangle (3, 4); 
  	%\draw[white, thin, fill = green!30, opacity=0.8] (3, 0) rectangle (4, 1); 

    \filldraw[color=green!40, fill=green!30, very thick](2.5,3.5) circle (0.14);
    \node  at (2,3.25) {\scriptsize $\mathit{ps}_{\scale{.75}{1}}$};

    \filldraw[color=green!40, fill=green!30, very thick](3.5,0.5) circle (0.14);
    \node  at (3,0.25) {\scriptsize $\mathit{ps}_{\scale{.75}{2}}$};

    \draw[black, very thick] (0,0) rectangle (4,4);
    %\draw[black, thick] (2, 4) -- (2, 3);
	%\draw[black, thick] (2, 3) -- (3, 3);
	%\draw[black, thick] (3, 3) -- (3, 4);
    %\draw[black, thick] (3, 0) -- (3, 1);
	%\draw[black, thick] (3, 1) -- (4, 1);
	\node at (3.95, -0.3) {\scriptsize$4$};
	\node at (-0.2, -0.3) {\scriptsize$0$};
	\node at (-0.25, 3.95) {\scriptsize$4$};
 	\node at (-0.25, 3) {\scriptsize$3$};
	\node at (-0.25, 2) {\scriptsize$2$};
	\node at (-0.25, 1) {\scriptsize$1$};
	\node at (3, -0.3) {\scriptsize$3$};
	\node at (2, -0.3) {\scriptsize$2$};
	\node at (1, -0.3) {\scriptsize$1$};
	\node [sedan top,body color=red!30,window color=black!80,minimum width=0.6cm, rotate = 90] at (0.5,0.5) {};
	\end{tikzpicture}}
\end{subfigure}
\hfil
\begin{subfigure}{0.3\textwidth}
\colorlet{myred}{red!80!black}
\colorlet{myblue}{blue!80!black}
\colorlet{mygreen}{green!60!black}
\colorlet{myorange}{orange!70!red!60!black}
\colorlet{mydarkred}{red!30!black}
\colorlet{mydarkblue}{blue!40!black}
\colorlet{mydarkgreen}{green!30!black}
\tikzstyle{node}=[thick,circle,draw=myblue,minimum size=10,inner sep=0.5,outer sep=0.6]
\tikzstyle{node in}=[node,green!20!black,draw=mygreen!30!black,fill=mygreen!25]
\tikzstyle{node hidden}=[node,blue!20!black,draw=myblue!30!black,fill=myblue!20]
\tikzstyle{node convol}=[node,orange!20!black,draw=myorange!30!black,fill=myorange!20]
\tikzstyle{node out}=[node,red!20!black,draw=myred!30!black,fill=myred!20]
\tikzstyle{connect}=[thick,mydarkblue] %,line cap=round
\tikzstyle{connect arrow}=[-{Latex[length=4,width=3.5]},thick,mydarkblue,shorten <=0.5,shorten >=1]
\tikzset{ % node styles, numbered for easy mapping with \nstyle
  node 1/.style={node in},
  node 2/.style={node hidden},
  node 3/.style={node out},
}
% [x=1.1cm,y=0.8cm]
\begin{tikzpicture}[x=1.1cm,y=0.8cm]
  % input nodes
  \node[node 1] (N1-1) at (0, 0) {\scriptsize$x$};
  \node[node 1] (N1-2) at (0, -1) {\scriptsize$y$};

  % hidden nodes 
  \node[node 2] (N2-1) at (1, 1) {};
  \node[node 2] (N2-2) at (1, 0) {};
  \node[node 2] (N2-3) at (1, -1) {};
  \node[node 2] (N2-4) at (1, -2) {};
  \foreach \i in {1,2}{
  \foreach \j in {1,...,4}{
    \draw[connect,white,line width=1.2] (N1-\i) -- (N2-\j);
    \draw[connect] (N1-\i) -- (N2-\j);
        }
  }
  % output nodes
  \node[node 3] (N3-1) at (2, 0.5) {\scriptsize$1$};
  \node[node 3] (N3-2) at (2, -0.5) {\scriptsize$2$};
  \node[node 3] (N3-3) at (2, -1.5) {\scriptsize \scriptsize$16$};
  \foreach \i in {1,...,4}{
  \foreach \j in {1,...,3}{
    \draw[connect,white,line width=1.2] (N2-\i) -- (N3-\j);
    \draw[connect] (N2-\i) -- (N3-\j);
        }
  }

  \path (N2-4) --++ (0,1.1) node[midway,scale=1.2] {$\vdots$};
  \path (N3-3) --++ (0,1.1) node[midway,scale=1.2] {$\vdots$};
  
  % \foreachitem \N \in \Nnod{ % loop over layers
  %   \def\lay{\Ncnt} % alias of index of current layer
  %   \pgfmathsetmacro\prev{int(\Ncnt-1)} % number of previous layer
  %   \foreach \i [evaluate={\y=\N/2-\i; \x=\lay; \n=\nstyle;}] in {1,...,\N}{ % loop over nodes
      
  %     % NODES
  %   \node[node \n] (N\lay-\i) at (\x,\y) {};
  %     % CONNECTIONS
  %     \ifnum\lay>1 % connect to previous layer
  %       \foreach \j in {1,...,\Nnod[\prev]}{ % loop over nodes in previous layer
  %         \draw[connect,white,line width=1.2] (N\prev-\j) -- (N\lay-\i);
  %         \draw[connect] (N\prev-\j) -- (N\lay-\i);
  %         %\draw[connect] (N\prev-\j.0) -- (N\lay-\i.180); % connect to left
  %       }
  %     \fi % else: nothing to connect first layer
      
  %   }
  %   \ifnum\lay>1
  %   \path (N\lay-\N) --++ (0,1+\yshift) node[midway,scale=1.5] {$\vdots$};
  %   \fi
  % }
\end{tikzpicture}
\end{subfigure}
\hfil
\begin{subfigure}{0.24\textwidth}
\includegraphics[width=\textwidth] {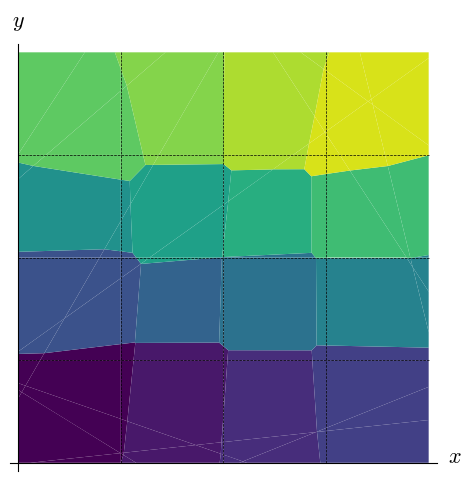}
\put(-96,-3){\scriptsize $0$}
\put(-71,-3){\scriptsize $1$}
\put(-52,-3){\scriptsize $2$}
\put(-32,-3){\scriptsize $3$}
\put(-13,-3){\scriptsize $4$}
\put(-97,22){\scriptsize $1$}
\put(-97,42){\scriptsize $2$}
\put(-97,62){\scriptsize $3$}
\put(-97,82){\scriptsize $4$}
\end{subfigure}
\vspace*{-0.0cm}
\caption{Car parking example, perception NN and perception FCP of its preimage consisting of 62 polygons and 16 classes.}\label{fig:car_parking}
\vspace*{-0.0cm}
\end{figure}

\begin{itemize}
	\item $S_A= \Loc \times \Per$, where {(local states) \revise $\Loc = \{ \mathit{ps}_{1} , \mathit{ps}_{2} \}$ are the two parking spots and (percepts) $\Per = \mathit{Grid}$} are the 16 abstract grid {\revise cells}.
 % which are ordered according to \tabref{tab:advisory-parking}.

	\item $S_E = \mathcal{R}$.
	
	\item $Act = \{ \mathit{up}, \mathit{down}, \mathit{left}, \mathit{right}, \mathit{park} \}$.

	\item {\revise For $(\mathit{ps},\per) \in S_A$, we have $\Delta_A(\mathit{ps}, \per) = \Act$ if $f_\mathcal{R}^{\max}(\mathit{ps}) = \per$ and $\Delta_A(\mathit{ps}, \per) = \{ \mathit{up}, \mathit{down}, \mathit{left}, \mathit{right} \}$ otherwise,  i.e., the agent can only choose to park when it perceives it is in its preferred parking spot.}

  	 \item {\revise For each $\mathit{ps} \in \Loc$ and $(x,y) \in S_E$, we have $\obs_A(\mathit{ps},(x, y))= f_\mathcal{R}^{\max}(x,y)$, i.e., independently of the local state of the agent, the perception function returns the perceived grid cell of the agent under the classifier $f_\mathcal{R}$  (\figref{fig:car_parking}, middle). The boundary coordinate is resolved by assigning the grid cell at the bottom left.}

    \item {\revise For $s_A=(\mathit{ps}, \per) \in S_A$ and $a \in \Act$,
       to define $\delta_A$ we have the following two cases to consider:
        \begin{itemize}
        \item
        if $\| f_\mathcal{R}^{\max}(\mathit{ps}) - \per \|_2 > \|  f_\mathcal{R}^{\max}(\mathit{ps}') - \mathit{per} \|_2$, where $\mathit{ps}'$ is the parking spot not currently preferred by the agent and $\|\cdot\|_2$ is the Euclidean norm,
        then $\delta_A(s_A, \alpha)(\mathit{ps}) = \delta_A(s_A, \alpha)(\mathit{ps}') = 0.5$,
        i.e., if the currently preferred parking spot is further away than the other spot, the agent changes its preferred spot with probability 0.5;
        \item
        otherwise $\delta_A(s_A, \alpha)(\mathit{ps}) = 1$, i.e., $\agent$ sticks with the preferred parking spot. 
        \end{itemize}}

    % where $\lambda = 0.5$. \gethinM{do we need $\lambda$ here?}
    
    \item {\revise For $(x,y),(x',y') \in S_E$ and $a \in \Act$, we define:
     % \[ \begin{array}{rcl}
     % x'' & = & x - \Delta t d_{\mathit{ax}} \\
     % y'' & = & y + \Delta t d_{\mathit{ay}}
     % \end{array} \]  
    \[
     \delta_E((x, y),a)(x', y') = \left\{ \begin{array}{cl}
     1 & \mbox{if $a = \mathit{up}$ and $(x',y')=(x,y{+} \Delta t)$} \\
     1 & \mbox{if $a = \mathit{down}$ and $(x',y')=(x,y{-} \Delta t)$} \\
     0.8 & \mbox{if $a = \mathit{left}$ and $(x',y')=(x{-} \Delta t, y)$} \\
     0.2 & \mbox{if $a = \mathit{left}$ and $(x',y')=(x{-} \Delta t,y+ \Delta t)$} \\
     0.8 & \mbox{if $a = \mathit{right}$ and $(x',y')=(x{+} \Delta t, y)$} \\
     0.2 & \mbox{if $a = \mathit{right}$ and $(x',y')=(x{+} \Delta t,y{+} \Delta t)$} \\
     0 & \mbox{otherwise}
     \end{array}     \right.
     \]
      where $\Delta t = 1.0$ is the time step. Here the movement \emph{left} and \emph{right} is probabilistic, as it can result in the agent instead moving on the diagonal with probability 0.2. We have also assumed that the agent's coordinates remain within the environment boundaries. 
      }   \hfill$\blacksquare$
\end{itemize}
\end{examp}
{\revise To simplify the presentation, the perception function of the agent in \egref{ex:parking:model} is independent of its local state. However, our modelling formalism admits more complex scenarios where this is not the case, for example, if the agent were to switch to a more accurate sensor as it approaches its preferred parking spot to ensure that it parks there.}

\iffalse
\begin{itemize}

	\item The agent state space consists of 5 trust levels $\Loc = \{ 1, 2, \dots, 5\}$ (local states) and 16 abstract grid points $\Per = \{ 1, 2, \dots, 16 \}$ (percepts).

	\item The set of environment states is $\mathcal{R}$ corresponding to the continuous coordinate vector of the vehicle.

	\item The agent's actions are to move \emph{up}, \emph{down}, \emph{left} or \emph{right}, or \emph{park}, with the restriction that \emph{park} can only be taken when the parking spot is reached.s

	 \item The observation function of the agent is implemented via an NN with one ReLU hidden-layer, takes the coordinate vector of the vehicle as input and the outputs one of the 16 abstract grid points (\figref{fig:car_parking}, middle). 

    \item The transition function of the agent increases the trust level if the percept is compliant with the executed action and decreases probabilistically otherwise.

    \item The transition function of the environment corresponds to the vehicle moving in the direction specified by the agent for a fixed time step, unless the vehicle moves outside of $\mathcal{R}$ in which case it does not move. 
\end{itemize}
\fi

\noindent
\startpara{NS-POMDP semantics}
The semantics of an NS-POMDP $\pomdp$ is a POMDP $\sem{\pomdp}$ over the product of the (discrete) states of the agent and the (continuous) states of the environment, except that we restrict those to states that are \emph{percept compatible}.
A state $s = ((\loc, \per), s_E)$ is percept compatible if $\per = \obs_A(\loc, s_E)$. {\revise Percept compatibility indicates that the agent always accesses its percept via the perception function.}
% and accesses the environment state only through the perception function.} 
The semantics of an NS-POMDP is closed with respect to percept compatible states. {\revise We would like to emphasise that (see Remark~\ref{rek1}) NS-POMDPs differ from conventional POMDPs in that the observations, $S_A$, are stored in the state space.}

\begin{defi}[Semantics of NS-POMDPs]\label{semantics-def}
Given an NS-POMDP $\pomdp$, the semantics of $\pomdp$ is the POMDP $\sem{\pomdp} = (S,\Act,\Delta,\delta,S_A,\obs)$ where:
\begin{itemize}
    \item $S \subseteq S_A \times S_E$ is the set of percept compatible %, i.e., if $s$ is percept compatible, then any state $s'$ reachable from $s$ is also percept compatible.}) 
    states,
    which contain both discrete and continuous elements;
    
    \item $\Delta(s_A,s_E) = \Delta_A(s_A)$ for $(s_A,s_E)\in S$;
  
    \item 
    % if $s_A'=(\loc',\per')$ is percept compatible, 
    % such that $\per'=\obs_A(\loc',s_E')$, 
    {\revise $\delta(s, a)(s') = \delta_A(s_A, a)(\loc') \delta_E(s_E, a)(s_E')$ and $\delta(s,a)(s')=0$ otherwise for $s=(s_A,s_E),s'=(s_A',s_E') \in S$ and $a \in \Delta(s)$;}
    
    \item $\obs(s_A,s_E)=s_A$ for $(s_A,s_E)\in S$. 
\end{itemize}
\end{defi}

\noindent
{\revise Note that the transition probability function $\delta$ has finite branching, since $S_A$ is finite and $\delta_E$ is required to have finite branching. Nevertheless, the underlying state space of an NS-POMDP is uncountable.}

\begin{examp} \label{ex:parking:rew}
{\revise We return to the vehicle parking NS-POMDP of \egref{ex:parking:model}. Suppose the objective of the agent is to try and park at its currently preferred
parking spot as quickly as possible. We can represent this scenario using a discounted cumulative reward objective where, in the corresponding
reward structure, all action rewards are zero, and for the state rewards there is a positive reward if the agent is perceived to have reached its preferred parking spot and is zero otherwise. Formally, we have for any $s = ((\mathit{ps},\per),s_E) \in S$: 
\[
r_S(s) = \left\{ \begin{array}{cl}
1000 & \mbox{if $f_\mathcal{R}^{\max}(\mathit{ps}) = \per$} \\
0 & \mbox{otherwise.}
\end{array} \right.
\]
Decreasing the value of the discount factor $\beta$ will lead to greater loss in the accumulated rewards if the agent fails to park quickly. \hfill$\blacksquare$}
\end{examp}

\startpara{NS-POMDP strategies}
As $\sem{\pomdp}$ is a POMDP, we consider \emph{observation-based strategies},  which can be represented by memoryless strategies over its belief MDP $\sem{\pomdp}_B$. 
Given agent state $s_A = (\loc, \per)$, we let $S_E^{s_A}= \{ s_E \in S_E \mid \obs_A(\loc, s_E) = \per \}$, i.e., the environment states generating percept $\per$ given $\loc$. 
Since agent states are observable and states of $\sem{\pomdp}$ are percept compatible, beliefs can be represented as pairs $(s_A,b_E)$, where $s_A \in S_A$ is an agent state, $b_E \in \mathbb{P}(S_E)$ is a belief over environment, and $b_E(s_E) = 0$ for all $s_E \in S_E \setminus S_E^{s_A}$, i.e., those states that are not percept compatible. 
\iffalse
Since the agent's state is observable and the states of $\pomdp$ are percept compatible, we can represent a belief as a pair $(s_A,b_E)$, where $s_A \in S_A$ and $b_E: S_E \to \mathbb{R}_{\ge 0}$ is such that $\int_{s_E \in S_E} b_E(s_E) \textup{d} s_E  = 1$ and $b_E(s_E) = 0$ for all $s_E \notin S_E^{s_A}$, such that, if $s_A = (\loc, \per)$, then $S_E^{s_A}$ is the set of environment states generating the percept $\per$ given the local state $\loc$, i.e., $S_E^{s_A} = \{ s_E \in S_E \mid \obs_A(\loc, s_E) = \per \}$. 
\fi

Before giving the definition of $\sem{\pomdp}_B$, we consider how beliefs are updated in this setting. % of NS-POMDPs. 
Therefore, suppose $s_A$ is the current agent state, i.e., {\revise stores} what is observable, and $b_E$ is the current belief about the environment. Then if action $a$ is executed and $s_A'$ is observed, the updated belief %, denoted $b_E^{s_A,a,s_A'}$, 
is such that for any $s_E' \in S_E$:
\begin{equation}\label{eq:belief-update1}
\!\! b_E^{s_A,a,s_A'}(s_E') = \frac{P((s_A',s_E') \mid (s_A,b_E), a)}{P(s_A' \mid (s_A,b_E), a)} \; \mbox{if  $s_E' \in S_E^{s_A'}$ and $0$ otherwise.} \!\!
\end{equation}

\startpara{Belief MDP and belief updates}
We can now derive the belief MDP of an NS-POMDP, which follows through a standard construction~\cite{LPK-MLL-ARC:98} while relying on Borel measurability of the underlying uncountable state space of the NS-POMDP.

\begin{defi}[Belief MDP]\label{beliefmdp-def}
The belief MDP of an NS-POMDP $\pomdp$ is the MDP $\sem{\pomdp}_B = (S_B,\Act,\Delta_B,\delta_B)$, where:
\begin{itemize}
\item
$S_B \subseteq S_A \times \mathbb{P}(S_E)$ is the set of percept compatible beliefs;
\item
$\Delta_B(s_A,b_E)= \Delta_A(s_A)$ for $(s_A,b_E) \in S_B$;
\item
for $(s_A,b_E),(s_A',b_E') \in S_B$, 
and  $a \in \Delta_B(s_A,b_E)$:
\[
\delta_B((s_A,b_E),a)(s_A',b_E') = \left\{ \begin{array}{cl}
P(s_A' \mid (s_A,b_E), a) & \mbox{if $b_E' = b_E^{s_A,a,s_A'}$} \\
0 & \mbox{otherwise.}
\end{array} \right.
\]
\end{itemize}
\end{defi}

\noindent
Finally, in this section we discuss how the beliefs and probabilities of \defiref{beliefmdp-def} can be computed. 
For any $(s_A,b_E),(s_A',b_E') \in S_B$ and $s_A'=(\loc',\per')$, we have that $P(s_A' \mid (s_A,b_E), a)$ equals:
{ \revise 
\begin{align}\label{eq:probability-obs-agent-state}
\delta_A(s_A, a)(\loc') \left(\mbox{$\int_{s_E \in S_E}$} b_E(s_E) \mbox{$\int_{s_E' \in S_E^{s_A'}}$}  \delta_E(s_E, a)(s_E') \textup{d} s_E \right).
\end{align}
}
Furthermore, $P((s_A',s_E') \mid (s_A,b_E), a)$ equals:
\begin{equation} \label{eq:new-agent-obs}
\delta_A(s_A, a)(\loc')  \left(\mbox{$\int_{s_E \in S_E}$} b_E(s_E) \delta_E(s_E,a)(s_E') \textup{d} s_E \right) \end{equation}
if $s_E' \in S_E^{s_A'}$ and 0 otherwise.
Thus, using \eqnref{eq:belief-update1} we have that $b_E^{s_A,a,s_A'} (s_E')$ equals:
{\revise
% for any $s_E' \in S_E$, $b_E^{s_A,a,s_A'} (s_E')$ equals
\begin{align}\label{eq:belief-update}
\frac{\int_{s_E \in S_E} b_E(s_E) \delta_E(s_E,a)(s_E') \textup{d} s_E}{\int_{s_E \in S_E}  b_E(s_E) \mbox{$\int_{s_E'' \in S_E^{s_A'}}$} \delta_E(s_E, a)(s_E'') \textup{d} s_E} \mbox{ if  $s_E' \in S_E^{s_A'}$ and $0$ otherwise.}
\end{align}
} 
We note that the belief MDP $\sem{\pomdp}_B$ is continuous and infinite-dimensional, with finite branching. Thus, solving it exactly is intractable as closed-form operations and parametric forms for continuous functions are required. For efficient computation, beliefs also need to be in closed form.

\section{Value Iteration}\label{sec:value-iteration}

A common approach to solving continuous-state POMDPs is to discretise or approximate the continuous components with a grid and use methods for finite-state POMDPs. As this may compromise accuracy and leads to an exponential growth in the number of
states, we instead aim to operate directly in the continuous domain. 
Since
functions over continuous spaces can have arbitrary forms not amenable to
computation, we will extend $\alpha$-functions to the setting of NS-POMDPs, aided by the theoretical formulation of~\cite{JMP-NV-MTS-PP:06}, where it was proved that continuous-state POMDPs
with discrete observations and actions have a piecewise linear and convex
value function.
Rather than work with Gaussian mixtures as in~\cite{JMP-NV-MTS-PP:06}, which would require approximations, we will directly exploit the structure of the model to induce a finite (polyhedral) representation of the value function,
{\revise similarly to~\cite{ZF-RD-NM-RW:04}, which has been shown to outperform discretisation.}

More specifically, in this section we show that {\em piecewise constant} representations for the perception, reward and transition functions are sufficient for NS-POMDPs under mild assumptions, in the sense that they offer a finite representation and are closed with respect to belief update and the Bellman operator. We next propose a value iteration (VI) algorithm that utilises piecewise constant $\alpha$-functions, which does not scale but serves as a basis for designing a practical point-based algorithm in Section~\ref{sec:HSVI}. We conclude this section by investigating the convexity and continuity of the value function.

\subsection{Value functions}
We work with the belief MDP $\sem{\pomdp}_B = (S_B,\Act,\Delta_B,\delta_B)$ of an NS-POMDP $\pomdp$ and consider discounted cumulative reward objectives $Y$. 
%Before giving a fixed point characterisation of 
The {\em value function} is given by $V^\star : S_B \to \mathbb{R}$, where $V^\star(s_A, b_E) = \mathbb{E}_{(s_A,b_E)}^{\sigma^\star}[Y]$ for all $(s_A, b_E) \in S_B$.
We require the following notation to evaluate beliefs through a function over the state space $S$.
Given $f: S \to \mathbb{R}$ and belief $(s_A,b_E)$, let: 
\begin{equation}\label{expectation-eq}
\langle f, (s_A,b_E) \rangle = \mbox{$\int_{s_E \in S_E}$} f(s_A,s_E) b_E(s_E) \textup{d}s_E
\end{equation}
for which an integral over $S_E$ is required.

Recall that $\mathbb{F}(S_B)$ denotes the space of functions over the beliefs.

\begin{defi}[Bellman operator]\label{max-def}
Given $V \in \mathbb{F}(S_B)$, the operator $T : \mathbb{F}(S_B) \rightarrow \mathbb{F}(S_B)$ is defined as follows: $[TV](s_A, b_E)$ equals
\begin{equation}
\begin{aligned}
\max\limits_{a  \in \Delta_A(s_A)} \left\{ \langle R_{a}, (s_A, b_E) \rangle + \beta \mbox{$\sum_{s_A' \in S_A}$}  P(s_A' \mid (s_A, b_E), a) V ( s_A',b^{s_A,a,s_A'}_E ) \right\}
\label{eq:max-operator}
\end{aligned}
\end{equation}
for $(s_A, b_E) \in S_B$, where $R_{a}(s) = r_A(s, a) + r_S(s)$ for $s \in S$.
\end{defi}
Since $\sem{\pomdp}_B$, the semantics of NS-POMDP $\pomdp$, is a continuous-state POMDP with discrete observations and actions, according to \cite{JMP-NV-MTS-PP:06} the value function $V^\star$ is the unique fixed point of the operator $T$, and thus, theoretically, value iteration can be used to compute $V^\star$. However, as the functions involved are defined over probability density functions from $\mathbb{P}(S_E)$ and $S_E$ is a continuous space, to ensure feasible computation we require a finite parameterisable representation for the value function. To this end, we will extend the class of $\alpha$-functions with special structure introduced for continuous-state POMDPs in \cite{JMP-NV-MTS-PP:06}, which generalise  $\alpha$-vector representations for finite-state POMDPs~\cite{EJS:78}.

\subsection{PWC Representations}\label{sect:pwcreprs}

{\revise We first recall that the class of perception functions we consider (see Section~\ref{background-sect})} induces a finite partition 
% (FCP) 
of the continuous state space, consisting of connected and observationally-equivalent {\em regions}, and obtained as the preimage of the perception function.
We then impose mild assumptions on the NS-POMDP structure (\aspref{asp:transitions-rewards}) to ensure that 
the agent and environment transition functions preserve the PWC properties of this partition, and on the reward function to ensure region-based reward accumulation.

\begin{lema}[Perception FCP]\label{lema:perception-fcp}
     There exists a smallest FCP of $S$, called the perception FCP, denoted $\Phi_{P}$, such that all states in any $\phi \in \Phi_{P}$ are observationally equivalent, i.e., if $(s_A,s_E),(s_A',s_E')\in \phi$, then $s_A=s_A'$ and we let $s_A^\phi= s_A$.
\end{lema}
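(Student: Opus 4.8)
The plan is to construct $\Phi_{P}$ explicitly from the preimages of the perception function and then verify both that it has the stated property and that it is the coarsest such FCP. First I would recall that by \defiref{semantics-def} the observation map is $\obs(s_A, s_E) = s_A$, so two percept-compatible states are observationally equivalent precisely when they share the same agent state $s_A = (\loc, \per)$. Since every state of $S$ is percept compatible, a state $((\loc, \per), s_E) \in S$ satisfies $\per = \obs_A(\loc, s_E)$, and hence $S$ decomposes as the disjoint union $S = \bigsqcup_{s_A \in S_A} (\{s_A\} \times S_E^{s_A})$ over the (finitely many) agent states, where $S_E^{s_A} = \{s_E \in S_E : \obs_A(\loc, s_E) = \per\}$. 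Because $S_A$ is finite and discrete, each sheet $\{s_A\} \times S_E^{s_A}$ is clopen in $S$ and homeomorphic to $S_E^{s_A}$, and a subset of $S$ consists of observationally equivalent states exactly when it lies inside a single sheet.

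Next I would invoke that the perception function $\obs_A$ is PWC. For each fixed $\loc \in \Loc$ the map $\obs_A(\loc, \cdot)$ is PWC on $S_E$, so it admits a finite defining FCP, and every level set $S_E^{s_A}$ is a union of finitely many of its regions; a finite union of connected sets has at most finitely many connected components, so $S_E^{s_A}$ splits into finitely many connected pieces. Partitioning each $S_E^{s_A}$ into its connected components and transporting each component $C$ into its sheet as $\{s_A\} \times C$ yields connected regions of $S$ that are mutually disjoint and cover $S$. Taking the union of these regions over all $s_A \in S_A$ produces a finite collection $\Phi_{P}$ that is an FCP of $S$, and since $s_A$ is constant on each region we may set $s_A^\phi = s_A$, so $\Phi_{P}$ has the required observational-equivalence property.

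To show $\Phi_{P}$ is the smallest such FCP, I would take any FCP $\Phi'$ of $S$ whose regions are observationally equivalent and prove it refines $\Phi_{P}$. Given $\phi' \in \Phi'$ with constant agent state $s_A = (\loc, \per)$, we have $\phi' = \{s_A\} \times \mathrm{pr}_E(\phi')$, where the environment projection $\mathrm{pr}_E(\phi')$ is connected, being the continuous image of a connected set, and is contained in the level set $S_E^{s_A}$; it therefore lies in a single connected component $C$ of $S_E^{s_A}$, so $\phi' \subseteq \{s_A\} \times C$, a region of $\Phi_{P}$. As the connected components of the level sets are the maximal connected sets on which the agent state is constant, $\Phi_{P}$ cannot be coarsened without breaking connectedness or observational equivalence, which establishes that it is the unique smallest FCP with the stated property.

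The step I expect to be the main obstacle is the finiteness claim, namely that each level set $S_E^{s_A}$ of the PWC map $\obs_A(\loc, \cdot)$ has only finitely many connected components; this is exactly where the piecewise constant structure is essential, since the level set is a finite union of regions of the defining FCP. Once this finiteness and the sheet decomposition of $S$ are in place, the remainder is routine topological bookkeeping.
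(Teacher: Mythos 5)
Your proposal is correct and follows essentially the same route as the paper: decompose $S$ by agent state, use the PWC structure of $\obs_A$ to split each level set $S_E^{s_A}$ into finitely many disjoint connected regions (your connected components are exactly the paper's ``representation that minimises the number of such regions''), and lift these to $S$. Your write-up is simply more explicit than the paper's, particularly in proving minimality via the refinement argument, which the paper asserts with ``it then follows.''
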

The perception FCP $\Phi_P$ can be used to find the set $S_E^{s_A'}$ for any agent state $s_A' \in S_A$ over which we integrate beliefs 
in closed form; see e.g., \eqnref{eq:probability-obs-agent-state} and \eqnref{eq:belief-update}.

{\revise The assumption we make about an NS-POMDP's structure requires that, given a finite decomposition $\Phi'$ of the state space into regions (i.e., an FCP of the state space), there exists another finite decomposition $\Phi$, called the \emph{preimage FCP}, such that states in regions of $\Phi$ have the same rewards and transition probabilities into regions of  $\Phi'$.
Furthermore, we require that transitions of the (continuous) environment must also be decomposable into regions.}
{\revise Our assumption is formally stated below.}

{\revise
\begin{asp}[Transitions and rewards]\label{asp:transitions-rewards} For any given FCP $\Phi'$ of $S$, there exists an FCP $\Phi$ of $S$, called the \emph{preimage FCP} of $\Phi'$, such that for all $s \in \phi \in \Phi$, $a \in \Act$ and $s' \in \phi' \in \Phi'$,
\begin{itemize}
    \item $\delta(s, a)(s') = \delta_{\Phi}(\phi, a)(\phi')$;

    \item $r_A(s, a) = r_{\Phi,A}(\phi, a)$;

    \item $r_S(s) = r_{\Phi,S}(\phi)$;
\end{itemize}
for functions $\delta_{\Phi} : \Phi \times \Act \to \mathbb{P}(\Phi'), r_{\Phi, A} : \Phi \times \Act \to \mathbb{R}$ and $r_{\Phi, S} : \Phi \to \mathbb{R}$. Furthermore, there exists $n \in \mathbb{N}$ such that $\delta_E = \sum_{i=1}^{n} \mu_i \delta_E^i$, where $\sum_{i = 1}^{n} \mu_i = 1$, $\mu_i \geq 0$ and $\delta_E^i : (S_E {\times} \Act) \to S_E$ is piecewise continuous for each action in $\Act$ and $1 \leq i \leq n$.
\end{asp}
This assumption allows us to compute finite representations of lower and upper bounds of an NS-POMDP's value function over a (polyhedral) partition of the state space. This partition is created dynamically during the iterations of the solution, using a preimage-based splitting operation (see Algorithm~\ref{alg:ISPP-backup}). In this construction, using Assumption~\ref{asp:transitions-rewards}, we can assume for any action $a$ that there exists a smallest FCP of $S$, called the \emph{reward FCP under action $a$} and denoted $\Phi_{R}^a$, such that all states in any $\phi \in \Phi_{R}^a$ have the same state rewards and action rewards when $a$ is performed.}

We emphasize that, although the states in any region of the perception FCP are observationally equivalent, by \aspref{asp:transitions-rewards} {\revise and, since states in the same region of the preimage FCP of the perception FCP have the same transitions and rewards to these regions, the perception FCP and the preimage FCP need not be the same. This means that} such states can still have different values,  since taking the same actions can yield paths that need not be observationally equivalent. Therefore, the value function $V^{\star}$ may not be piecewise constant. 
Our results demonstrate that analysing NS-POMDPs under these PWC restrictions remains challenging.
\begin{examp} \label{ex:parking:fcp}
\figref{fig:car_parking} (right) shows an FCP representation for the preimage of the perception function of \egref{ex:parking:model}. The FCP was constructed via the exact computation method from \cite{KM-FF:20}, and is composed of 62 polygons. Each colour indicates one of the grid cells as perceived by the agent. 

{\revise We remark that Assumption~\ref{asp:transitions-rewards} holds for this example since the agent's transition function and the reward functions are PWC and the environment's transition function is PWL.} \hfill$\blacksquare$
\end{examp}

\subsection{PWC \texorpdfstring{$\alpha$}{α}-Function Value Iteration}

We can now show, utilising the results %$\alpha$-function representations 
for continuous-state POMDPs~\cite{JMP-NV-MTS-PP:06}, that $V^{\star}$ is the limit of a sequence of $\alpha$-functions, called \emph{piecewise linear and convex under PWC $\alpha$-functions (P-PWLC)}, where each such function can be represented by a (finite) set of PWC functions (concretely, as a finite set of FCP regions and a value vector).

\begin{defi}[P-PWLC function]\label{defi:PWLC}
A function $V: S_B \to \mathbb{R}$ is \emph{piecewise linear and convex under PWC $\alpha$-functions (P-PWLC)} if there exists a finite set $\Gamma \subseteq \mathbb{F}_C(S)$ such that $V(s_A, b_E) = \max_{\alpha \in \Gamma} \langle \alpha, (s_A, b_E) \rangle$ for all $(s_A, b_E) \in S_B$, 
where the functions in $\Gamma$ are called PWC \emph{$\alpha$-functions}.
\end{defi}
\defiref{defi:PWLC} implies that, if $V \in \mathbb{F}(S_B)$ is P-PWLC, then it can be represented by a set $\Gamma$ of PWC continuous functions over $S$. 
{\revise Similarly to~\cite{ZF-RD-NM-RW:04}, where a rectangular PWC representation of value functions was proposed and proved to be closed under the Bellman backup for a class of structured continuous finite-horizon MDPs, we demonstrate that, for NS-POMDPs satisfying Assumption~\ref{asp:transitions-rewards}, 
% a PWC $\alpha$-function 
a P-PWLC representation of value functions is closed under the Bellman operator and the value iteration algorithm converges.}

\begin{thom}[P-PWLC closure and convergence]\label{thom:PWC-consistency}
If $V \in \mathbb{F}(S_B)$ and P-PWLC, then so is $[TV]$. If $V^0 \in \mathbb{F}(S_B)$ and P-PWLC, then the sequence $(V^t)_{t = 0}^{\infty}$, such that $V^{t+1} = [TV^t]$, is P-PWLC and converges to $V^\star$. 
\end{thom}

\noindent
We remark that an implementation of this exact value iteration is feasible {\revise in principle}, since each $\alpha$-function involved is PWC and thus allows for a finite representation. 
However, as the number of $\alpha$-functions grows exponentially in the number of agent states, 
it is not scalable {\revise in practice}.

\subsection{Convexity and Continuity of the Value Function}

In \sectref{sec:HSVI}, we will derive a variant of HSVI for lower and upper bounding of the value function, which is more scalable. To this end, the following properties will be required.

Using \thomref{thom:PWC-consistency} the value function can be represented as a pointwise maximum $V^{\star}(s_A, b_E) = \sup_{\alpha \in \Gamma} \langle \alpha, (s_A, b_E) \rangle$ for $(s_A, b_E) \in S_B$, where $\Gamma \subseteq \mathbb{F}_C(S)$ may be infinite. We now show that $V^{\star}$ is convex and continuous for any fixed $s_A \in S_A$.  
Since we assume bounded reward functions, the value function $V^{\star}$ has lower and upper bounds:
\begin{equation}\label{lu-eqn}
L = \min\nolimits_{s \in S, a \in \Act} R_a(s) /(1- \beta) \quad \mbox{and} \quad
U = \max\nolimits_{s \in  S, a \in \Act} R_a(s)/(1- \beta)  \, .
\end{equation}

\begin{thom}[Convexity and continuity]\label{thom:continuity}
For any $s_A \in S_A$, the value function $V^{\star}(s_A, \cdot) : \mathbb{P}(S_E) \rightarrow \mathbb{R}$ is convex and for any $b_E, b_E' \in  \mathbb{P}(S_E)$:
\begin{equation}\label{eq:K-definition}
    |V^{\star}(s_A, b_E) - V^{\star}(s_A, b_E')| \leq K(b_E, b_E')
\end{equation}
{\revise where $K(b_E, b_E') = \frac{1}{2}(U - L) \int_{s_E \in S_E^{s_{\scale{.75}{A}}}} | b_E(s_E) - b_E'(s_E)| \textup{d}s_E$.}
\end{thom}

\section{Heuristic Search Value Iteration}\label{sec:HSVI}

Value iteration with point-based updates has been proposed for finite-state POMDPs \cite{JMP-NV-MTS-PP:06,TS-RS:04,GS-JP-RK:13,TS-RS:05,KH-BB-KC:18}, relying on the fact that performing many fast approximate updates often results in a more useful value function than performing a few exact updates. HSVI \cite{TS-RS:04} approximates 
$V^\star$ at a given initial belief via lower and upper bound functions, which are updated through heuristically generated beliefs. SARSOP \cite{HK-DH-WL:08} improves efficiency,
but sacrifices convergence guarantees due to aggressive pruning. 
These methods focus on finite-state POMDPs and are not directly applicable to continuous-state NS-POMDPs, as they rely on discretisation or approximation. 

We now present a new HSVI algorithm for NS-POMDPs, which uses P-PWLC functions and
belief-value induced functions to approximate $V^{\star}$ from below and above.  This HSVI algorithm progressively subdivides the continuous state space during value backups, to obtain a piecewise constant lower bound and a lower $K$-Lipschitz envelope of a convex hull upper bound on $V^{\star}$ that itself may not be piecewise constant.

We first introduce the representations of the lower and upper bound functions to the value function, then present point-based updates followed by our HSVI algorithm, and finally consider two belief representations for the implementation, both with closed forms for the quantities of interest,
one based on particles (individually sampled points) and the other on regions (polyhedra) of the continuous space.

\subsection{Lower and Upper Bound Representations}\label{subsec:lb_up}
\startpara{Lower bound function} 
Selecting an appropriate representation for $\alpha$-functions requires closure properties with respect to the Bellman operator, which includes both the transition function and the reward function. 
Rather than relying on Gaussian mixtures~\cite{JMP-NV-MTS-PP:06}, which require both the transition and reward functions to be in this form,  
we represent the lower bound $V_{\mathit{lb}}^{\Gamma} \in \mathbb{F}(S_B)$ as a P-PWLC function
% a point-wise maximum
for the finite set $\Gamma \subseteq \mathbb{F}_C(S)$ of PWC $\alpha$-functions (see \defiref{defi:PWLC}),
for which closure is guaranteed by \thomref{thom:PWC-consistency}. This is finitely representable as each $\alpha$-function is PWC. 
In contrast to Gaussian mixtures, our P-PWLC representation is designed to match the NS-POMDP perfectly, with the necessary closure properties ensured by exploiting the structure of the NS-POMDP.

\startpara{Upper bound function} The upper bound $V_{\mathit{ub}}^{\Upsilon} \in \mathbb{F}(S_B)$ is represented by a finite set of belief-value points $\Upsilon = \{ ((s_A^i, b_E^i), y_i) \mid i \in I \}$, where $y_i$ is an upper bound of $V^{\star}(s_A^i, b_E^i)$. Since $V^{\star}(s_A, \cdot)$ is convex by \thomref{thom:continuity}, letting $I_{s_{\scale{.75}{A}}} = \{ i \in I \mid s_A^i=s_A \}$, for any $\lambda_i \ge 0$ such that $\sum_{i\in I_{s_{\scale{.75}{A}}}} \lambda_i = 1$, we have:
\begin{equation}\label{eq:v-convexity-upper-bound}
   V^{\star}(s_A, \mbox{$\sum_{i\in I_{s_{\scale{.75}{A}}}}$} \lambda_i b_E^i) \leq \mbox{$\sum_{i\in I_{s_{\scale{.75}{A}}}}$} \lambda_i V^{\star}(s_A^i, b_E^i) \leq \mbox{$\sum_{i\in I_{s_{\scale{.75}{A}}}}$} \lambda_i y_i \, .
\end{equation}
This fact is used in HSVI for finite-state POMDPs \cite{TS-RS:04}, as any new belief is a convex combination of the beliefs in $\Upsilon$, and therefore the convexity of $V^{\star}(s_A, \cdot)$ yields an upper bound.
{\revise However, since $\Upsilon$ is a finite set and in NS-POMDPs each belief %in $S_B$ 
is over infinitely many states, any convex combination of beliefs in $\Upsilon$ cannot cover the belief space $S_B$, and hence \eqref{eq:v-convexity-upper-bound} cannot be used to generate an upper bound.}
{\revise We therefore instead define} the upper bound $V_{\mathit{ub}}^{\Upsilon}$ as the lower envelope of the lower convex hull of the points in $\Upsilon$ satisfying the following problem:
\begin{align}
     V_{\mathit{ub}}^{\Upsilon}(s_A, b_E)  = &\textup{ minimize} \; \mbox{$\sum\nolimits_{i \in I_{s_{\scale{.75}{A}}}}$}\lambda_i y_i + K_{\mathit{ub}}(b_E, \mbox{$\sum\nolimits_{i \in I_{s_{\scale{.75}{A}}}}$} \lambda_i b_E^i) \nonumber  \\
     &\textup{ subject to: }  \lambda_i \ge 0, \mbox{$\sum\nolimits_{i \in I_{s_{\scale{.75}{A}}}}$}  \lambda_i = 1 \; \mbox{for all $(s_A, b_E) \in S_{B}$} \label{eq:new-ub}
\end{align} 
where $K_{\mathit{ub}} : \mathbb{P}(S_E) \times \mathbb{P}(S_E) \to \mathbb{R}$ measures the difference between two beliefs such that, if $K$ is from \thomref{thom:continuity} showing the continuity of the value function, then for any $b_E, b_E' \in \mathbb{P}(S_E)$:
\begin{align}\label{eq:K-UB-condition}
    K_{\mathit{ub}}(b_E, b_E') \ge K(b_E, b_E') \quad \mbox{and} \quad  K_{\mathit{ub}}(b_E, b_E) = 0 \, .
\end{align}
It can be seen that \eqref{eq:new-ub} is close to the classical upper bound function used in regular HSVI for finite-state spaces, except for the function $K_{\mathit{ub}}$ that measures the difference between two beliefs (two functions). 
We require that $K_{\mathit{ub}}$ satisfies \eqref{eq:K-UB-condition} to ensure that \eqref{eq:new-ub} is an upper bound after a value backup, as stated in \lemaref{lema:upper-bound-update} below.

\startpara{Bound initialisation} The lower bound $V_{\mathit{lb}}^{\Gamma}$ is initialised using the lower bound of the blind strategies of the form ``always choose action $a \in \Act$'', which is
given by $\sum_{k = 0}^{\infty} \beta^k \inf_{s \in S}R_a(s)$. Therefore, a lower bound for $V_{\mathit{lb}}^{\Gamma}$ is given by:
\[ R_\mathit{LB} = 
\max\nolimits_{a \in \Act} \left( \mbox{$\sum\nolimits_{k = 0}^{\infty}$} \beta^k \inf\nolimits_{s \in S}R_a(s) \right) 
= 1/(1 - \beta) \max\nolimits_{a \in \Act}\inf\nolimits_{s \in S}R_a(s) \, .
\]
The PWC $\alpha$-function set $\Gamma$ for the initial $V_{\mathit{lb}}^{\Gamma}$ contains a single PWC function $\alpha$,  where $\alpha(s) = R_\mathit{LB}$ for all $s \in S$ and the associated FCP is the perception FCP $\Phi_P$. We initialise the upper bound $V_{\mathit{ub}}^{\Upsilon}$ by sampling a set of initial beliefs $\{(s_A^i, b_E^i)\}_{i \in I}$ and letting $y_i = U$ for all $(s_A^i, b_E^i)$.

\subsection{Point-Based Updates}
\startpara{Lower bound updates} For the lower bound $V_{\mathit{lb}}^{\Gamma}$, 
in each iteration we add a new PWC $\alpha$-function $\alpha^{\star}$ to $\Gamma$ leading to $\Gamma'$
at a belief $(s_A, b_E) \in S_{B}$ such that:
\begin{equation}\label{eq:update-lb-condition}
    \langle \alpha^{\star}, (s_A, b_E) \rangle = [TV_{\mathit{lb}}^{\Gamma}](s_A, b_E) \, .
\end{equation}
To that end, let $a$ be an action maximising the Bellman backup \eqref{eq:max-operator} at $(s_A, b_E)$, i.e., $a$ is a maximiser when computing $[TV_{\mathit{lb}}^{\Gamma}](s_A, b_E)$. If action $a$ is taken, then $\bar{S}_A = \{ s_A' \in S_A \mid P(s_A' \mid (s_A, b_E), a) > 0 \}$ are agent states that can be observed. If $s_A'$ is observed, then the backup value at belief $(s_A, b_E)$ from an $\alpha$-function $\alpha \in \Gamma$ equals $\int_{s_E \in S_E} \mathit{bval}((s_A, s_E), a, s_A', \alpha) b_E(s_E) \textup{d}s_E$, where for any $s_E \in S_E$:
\begin{equation*} %\label{eq:alpha-new-middle}
\mathit{bval}((s_A, s_E), a, s_A', \alpha) = \beta \delta_A(s_A, a)(\loc')  \mbox{$\int_{s_E' \in S_E^{s_{\scale{.75}{A}}'}}$} \delta_E(s_E, a)(s_E') \alpha(s_A', s_E')  \,.
\end{equation*}
For $s_A' \in \bar{S}_A$, let $\alpha^{s_A'} \in \Gamma$ be an $\alpha$-function maximising the backup value, i.e., $\alpha^{s_A'} \in \argmax_{\alpha \in \Gamma} \int_{s_E \in S_E} \mathit{bval}((s_A, s_E), a, s_A', \alpha) b_E(s_E) \textup{d}s_E$.

Using $a$, $\alpha^{s_A'}$ for $s_A' \in \bar{S}_A$ and the perception FCP $\Phi_P$, \algoref{alg:point-based-update-belief} computes a new $\alpha$-function $\alpha^{\star}$ at belief $(s_A, b_E)$. To guarantee \eqref{eq:update-lb-condition} and improve the efficiency, we only compute the backup values for regions $\phi \in \Phi_P$ over which $(s_A, b_E)$ has positive probabilities, i.e. $s^\phi_A=s_A$ (recall $s^\phi_A$ is the unique agent state appearing in $\phi$) and $\int_{(s_A, s_E) \in \phi} b_E(s_E) \textup{d} s_E > 0$
and assign the trivial lower bound $L$ otherwise. More precisely, for each such region $\phi$ and $(\hat{s}_A,\hat{s}_E) \in \phi$: 
\begin{equation}\label{eq:backup-value}
    \alpha^{\star}(\hat{s}_A,\hat{s}_E) =  R_{a}(\hat{s}_A,\hat{s}_E) + \mbox{$\sum\nolimits_{s_A' \in S_A}$} \mathit{bval}((\hat{s}_A,\hat{s}_E), a, s_A', \alpha^{s_A'})
\end{equation}
where if $s_A' \notin \bar{S}_A$, then $\alpha^{s_A'}$ can be any $\alpha$-function in $\Gamma$.
Computing the backup values \eqref{eq:backup-value} state by state is computationally intractable, as region $\phi$ contains an infinite number of states. 
However, the following lemma shows that $\alpha^{\star}$ is PWC, thus resulting in a tractable region-by-region backup. The lemma also shows that the lower bound function increases uniformly, is valid after each update, and performs no worse than the Bellman backup at the current belief.

\begin{algorithm}[tb]
\caption{Point-based $\mathit{Update}(s_A, b_E)$ of $(V_{\mathit{lb}}^{\Gamma}, V_{\mathit{ub}}^{\Upsilon})$}
\label{alg:point-based-update-belief}
\begin{algorithmic}[1] %[1] enables line numbers
\State $a \leftarrow$ the maximum action in computing $[TV_{\mathit{lb}}^{\Gamma}](s_A, b_E)$
\State $\bar{S}_A \leftarrow \{ s_A' \in S_A \mid P(s_A' \mid (s_A, b_E), a) > 0 \}$
\State $\alpha^{s_A'} \leftarrow \argmax_{\alpha \in \Gamma} \int_{s_E \in S_E} \mathit{bval}((s_A, s_E), a, s_A', \alpha) b_E(s_E) \textup{d}s_E $ for all $s_A' \in \bar{S}_A$
\For{$\phi \in \Phi_P$}
\If{$s^\phi_A = s_A$ and $\int_{(s_A, s_E) \in \phi} b_E(s_E) \textup{d} s_E > 0$}
\State Compute $\alpha^{\star}(\hat{s}_A,\hat{s}_E)$ by \eqref{eq:backup-value} for $(\hat{s}_A,\hat{s}_E) \in \phi$ \Comment{ISPP backup} 
\Else $\;\alpha^{\star}(\hat{s}_A,\hat{s}_E) \leftarrow L$ for $(\hat{s}_A,\hat{s}_E) \in \phi$
\EndIf
\EndFor
\State $\Gamma \leftarrow \Gamma \cup \{\alpha^{\star}\}$
\State $p^{\star} \leftarrow [TV_{\mathit{ub}}^{\Upsilon}](s_A, b_E)$ 
\State $\Upsilon \leftarrow \Upsilon \cup \{((s_A, b_E), p^{\star}) \}$
\end{algorithmic}
\end{algorithm}

\begin{algorithm}[tb]
\caption{Image-Split-Preimage-Product (ISPP) backup over a region}
\label{alg:ISPP-backup}
\textbf{Input}: region $\phi$, action $a$, PWC $\alpha^{s_A'}$ for all $s_A' \in S_A$
\begin{algorithmic}[1] %[1] enables line numbers
\State $\Loc' \leftarrow \{ \loc' \in \Loc \mid \delta_A(s^\phi_A, a)(\loc') > 0 \} $, $\Phi_{\textup{product}} \leftarrow \phi$
\For{$\loc' \in \Loc', i = 1, \dots, n$} 
\State $\phi_{E}' \leftarrow \{ \delta_E^i(s_E, a) \mid (s^\phi_A, s_E) \in \phi \}$ \Comment{Image}
\State $\Phi_{\textup{image}} \leftarrow \textup{divide } \phi_{E}' \textup{ into regions over } S \textup{ by } \obs_A(\loc', \cdot )$
\State $\Phi_{\textup{split}} \leftarrow \emptyset$ \Comment{Split}
\For{$\phi_{\textup{image}} \in \Phi_{\textup{image}}$}
\State $\Phi_{\alpha} \leftarrow \textup{a constant-FCP of } S \textup{ for the PWC function } \alpha^{s_A^{\phi_{\scale{.75}{\textup{image}}}}}$
\State $\Phi_{\textup{split}} \leftarrow \Phi_{\textup{split}} \cup  \{ \phi_{\textup{image}} \cap \phi' \mid \phi' \in \Phi_{\alpha} \}$
\EndFor
\State $\Phi_{\textup{pre}} \leftarrow \emptyset$  \Comment{Preimage}
\For{$\phi_{\textup{image}} \in \Phi_{\textup{split}}$}
%\State $\phi_{\textup{pre}} \leftarrow \{ (s^\phi_A, s_E) \in \phi \mid t_E^{i}(s_E, a) \in \phi_{\textup{image}} \}$
\State $\Phi_{\textup{pre}} \leftarrow \Phi_{\textup{pre}}  \cup \{ (s^\phi_A, s_E) \in \phi \mid \delta_E^{i}(s_E, a) \in \phi_{\textup{image}} \}$ 
\EndFor
\State $\Phi_{\textup{product}} \leftarrow \{ \phi_1 \cap \phi_2 \mid \phi_1 \in \Phi_{\textup{pre}} \wedge \phi_2 \in \Phi_{\textup{product}} \}$  \Comment{Product}
\EndFor
\State $\Phi_{\textup{product}} \leftarrow \{ \phi_1 \cap \phi_2 \mid \phi_1 \in \Phi_{\textup{product}} \wedge \phi_2 \in \Phi_{R}^{a} \}$
\For{$\phi_{\textup{product}} \in \Phi_{\textup{product}} $} \Comment{Value backup}
\State Take one state $(\hat{s}_A,\hat{s}_E) \in \phi_{\textup{product}}$
\State $\alpha^{\star}(\phi_{\textup{product}}) \leftarrow  R_{a}(\hat{s}_A,\hat{s}_E) + \sum\nolimits_{s_A' \in S_A} \mathit{bval}((\hat{s}_A,\hat{s}_E), a, s_A', \alpha^{s_A'})$
\EndFor
\State \textbf{return:} $(\Phi_{\textup{product}}, \alpha^{\star})$
\end{algorithmic}
\end{algorithm}

\begin{lema}[Lower bound]\label{lema:new-pwc-alpha}
At belief $(s_A, b_E) \in S_{B}$, the function $\alpha^{\star}$ generated by \algoref{alg:point-based-update-belief} is a PWC $\alpha$-function satisfying \eqref{eq:update-lb-condition}, $V_{\mathit{lb}}^{\Gamma} \leq V_{\mathit{lb}}^{\Gamma'} \leq V^{\star}$ and $V_{\mathit{lb}}^{\Gamma'}(s_A, b_E) \ge [TV_{\mathit{lb}}^{\Gamma}](s_A, b_E)$.
\end{lema}
Since $\alpha^{\star}$ is PWC, we next present a new backup for \eqref{eq:backup-value} through finite region-by-region backups.
Recall 
% from \aspref{asp:transitions} 
that $\delta_E$ can be represented as $\sum_{i = 1}^{n} \mu_i \delta_E^i$ {\revise and can preserve a decomposition of the state space into a finite set of regions}. \algoref{alg:ISPP-backup} presents an Image-Split-Preimage-Product (ISPP) backup method to compute \eqref{eq:backup-value} region by region. This method, inspired by \lemaref{lema:new-pwc-alpha}, is to divide a region $\phi$ into subregions, where for each subregion $\alpha^{\star}$ is constant, illustrated in \figref{fig:ispp}. Given any reachable local state $\loc'$ under $a$ and continuous transition function $\delta_E^i$, the image of $\phi$ under $a$ and $\delta_E^i$ to $\loc'$ is divided into \emph{image} regions $\Phi_{\textup{image}}$ such that the states in each region have a unique agent state. Each image region $\phi_{\textup{image}}$ is then split into subregions by a constant-FCP of the PWC function $\alpha^{s_A^{\phi_{\scale{.75}{\textup{image}}}}}$ by pairwise intersections, and thus $\Phi_{\textup{image}}$ is \emph{split}  into a set of refined image regions $\Phi_{\textup{split}}$. An FCP over $\phi$, denoted by $\Phi_{\textup{pre}}$, is constructed by computing the \emph{preimage} of each $\phi_{\textup{image}} \in \Phi_{\textup{split}}$ to $\phi$. Finally, the \emph{product} of these FCPs  $\Phi_{\textup{pre}}$ for all reachable local states and environment functions and {\revise $\Phi_R^{a}$ (the reward FCP under action~$a$)}, denoted $\Phi_{\textup{product}}$, is computed. The following lemma demonstrates that $\alpha^{\star}$ is constant in each region of $\Phi_{\textup{product}}$, and therefore \eqref{eq:backup-value} can be computed by finite backups. % through one state for each region.

\newcommand{\bfcpgrid}[1]{

    \filldraw[draw=black, thick, fill=cyan, opacity=0.4] (2.0,-2.0,#1) -- (2.0,2.0,#1) --  (-2.0,2.0,#1) -- (-2.0,-2.0,#1) -- cycle;

    \draw (1.0,-2.0,#1) -- (1.0,2.0,#1);
    %\draw (0.0,-2.0,#1) -- (0.0,2.0,#1);
    \draw (-1.0,-2.0,#1) -- (-1.0,2.0,#1);

    \draw (-2.0,-1.0,#1) -- (2.0,-1.0,#1);
    %\draw (-2.0,0.0,#1) -- (2.0,0.0,#1);
    \draw (-2.0,1.0,#1) -- (2.0,1.0,#1)
   
}

\begin{figure}[t]
%\vspace{-0.3cm}
\centering
\begin{tikzpicture}[scale=0.6,xyz frame={(0,0,-1)}{(0,1,0)}{(1,0,0)}]
%\hspace{-1.0cm}

%------------%

\bfcpgrid{0};

\node at(1.5,-3,0.25) {$\phi$};

\draw[-{stealth}] (0,0,1) -- (0,0,2);

%------------%

\bfcpgrid{3};

\filldraw[fill=green,opacity=0.6] (0.85,0.15,3) -- (0.85,0.65,3) --  (0.35,0.65,3) -- (0.35,0.15,3) -- cycle;

\filldraw[fill=green,opacity=0.6] (-0.85,-0.85,3) -- (-0.85,-0.35,3) --  (-0.35,-0.35,3) -- (-0.35,-0.85,3) -- cycle;

\draw[-] (0.85,0.15,3) -- (0.85,-0.85,6);
\draw[-] (0.85,0.65,3) -- (0.85,-0.35,6);
\draw[-] (0.35,0.65,3) -- (0.35,-0.35,6);
\draw[-] (0.35,0.15,3) -- (0.35,-0.85,6);

\draw[-] (-0.85,-0.85,3) -- (-0.85,-1.85,6);
\draw[-] (-0.85,-0.35,3) -- (-0.85,-1.35,6);
\draw[-] (-0.35,-0.35,3) -- (-0.35,-1.35,6);
\draw[-] (-0.35,-0.85,3) -- (-0.35,-1.85,6);

\node at(1.5,-3,3.25) {$\Phi_{\textup{image}}$};

%------------%

\bfcpgrid{6};

\filldraw[fill=orange,opacity=0.6] (0.85,-0.85,6) -- (0.85,-0.35,6) --  (0.35,-0.35,6) -- (0.35,-0.85,6) -- cycle;

\filldraw[fill=gray,opacity=0.6] (-0.85,-1.85,6) -- (-0.85,-1.35,6) --  (-0.35,-1.35,6) -- (-0.35,-1.85,6) -- cycle;

\draw[-{stealth}] (0.6,-0.6,6.25) -- (0.6,-0.6,8.75);
\draw[-{stealth}] (-0.6,-1.6,6.25) -- (-0.6,-1.6,8.75);

\node at(1.5,-3,6.25) {$\Phi_{\textup{split}}$};

%------------%

\bfcpgrid{9};

\filldraw[fill=red,opacity=0.6] (0.85,-0.85,9) -- (0.85,-0.35,9) --  (0.35,-0.35,9) -- (0.35,-0.85,9) -- cycle;

\draw (0.6,-0.85,9) -- (0.6,-0.35,9);
\draw (0.85,-0.6,9) -- (0.35,-0.6,9);

\filldraw[fill=blue,opacity=0.6] (-0.85,-1.85,9) -- (-0.85,-1.35,9) --  (-0.35,-1.35,9) -- (-0.35,-1.85,9) -- cycle;

\draw (-0.6,-1.35,9) -- (-0.6,-1.85,9);
\draw (-0.85,-1.6,9) -- (-0.35,-1.6,9);

\draw[dashed] (0.85,-0.85,9) -- (0.85,0.15,12);
\draw[dashed] (0.85,-0.35,9) -- (0.85,0.65,12);
\draw[dashed] (0.35,-0.35,9) -- (0.35,0.65,12);
\draw[dashed] (0.35,-0.85,9) -- (0.35,0.15,12);

\draw[dashed] (-0.85,-1.85,9) -- (-0.85,-0.85,12);
\draw[dashed] (-0.85,-1.35,9) -- (-0.85,-0.35,12);
\draw[dashed] (-0.35,-1.35,9) -- (-0.35,-0.35,12);
\draw[dashed] (-0.35,-1.85,9) -- (-0.35,-0.85,12);

\node at(1.5,-3,9.25) {$\Phi_{\textup{pre}}$};

%------------%

\bfcpgrid{12};

\filldraw[fill=yellow,opacity=0.6] (0.85,0.15,12) -- (0.85,0.65,12) --  (0.35,0.65,12) -- (0.35,0.15,12) -- cycle;

\draw (0.6,0.15,12) -- (0.6,0.65,12);
\draw (0.85,0.4,12) -- (0.35,0.4,12);

\filldraw[fill=magenta,opacity=0.6] (-0.85,-0.85,12) -- (-0.85,-0.35,12) --  (-0.35,-0.35,12) -- (-0.35,-0.85,12) -- cycle;

\draw (-0.6,-0.35,12) -- (-0.6,-0.85,12);
\draw (-0.85,-0.6,12) -- (-0.35,-0.6,12);

\draw[-{stealth}] (0.6,0.4,12.25) -- (0.6,0.4,14.75);
\draw[-{stealth}] (-0.6,-0.6,12.25) -- (-0.6,-0.6,14.75);

\node at(1.5,-3,12.5) {$\Phi_{\textup{product}}$};

%------------%

\bfcpgrid{15};

\filldraw[fill=purple,opacity=0.6] (0.85,0.15,15) -- (0.85,0.65,15) --  (0.35,0.65,15) -- (0.35,0.15,15) -- cycle;

\draw (0.6,0.15,15) -- (0.6,0.65,15);
\draw (0.85,0.4,15) -- (0.35,0.4,15);

\draw (0.35,0.15,15) -- (0.85,0.65,15);
\draw (0.35,0.65,15) -- (0.85,0.15,15);

\filldraw[fill=brown,opacity=0.6] (-0.85,-0.85,15) -- (-0.85,-0.35,15) --  (-0.35,-0.35,15) -- (-0.35,-0.85,15) -- cycle;

\draw (-0.6,-0.35,15) -- (-0.6,-0.85,15);
\draw (-0.85,-0.6,15) -- (-0.35,-0.6,15);

\draw (-0.85,-0.35,15) -- (-0.35,-0.85,15);
\draw (-0.85,-0.85,15) -- (-0.35,-0.35,15);

\end{tikzpicture}
\vspace*{-0.2cm}
\caption{Illustration of the steps taken by the ISPP algorithm.}\label{fig:ispp}
%\vspace*{-0.2cm}
\end{figure}
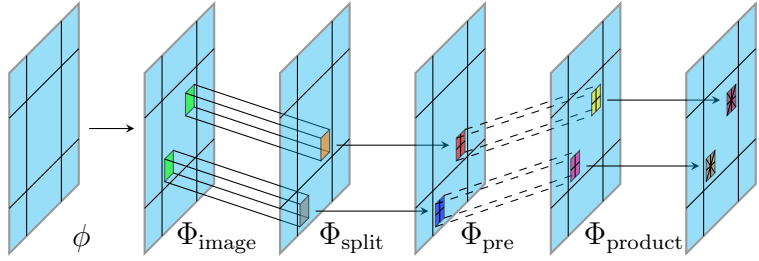

\begin{lema}[ISPP backup]
    The FCP $\Phi_{\textup{product}}$ returned by \algoref{alg:ISPP-backup} is a constant-FCP of $\phi$ for $\alpha^{\star}$ and the region-by-region backup for $\alpha^*$ satisfies~\eqref{eq:backup-value}.
\end{lema}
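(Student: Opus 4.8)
The plan is to unfold the definition of the backup value in \eqref{eq:backup-value} so that its dependence on $s_E$ becomes fully explicit, and then check that the ISPP construction isolates exactly those pieces. First I would use \aspref{asp:transitions} to write $\delta_E = \sum_{i=1}^{N_e} \mu_i \delta_E^i$ with each $\delta_E^i(\cdot,\bar a)$ deterministic, so that for $(s_A^\phi, s_E) \in \phi$,
\[
\alpha^{\star}(s_A^\phi, s_E) = R_{\bar a}(s_A^\phi, s_E) + \beta \mbox{$\sum_{\loc' \in \Loc'}$} \delta_A(s_A^\phi,\bar a)(\loc') \mbox{$\sum_{i=1}^{N_e}$} \mu_i \, \alpha^{s_A'}\big(s_A', \delta_E^i(s_E,\bar a)\big),
\]
where for each $(\loc',i)$ the agent state $s_A' = (\loc', \obs_A(\loc', \delta_E^i(s_E,\bar a)))$ is the unique percept-compatible successor, so that the inner sum over $s_A'$ and over $\Theta_{s_E}^{\bar a} \cap S_E^{s_A'}$ inside $\mathit{bval}$ collapses to a single term. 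The coefficients $\delta_A(s_A^\phi,\bar a)(\loc')$ and $\mu_i$ are already constant on $\phi$ because $s_A^\phi$ is fixed there; hence it suffices to show that on each returned region $\phi_{\textup{product}}$ the reward term and every summand $\alpha^{s_A'}(s_A', \delta_E^i(s_E,\bar a))$ is constant.

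Then I would track what each phase of \algoref{alg:ISPP-backup} guarantees, fixing a pair $(\loc', i)$. The \emph{image} step forms $\phi_E' = \delta_E^i(\phi,\bar a)$, and dividing it by $\obs_A(\loc',\cdot)$ yields image regions on each of which the percept $\per'$, hence the successor agent state $s_A' = (\loc',\per')$, is constant; the \emph{split} step intersects each such region with a constant-FCP of the PWC function $\alpha^{s_A'}$, so on every $\phi_{\textup{image}} \in \Phi_{\textup{split}}$ both $s_A'$ and the value $\alpha^{s_A'}(s_A', \cdot)$ are constant. The \emph{preimage} step pulls these refined regions back through $\delta_E^i(\cdot,\bar a)$ to a partition of $\phi$, so that any state in a single preimage region maps under $\delta_E^i(\cdot,\bar a)$ into one $\phi_{\textup{image}}$; and the \emph{product} step accumulates, over all $(\loc',i)$, the common refinement of these preimage partitions, with a final intersection against the reward FCP $\Phi_R^{\bar a}$ (\aspref{asp:PWC-rewards}) to fix $R_{\bar a}$. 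Consequently, for any $\phi_{\textup{product}}$, any $(\loc',i)$, and any $s_E,\tilde s_E$ with $(s_A^\phi, s_E),(s_A^\phi,\tilde s_E) \in \phi_{\textup{product}}$, the images $\delta_E^i(s_E,\bar a)$ and $\delta_E^i(\tilde s_E,\bar a)$ lie in the same refined image region, so they yield the same $s_A'$ and the same $\alpha^{s_A'}$-value; together with constancy of $R_{\bar a}$ this makes every term in the displayed expression identical, i.e.\ $\alpha^{\star}$ is constant on $\phi_{\textup{product}}$, establishing that $\Phi_{\textup{product}}$ is a constant-FCP of $\phi$ for $\alpha^{\star}$.

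For the covering and partition requirement I would argue that each operation preserves a finite cover of $\phi$: the preimage of a cover of the image under $\delta_E^i(\cdot,\bar a)$ covers $\phi$, and pairwise intersection (product) of finite covers is again a finite cover, so $\Phi_{\textup{product}}$ covers $\phi$; disjointness and connectedness are inherited from the underlying polyhedral representation, splitting any intersection into its connected components when necessary. The second claim is then immediate: since $\alpha^{\star}$ is constant on each $\phi_{\textup{product}}$, evaluating the right-hand side of \eqref{eq:backup-value} at a single representative $(\hat s_A,\hat s_E) \in \phi_{\textup{product}}$ returns the exact value of $\alpha^{\star}$ at every state of that region, so the region-by-region backup reproduces \eqref{eq:backup-value} over all of $\phi$ in finitely many evaluations.

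I expect the main obstacle to be the careful bookkeeping across the two nested sums, over reachable local states $\loc'$ (equivalently successor agent states) and over the $N_e$ deterministic branches $\delta_E^i$, and in verifying that forming a single common refinement $\Phi_{\textup{product}}$ simultaneously fixes the successor percept, the relevant $\alpha^{s_A'}$-value, and the reward for all $(\loc', i)$ at once. The subtlety is that the percept selecting which $s_A'$ contributes is itself a function of the image point, so the division by $\obs_A(\loc',\cdot)$ must be performed before, and be compatible with, the split by the constant-FCP of $\alpha^{s_A'}$.
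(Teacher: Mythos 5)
Your proof is correct and follows essentially the same route as the paper's: both argue that the image/split/preimage/product operations of \algoref{alg:ISPP-backup} force the reward and every successor $\alpha$-value to be constant on each region of $\Phi_{\textup{product}}$, so that the backup \eqref{eq:backup-value} can be evaluated region by region at a single representative. Your version is simply more explicit --- unfolding $\mathit{bval}$ via $\delta_E = \sum_{i=1}^{N_e}\mu_i\delta_E^i$ and collapsing the sum over $s_A'$ to the unique percept-compatible successor per $(\loc',i)$ pair, and noting the connected-components detail --- where the paper states the same facts at a higher level via the product FCP $\Phi = \sum_{s_A' \in \bar{S}_A}\Phi_{s_A'}$ and the preimage-FCP guaranteed by \aspref{asp:transitions}.
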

Computing $\mathit{bval}((\hat{s}_A,\hat{s}_E), a, s_A', \alpha^{s_A'})$ in the value backup requires $\alpha^{s_A'}(s_A', s_E')$. To obtain this value, we need to find the region in the constant-FCP for $\alpha^{s_A'}$ containing $(s_A', s_E')$.
To improve efficiency,
instead of searching,
we record the region connections during ISPP.

\begin{figure}[t]
    \centering
    \begin{tikzpicture}[scale=0.8, transform shape]
        \def\x{0}
        \draw[very thin] (\x+0,0) rectangle (\x+4,4);
        \node[teal, thick, regular polygon, regular polygon sides=5, draw, inner sep=0.25cm] (r) at (\x+1,0.5) {};
        \node[] at(\x+1,-0.35) {\large $\phi$};
        %%%
        \def\x{4.5}
        \draw[very thin] (\x+0,0) rectangle (\x+4,4);
        \draw[very thin] (\x+3,4) -- (\x+2,0);
        \draw[very thin] (\x+0,2) -- (\x+2.5,2);
        \node[orange, thick, densely dotted, regular polygon, regular polygon sides=5, draw, inner sep=0.25cm] (r11) at (\x+1,2) {};
        \node[purple, thick, densely dotted, regular polygon, regular polygon sides=5, draw, inner sep=0.25cm] (r21) at (\x+2.25,0.5) {};
        %\draw[orange, thin, densely dotted,-{Latex[length=1.5mm]}] (r) |- (r11.north west) node [pos=0.8, yshift=0.25cm]{\small $\delta_1$};
        %\draw[purple, thin, densely dotted,-{Latex[length=1.5mm]}] (r.east) -- (r21.west) node [pos=0.45, yshift=0.25cm]{\small $\delta_2$};
        \draw[orange, thin, densely dotted,-{Latex[length=1.5mm]}] (r) |- (r11.north west) node [pos=0.8, yshift=0.25cm]{\small $0.6$};
        \draw[purple, thin, densely dotted,-{Latex[length=1.5mm]}] (r.east) -- (r21.west) node [pos=0.43, yshift=0.25cm]{\small $0.4$};
        \node[] at(\x+0.25,3.75) {$10$};
        \node[] at(\x+0.25,0.25) {$20$};
        \node[] at(\x+3.8,3.75) {$5$};
        \node[] at(\x+2,-0.35) {\large $\Phi_{R}^a$};
        %%%
        \def\x{9.0}
        \node[orange, thick, densely dotted, regular polygon, regular polygon sides=5, draw, inner sep=0.25cm] (r12) at (\x+1,2) {};
        \draw[orange, thick, densely dotted] (r12.west) --(r12.east);
        \node[purple, thick, densely dotted, regular polygon, regular polygon sides=5, draw, inner sep=0.25cm] (r22) at (\x+2.25,0.5) {};
        \draw[purple, thick, densely dotted] (r22.north) -- (r22.corner 3);
        \draw[thin, densely dotted,-{Latex[length=1.5mm]}] (r11.north east) -- (r12.north west) node [pos=0.8, yshift=0.25cm]{}; %{\small $0.6$};
        \draw[thin, densely dotted,-{Latex[length=1.5mm]}] (r21.east) -- (r22.west) node [pos=0.475, yshift=0.25cm]{}; %{\small $0.4$};
        \node[draw, circle, fill=black, inner sep=0.5pt] () at (\x+1,2.15) (v1) {};
        \node[draw, circle, fill=black, inner sep=0.5pt] () at (\x+1,1.85) (v2) {};
        \node[draw, circle, fill=black, inner sep=0.5pt] () at (\x+2.0,0.6) (v3) {};
        \node[draw, circle, fill=black, inner sep=0.5pt] () at (\x+2.4,0.6) (v4) {};
        \draw[thin,] (v1.center) -- (\x+1,2.75) node [pos=1.25] {\small $6$};
        \draw[thin,] (v2.center) -- (\x+1,1.25) node [pos=1.25] {\small $12$};
        \draw[thin,] (v3.center) -- (\x+2.0,1.2) node [pos=1.25] {\small $8$};
        \draw[thin,] (v4.center) -- (\x+2.4,1.2) node [pos=1.25] {\small $2$};
        %%%
        \def\x{14.0}
        \node[teal, thick, regular polygon, regular polygon sides=5, draw, inner sep=0.25cm] (r1) at (\x+1,0.5) {};
        \draw[teal, thick, densely dotted] (r1.west) -- (r1.east);
        \draw[teal, thick, densely dotted] (r1.north) -- (r1.corner 3);
        \draw[very thin, densely dotted, -{Latex[length=1.5mm]}] (r12.north east) -| (r1.north);
        \draw[very thin, densely dotted, -{Latex[length=1.5mm]}] (r22.east) -- (r1.west);
        \node[draw, circle, fill=black, inner sep=0.5pt] () at (\x+0.75,0.65) (v13) {};
        \node[draw, circle, fill=black, inner sep=0.5pt] () at (\x+1.1,0.65) (v14) {};
        \node[draw, circle, fill=black, inner sep=0.5pt] () at (\x+0.75,0.4) (v23) {};
        \node[draw, circle, fill=black, inner sep=0.5pt] () at (\x+1.1,0.4) (v24) {};
        \draw[thin,] (v13.center) -- (\x+0.6,1.0) node [pos=1.6] {\small $14$};
        \draw[thin,] (v14.center) -- (\x+1.25,1.0) node [pos=1.6] {\small $8$};
        \draw[thin,] (v23.center) -- (\x+0.45,0.2) node [pos=1.6] {\small $20$};
        \draw[thin,] (v24.center) -- (\x+1.4,0.2) node [pos=1.6] {\small $14$};
        \node[] at(\x+1,-0.35) {\large $\phi'$};
    \end{tikzpicture}
    \vspace*{-0.2cm}
    \caption{\revise An example of region refinement during the ISSP algorithm (see Example~\ref{ex:refine}).}\label{region-refinement-fig}
\end{figure}

\begin{examp}\label{ex:refine}
{\revise In \figref{region-refinement-fig} we illustrate how a single region can be refined under an action $a$. The first step shows how performing action $a$ leads to two different regions being reached with probability $0.4$ and $0.6$, respectively. In the next step, 
these regions are then split based on $\Phi_R^{a}$, the reward FCP under action $a$, which has three regions, with reward values 5, 10 and 20 respectively, multiplying the probability and reward values for each region. In the last step, we take the intersections of the regions obtained for the different probabilistic transitions and a summation is performed for each subregion, yielding the average reward for these regions.}
\end{examp}

\startpara{Upper bound updates} For the upper bound $V_{\mathit{ub}}^{\Upsilon}$, 
working with the representation given in \eqref{eq:new-ub}, at a belief $(s_A, b_E) \in S_B$ in each iteration we add a new belief-value point $((s_A, b_E), p^{\star})$ to $\Upsilon$ 
such that $p^{\star} = [TV_{\mathit{ub}}^{\Upsilon}](s_A, b_E)$. The following lemma shows that $p^{\star} \ge V^{\star} (s_A, b_E)$ required by \eqref{eq:new-ub}, the upper bound function is decreasing uniformly, is valid after each update, and performs no worse than the Bellman backup at the current belief.

\begin{lema}[Upper bound]\label{lema:upper-bound-update}
Given belief $(s_A, b_E) \in S_{B}$, if $p^{\star} = [TV_{\mathit{ub}}^{\Upsilon}](s_A, b_E)$, then $p^\star$ is an upper bound of $V^{\star}$ at $(s_A, b_E)$, i.e., $p^{\star} \ge V^{\star} (s_A, b_E)$, and if $\Upsilon' = \Upsilon \cup \{ ((s_A, b_E), p^{\star}) \}$, then $V_{\mathit{ub}}^{\Upsilon} \ge V_{\mathit{ub}}^{\Upsilon'} \ge V^{\star}$ and $V_{\mathit{ub}}^{\Upsilon'}(s_A, b_E) \leq [TV_{\mathit{ub}}^{\Upsilon}](s_A, b_E)$.
\end{lema}

\begin{algorithm}[tb]
\caption{NS-HSVI for NS-POMDPs}
\label{alg:NS-HSVI}
\begin{algorithmic}[1] %[1] enables line numbers
\State Initialise $V_{\mathit{lb}}^{\Gamma}$  and $V_{\mathit{ub}}^{\Upsilon}$
\While{$V_{\mathit{ub}}^{\Upsilon}((s_{A}^{\mathit{init}}, b_{E}^{\mathit{init}}) - V_{\mathit{lb}}^{\Gamma}(s_{A}^{\mathit{init}}, b_{E}^{\mathit{init}}) > \varepsilon$} $\mathit{Explore}((s_{A}^{\mathit{init}}, b_{E}^{\mathit{init}}), \varepsilon, 0)$
\EndWhile
\Function{$\mathit{Explore}$}{$(s_A, b_E), \varepsilon, t$}
\If{$V_{\mathit{ub}}^{\Upsilon}(s_A, b_E) - V_{\mathit{lb}}^{\Gamma}(s_A, b_E) \leq \varepsilon \beta^{-t}$} \textbf{return}
\EndIf
\For{$a \in \Delta_A(s_A)$, $s_A' \in S_A$} 
\State $p^{a, s_A'} \leftarrow P(s_A' \mid (s_A,b_E), a)  V_{\mathit{ub}}^{\Upsilon}(s_A', b_E^{s_A, a, s_A'})$
\EndFor
\State $\hat{a} \leftarrow \argmax_{a \in \Delta_A(s_A)} \langle R_a, (s_A, b_E) \rangle + \beta \sum_{s_A' \in S_A} p^{a, s_A'}$
\State $\mathit{Update}(s_A, b_E)$
 
\State $\hat{s}_A \leftarrow \argmax\nolimits_{s_A' \in S_A}  \mathit{excess}_{t+1}(s_A', b_E^{s_A, \hat{a}, s_A'})$
 
\State $\mathit{Explore}((\hat{s}_A, b_E^{s_A, \hat{a}, \hat{s}_A}), \varepsilon, t+1)$
 
\State $\mathit{Update}(s_A, b_E)$
\EndFunction
\end{algorithmic}
\end{algorithm}

% \startpara{NS-HSVI algorithm} 
\subsection{NS-HSVI Algorithm}
\algoref{alg:NS-HSVI} presents the NS-HSVI algorithm for NS-POMDPs. Similarly to the heuristic search in HSVI \cite{TS-RS:04}, the algorithm (lines 5--7) selects an action $\hat{a}$ greedily according to the upper bound at belief $(s_A, b_E) \in S_B$, i.e., $\hat{a}$ is a maximiser when computing $[TV_{\mathit{ub}}^{\Upsilon}](s_A, b_E)$. Furthermore, given $\varepsilon > 0$, it selects an agent state $\hat{s}_A$ (observation) with the highest weighted excess approximation gap (line 9), denoted $\mathit{excess}_{t+1}(s_A', b_E^{s_A, \hat{a}, s_A'})$, which equals:
\[
P(s_A' \mid (s_A,b_E), \hat{a}) \big( V_{\mathit{ub}}^{\Upsilon}(s_A', b_E^{s_A, \hat{a}, s_A'}) -  V_{\mathit{lb}}^{\Gamma}(s_A', b_E^{s_A, \hat{a}, s_A'}) - \varepsilon \beta^{t+1} \big)
\]
where $t$ is the depth of $(s_A, b_E)$ from the initial belief $s_B^{\mathit{init}} = (s_A^{\mathit{init}}, b_E^{\mathit{init}}) \in S_B$.
NS-HSVI has the following convergence guarantees. 

\begin{thom}[NS-HSVI]\label{thom:NS-HSVI}
\algoref{alg:NS-HSVI} will terminate and upon termination:
\begin{enumerate} %[label=(\roman*)]
    
    \item\label{itm:HSVI-2a} $V_{\mathit{ub}}^{\Upsilon}(s_B^{\mathit{init}}) - V_{\mathit{lb}}^{\Gamma}(s_B^{\mathit{init}}) \leq \varepsilon$;
    \item\label{itm:HSVI-2b} $V_{\mathit{lb}}^{\Gamma}(s_B^{\mathit{init}}) \leq V^{\star}(s_B^{\mathit{init}}) \leq V_{\mathit{ub}}^{\Upsilon}(s_B^{\mathit{init}}) $;
    
    \item\label{itm:HSVI-3} $V^{\star}(s_B^{\mathit{init}}) - \mathbb{E}^{\hat{\sigma}}_{s_B^{\mathit{init}}}[Y] \leq \varepsilon$ where $\hat{\sigma}$ is the one-step lookahead strategy from $V_{\mathit{lb}}^{\Gamma}$.
\end{enumerate}
\end{thom}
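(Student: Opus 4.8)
The plan is to follow the standard HSVI convergence argument of \cite{TS-RS:04}, adapted to the continuous-state NS-POMDP setting, relying on the bound-update guarantees already established. The three claims are interrelated: termination together with the termination condition gives claim~\ref{itm:HSVI-2a}, the soundness of the bounds gives claim~\ref{itm:HSVI-2b}, and a one-step-lookahead argument gives claim~\ref{itm:HSVI-3}. I would treat soundness first, since it underpins the others.

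\textbf{Soundness of the bounds (claim~\ref{itm:HSVI-2b}).} This is an invariant maintained throughout the algorithm. Initially $V_{\mathit{LB}}^{\Gamma} \le V^{\star} \le V_{\mathit{UB}}^{\Upsilon}$ by the bound initialisation: the blind-strategy constant $R_{\mathit{LB}}$ is a valid global lower bound and $U$ is a valid global upper bound by \eqnref{lu-eqn}. I would then argue that every call to $\mathit{Update}$ preserves the invariant: \lemaref{lema:new-pwc-alpha} gives $V_{\mathit{LB}}^{\Gamma} \le V_{\mathit{LB}}^{\Gamma'} \le V^{\star}$ and \lemaref{lema:upper-bound-update} gives $V^{\star} \le V_{\mathit{UB}}^{\Upsilon'} \le V_{\mathit{UB}}^{\Upsilon}$. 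Since $\mathit{Explore}$ only modifies the bounds through $\mathit{Update}$, a simple induction on the number of updates yields claim~\ref{itm:HSVI-2b} at every point, and in particular at $s_B^{\mathit{init}}$ upon termination.

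\textbf{Termination and claim~\ref{itm:HSVI-2a}.} Claim~\ref{itm:HSVI-2a} is immediate from the while-loop guard \emph{provided} the algorithm terminates, so the real work is proving termination. The key mechanism is the depth-dependent threshold $\varepsilon\beta^{-t}$: I would show that along any recursion branch the approximation gap at the explored belief must exceed $\varepsilon\beta^{-t}$ to recurse (line~5), while the discount factor $\beta<1$ forces $\varepsilon\beta^{-t}$ to grow geometrically with depth. Because all rewards are bounded, the gap $V_{\mathit{UB}}^{\Upsilon}-V_{\mathit{LB}}^{\Gamma}$ is bounded above by $U-L$ everywhere, so no branch can recurse beyond a finite maximal depth $t_{\max}$ with $\varepsilon\beta^{-t_{\max}} > U-L$; this bounds the depth of the exploration tree. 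The harder part is bounding the \emph{breadth}: I would argue, following \cite{TS-RS:04}, that the greedy action/observation selection (lines~5--9, choosing the observation of maximal weighted excess) together with the ``performs no worse than the Bellman backup'' guarantees in \lemaref{lema:new-pwc-alpha} and \lemaref{lema:upper-bound-update} ensures that each bottom-level backup makes a quantifiable reduction in the gap at the parent belief, which propagates up the tree via the contraction of $T$. Since finitely many updates can occur before the root gap drops below $\varepsilon$, the while loop executes finitely often. \textbf{I expect this finiteness-of-updates argument to be the main obstacle}, because the continuous-state and infinite-dimensional belief space means the usual finite-state HSVI counting does not transfer directly; one must instead lean entirely on the uniform gap-reduction and the depth bound rather than on finiteness of the belief set, and verify that the weighted-excess heuristic still guarantees a strictly positive reduction at each explored node.

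\textbf{Strategy quality (claim~\ref{itm:HSVI-3}).} For the one-step-lookahead strategy $\hat\sigma$ induced by $V_{\mathit{LB}}^{\Gamma}$, I would bound the true value $\mathbb{E}^{\hat\sigma}_{s_B^{\mathit{init}}}[Y]$ from below by $V_{\mathit{LB}}^{\Gamma}(s_B^{\mathit{init}})$, using that $\hat\sigma$ acts greedily with respect to a sound lower bound and the contraction property of $T$ to unroll the recursion: standard HSVI theory gives $\mathbb{E}^{\hat\sigma}_{s_B^{\mathit{init}}}[Y] \ge V_{\mathit{LB}}^{\Gamma}(s_B^{\mathit{init}})$. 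Combining with claims~\ref{itm:HSVI-2a} and~\ref{itm:HSVI-2b},
\[
V^{\star}(s_B^{\mathit{init}}) - \mathbb{E}^{\hat\sigma}_{s_B^{\mathit{init}}}[Y] \le V_{\mathit{UB}}^{\Upsilon}(s_B^{\mathit{init}}) - V_{\mathit{LB}}^{\Gamma}(s_B^{\mathit{init}}) \le \varepsilon,
\]
which is exactly claim~\ref{itm:HSVI-3}. The only continuous-state subtlety here is confirming that the lower-bound backup for $\hat\sigma$ remains valid under the P-PWLC representation, which is guaranteed by \lemaref{lema:new-pwc-alpha}.
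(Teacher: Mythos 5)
Your skeleton matches the paper's (soundness via \lemaref{lema:new-pwc-alpha} and \lemaref{lema:upper-bound-update}, termination via a depth bound plus a breadth argument, claim~\ref{itm:HSVI-3} via one-step lookahead), but two steps contain genuine gaps. The first is termination. You assert that the finite-state HSVI counting ``does not transfer directly'' because the belief space is continuous and infinite-dimensional, and you propose to rely instead on a gap-reduction-propagated-by-contraction argument, which you yourself flag as unresolved. The premise is wrong, and the unresolved step is exactly where your proof would founder. What the counting argument needs is not finiteness of the belief space but finiteness of the set of beliefs reachable from $s_B^{\mathit{init}}$ within the maximal depth $t_{\max}$ (the depth at which $\varepsilon\beta^{-t}$ exceeds $U-L$), and this holds for NS-POMDPs: $\Act$ and the observations (agent states $S_A$) are finite and the belief update $b_E^{s_A,a,s_A'}$ is deterministic per pair $(a,s_A')$, so at most $(|\Act|\,|S_A|)^{t_{\max}}$ beliefs can ever be explored. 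Together with monotone bound improvement and the ``no worse than the Bellman backup'' properties of the two lemmas, this is precisely what lets the paper invoke the finite-state termination theorem \cite[Theorem~6.8]{TS-07} --- note that the paper's proof explicitly lists finite branching of $\delta$ as a precondition. Your alternative route is doubtful without that finiteness: a gap reduction at an individual explored belief does not obviously accumulate into progress at the root unless trials are confined to revisiting a finite set of (belief, depth) pairs.

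The second gap is in claim~\ref{itm:HSVI-3}. You derive $\mathbb{E}^{\hat\sigma}_{s_B^{\mathit{init}}}[Y] \ge V_{\mathit{LB}}^{\Gamma}(s_B^{\mathit{init}})$ ``using that $\hat\sigma$ acts greedily with respect to a sound lower bound''. Soundness, $V_{\mathit{LB}}^{\Gamma} \le V^{\star}$, is not sufficient for this: a one-step-lookahead policy with respect to an arbitrary sound lower bound can achieve strictly less than the bound at the root (an overly pessimistic estimate of a successor belief can steer the greedy choice away from the optimal action forever, so the policy never collects the value promised at the root). The property actually needed is uniform improvability, $V_{\mathit{LB}}^{\Gamma} \le [TV_{\mathit{LB}}^{\Gamma}]$, and this is an invariant that must be verified for the updates of \algoref{alg:point-based-update-belief}; it does not follow from the statement of \lemaref{lema:new-pwc-alpha}, which is all you cite. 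The paper establishes it via the global inequality $\langle \alpha^{\star}, \cdot \rangle \leq [TV_{\mathit{LB}}^{\Gamma}](\cdot)$ of \eqref{eq:lb-alpha-v-1} --- which holds everywhere, including the regions where \algoref{alg:point-based-update-belief} assigns the trivial value $L$ and those where the backup is suboptimal --- so that adding $\alpha^{\star}$ preserves the invariant, and only then applies \cite[Theorem~3.18]{TS-07}. Without this step your final chain of inequalities does not go through.
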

\begin{proof}
Given belief $(s_A, b_E) \in S_B$, through Lemma~\ref{lema:new-pwc-alpha} after updating a lower bound $V_{\mathit{lb}}^{\Gamma}$ we have:
\begin{equation}\label{convergence1-eqn}
V_{\mathit{lb}}^{\Gamma} \leq V_{\mathit{lb}}^{\Gamma'} \leq V^{\star} \quad \mbox{and} \quad V_{\mathit{lb}}^{\Gamma'}(s_A, b_E) \ge [TV_{\mathit{lb}}^{\Gamma}](s_A, b_E)
\end{equation}
and through Lemma~\ref{lema:upper-bound-update} after updating an upper bound $V_{\mathit{ub}}^{\Upsilon}$, we have:
\begin{equation}\label{convergence2-eqn}
V_{\mathit{ub}}^{\Upsilon} \ge V_{\mathit{ub}}^{\Upsilon'} \ge V^{\star} \quad \mbox{and} \quad V_{\mathit{ub}}^{\Upsilon'}(s_A, b_E) \leq [TV_{\mathit{ub}}^{\Upsilon}](s_A, b_E) \, .
\end{equation}
Now, since $V_{\mathit{lb}}^{\Gamma}$ and $V_{\mathit{ub}}^{\Upsilon}$ are initially bounded and from Lemmas~\ref{lema:new-pwc-alpha} and \ref{lema:upper-bound-update} are uniformly improvable, $\delta$ has finite branching and $\beta \in (0, 1)$, using  \cite[Theorem 6.8]{TS-07} we have that \algoref{alg:NS-HSVI} terminates after finitely many steps. 

Next, combining  \eqnref{convergence1-eqn} and \eqnref{convergence2-eqn}, and using \cite[Section 6.5]{TS-07} both \ref{itm:HSVI-2a} and  \ref{itm:HSVI-2b} follow directly. Finally, concerning \ref{itm:HSVI-3}, by \eqref{eq:lb-alpha-v-1}, we have 
\begin{align}\label{eq:hsvi-alpha-tv}
    \langle \alpha^{\star}, (\hat{s}_A,\hat{s}_E) \rangle \leq [TV_{\mathit{lb}}^{\Gamma}](\hat{s}_A,\hat{s}_E)
\end{align}
for all $(\hat{s}_A,\hat{s}_E) \in S_B$. If $V_{\mathit{lb}}^{\Gamma} \leq [TV_{\mathit{lb}}^{\Gamma}]$, we have $V_{\mathit{lb}}^{\Gamma'} \leq [TV_{\mathit{lb}}^{\Gamma}]$ using \eqref{eq:hsvi-alpha-tv}. Then, since \algoref{alg:NS-HSVI} 
terminates, according to \cite[Theorem 3.18]{TS-07}:
\begin{align*}
   V^{\star}(s_{A}^{\mathit{init}}, b_{E}^{\mathit{init}}) - \mathbb{E}^{\hat{\sigma}}_{(s_{A}^{\mathit{init}}, b_{E}^{\mathit{init}})}[Y] 
   \leq V_{\mathit{ub}}^{\Upsilon}(s_{A}^{\mathit{init}}, b_{E}^{\mathit{init}}) - V_{\mathit{lb}}^{\Gamma}(s_{A}^{\mathit{init}}, b_{E}^{\mathit{init}}) \leq \varepsilon
\end{align*}
which completes the proof. 
\end{proof}

\startparazero{Pruning} We apply the following pruning to speed up \algoref{alg:NS-HSVI}. First, a new $\alpha$-function $\alpha^{\star}$ is added to $\Gamma$ at belief $(s_A, b_E)$ in each update only if  $\alpha^{\star}$ strictly improves the value at $(s_A, b_E)$, i.e., $\langle \alpha^{\star},(s_A, b_E) \rangle > V_{\mathit{lb}}^{\Gamma}(s_A, b_E)$. This leads to fewer $\alpha$-functions in $\Gamma$ without changing convergence, and thus faster lower bound computation. Second, for the heuristic search, since the action $\hat{a}$ (line 6) maximising the upper bound backup may not be unique and different $\hat{a}$ could result in different maximum gaps (line 8), we keep all maximisers and select the pair $(\hat{a}, \hat{s}_A)$ with the largest gap. We find this new excess heuristic to be empirically superior, as it tends to reduce the uncertainty the most.

\startpara{Convergence} Each belief update of \algoref{alg:NS-HSVI} is focused on a single belief, and therefore the number of iterations can be higher than value iteration;
on the other hand, iterations are cheaper to perform. In the finite-state case,
an upper bound on the number of HSVI  iterations required can be calculated \cite[Theorem 6.8]{TS-07}. However, such analysis would be difficult in our setting, as the number of points to update depends on the initial beliefs and which beliefs are updated at each iteration,
% the dynamics of the system, 
and varies as the algorithm progresses.

\subsection{Two Belief Representations}
An implementation of the NS-HSVI algorithm crucially depends on the representations of beliefs, 
as a closed form is needed when computing belief $\smash{b_E^{s_A, a, s_A'}}$, expected values $\langle \alpha, (s_A, b_E) \rangle$ and $\langle R_a, (s_A, b_E) \rangle$, probability $P(s_A' \mid (s_A,b_E), a)$ and upper bound $V_{\mathit{ub}}^{\Upsilon}(s_A,b_E)$. 
We first consider the popular particle-based belief representation and then propose a region-based belief representation to overcome the problem of requiring many particles to converge in the particle-based representation~\cite{DC-AD:02}.  

\startpara{Particle-based beliefs} 
Particle-based representations have been widely used in applications from computer vision \cite{AD-NDF-NJG:01}, robotics \cite{ML-DH-PJ:22,JMP-NV-MTS-PP:06} to machine learning \cite{XM-PK-DH-WL:20}. They can approximate arbitrary beliefs (given sufficient particles), handle nonlinear and non-Gaussian systems, and allow efficient computations.

\begin{defi}[Particle-based belief]\label{defi:particle-based-belief}
A belief $(s_A, b_E) \in S_B$ is represented by a weighted particle set $\{ (s_E^i, w_i) \}_{i=1}^{n_b}$ with normalised weights if
\[
b_E(s_E) = \mbox{$\sum\nolimits_{i = 1}^{n_b}$} w_i D(s_E - s_E^i)
\]
where $w_i >0$, $s_E^i \in S_E$ for all $1 \leq i \leq n_b$ and $D(s_E - s_E^i)$ is a Dirac delta function centred at $0$. Let $B(s_E)$ be a small neighbourhood of $s_E$, and $P(s_E;b_E) = \int_{s_E' \in B(s_E)} b_E(s_E') \textup{d} s_E' $ be the probability of particle $s_E$ under $b_E$.
% We denote by $S_{\mathit{PB}}$ the set of particle-based beliefs.
\end{defi}
Given an initial particle-based belief $(s_{A}^\mathit{init}, b_{E}^\mathit{init})$, the number of states reachable in any finite number of steps is finite, and therefore standard methods for finite-state POMDPs can be used to solve the resulting finite-state game tree, similarly to \cite{YSD+22} under fully-observable strategies.  However, the size of the game tree can increase exponentially as the number of steps increases, particularly given that the reachable states are likely to be distinct due to the continuous-state space.

To implement NS-HSVI given in Algorithm~\ref{alg:NS-HSVI} using particle-based beliefs, we must demonstrate that $V_{\mathit{lb}}^{\Gamma}$ and $V_{\mathit{ub}}^{\Upsilon}$ are eligible representations~\cite{JMP-NV-MTS-PP:06} for particle-based beliefs, that is, there are closed forms for the quantities of interest. For a particle-based belief $(s_A, b_E)$ with weighted particle set $\{ (s_E^i, w_i) \}_{i=1}^{n_b}$, it follows from \eqref{eq:belief-update} that for belief $b_E^{s_A, a, s_A'}$ we have, for any $s_E' \in S_E$, $b_E^{s_A, a, s_A'}(s_E')$ equals:
\begin{equation}\label{eq:pb-belief-update}
\frac{\sum\nolimits_{i = 1}^{n_b}   w_i \delta_E(s_E^i, a)(s_E')}{\sum\nolimits_{i = 1}^{n_b} w_i \sum_{s_{\scale{.75}{E''}} \in S_{\scale{.75}{E}}^{\scale{.75}{{s_A'}}}} \delta_E(s_E^i, a)(s_E'')} \; \mbox{if $s_E' \in S_E^{s_A'}$ and equals 0 otherwise.} 
\end{equation}
Similarly, we can compute $\langle \alpha, (s_A, b_E) \rangle$, $\langle R_a, (s_A, b_E) \rangle$ and $P(s_A' \mid (s_A,b_E), a)$ as simple summations.
\iffalse
Considering any $\alpha \in \Gamma$, we have that:
\begin{eqnarray*}
\langle \alpha, (s_A, b_E) \rangle &=& \mbox{$\sum_{s_E \in \supp(b_E)}$} \alpha(s_A, s_E) b_E(s_E) \\
\langle R_a, (s_A, b_E) \rangle &=& \mbox{$\sum_{s_E \in \supp(b_E)}$} R_a(s_A, s_E) b_E(s_E) \, .
\end{eqnarray*}
Furthermore, using \eqref{eq:probability-obs-agent-state}, probability $P(s_A' \mid (s_A,b_E), a)$ is:
% of observing $s_A'$ at $(s_A, b_E)$ under action $a$ 
\[
\delta_A(s_A, a)(\loc') \left( \mbox{$\sum\limits_{s_E \in \supp(b_E)}$} b_E(s_E) \mbox{$\!\!\!\sum\limits_{s_E' \in S_E^{s_A'}}$} \delta_E(s_E, a)(s_E') \right)
\] 
where $s_A' = (\loc', \per')$.
\fi
It remains to compute $V_{\mathit{ub}}^{\Upsilon}$ in \eqref{eq:new-ub}, which we achieve by designing a function $K_{\mathit{ub}}$ that measures belief differences that satisfy \eqref{eq:K-UB-condition}. However, \eqref{eq:K-UB-condition} is hard to check  as, for beliefs $b_E$ and $b_E'$, calculating $K(b_E,b_E')$ involves the integral over the region $S_E^{s_{\scale{.75}{A}}}$.
For particle-based beliefs, we propose the function $K_{\mathit{ub}}$ where:
\begin{equation}\label{eq:pb-k-ub}
    K_{\mathit{ub}}(b_E, b_E') = (U - L)  n_b \max\nolimits_{s_E \in S_E \wedge b_E(s_E) > 0} |P(s_E; b_E) - P(s_E; b_E')|
\end{equation}
where $n_b$ is the number of particles in $b_E$, which is shown to satisfy \eqref{eq:K-UB-condition} and, given $\Upsilon = \{ ((s_A^i, b_E^i), y_i) \mid i \in I\}$, the upper bound can be computed by solving a linear program (LP) as demonstrated by the following lemma.

\begin{lema}[LP for upper bound]\label{lema:pb-upper-bound}
The function $K_{\mathit{ub}}$ from \eqref{eq:pb-k-ub} satisfies \eqref{eq:K-UB-condition}, and for particle-based belief $(s_A, b_E)$ represented by $\{ (s_E^i, w_i) \}_{i=1}^{n_b}$, we have that $V_{\mathit{ub}}^{\Upsilon}(s_A,b_E)$ is the optimal value of the LP: 
\[    \begin{array}{rl}
    \textup{minimize:} \; & \sum_{k \in I_{s_{\scale{.75}{A}}}} \lambda_k y_k  + (U - L) n_b c \\
  \textup{subject to:} \; & c \ge | w_i - \sum_{k \in I_{s_{\scale{.75}{A}}}} \lambda_k P(s_E^i; b_E^k) |  \textup{ for } 1 \leq i \leq n_b \\
  & \lambda_k \ge 0 \textup{ for } k \in I_{s_{\scale{.75}{A}}} \; \mbox{and} \; \sum_{k \in I_{s_{\scale{.75}{A}}}} \lambda_k = 1\, .
\end{array}
\]
\end{lema}
Since all quantities of interest in \algoref{alg:NS-HSVI} are computed exactly, the convergence guarantee in \thomref{thom:NS-HSVI} holds for any initial particle-based belief.

\startpara{Region-based beliefs} Particle filter approaches \cite{AD-NDF-NJG:01} are required to approximate the updated belief of particle-based representations if the current belief has zero weight at the true state due to partial observations and random perturbations. However, for NS-POMDPs the usual sampling importance re-sampling (SIR) approach \cite{AD-SG-CA:00} requires many particles, which can be computationally expensive. 
Therefore, we propose a new belief representation using {\em regions} of the continuous state space and show that it performs well empirically in handling the uncertainties.  For any connected subset (region) $\phi_E \subseteq S_E$, let $\textup{vol}(\phi_E) = \int_{s_E \in \phi_E} \textup{d} s_E$. 

\begin{defi}[Region-based belief]\label{defi:region-based-belief} A belief $(s_A, b_E) \in S_B$ is represented by a weighted region set $\{ (\phi_E^i, w_i) \}_{i = 1}^{n_b}$ if
$
b_E(s_E) = \mbox{$\sum\nolimits_{i = 1}^{n_b}$} \chi_{\phi_E^i}(s_E) w_i
$, 
where  $w_i >0$, $\phi_E^i$ is a region of $S_E^{s_A}$ and $\chi_{\phi_E^i} : S_E \to \mathbb{R} $ is such that $\chi_{\phi_E^i}(s_E) = 1$ if $s_E \in \phi_E^i$ and $0$ otherwise for $1 \leq i \leq n_b$, and $\sum_{i=1}^{n_b} w_i \textup{vol}(\phi_E^i) = 1$.
\end{defi}
In the case of region-based beliefs, finite-state POMDPs are not applicable even when approximating by finding all reachable states up to some finite depth, as from an initial (region-based) belief this would yield infinitely many reachable states.
Region-based beliefs assume a uniform distribution over each region and allow the regions to overlap. Ensuring that belief updates of region-based beliefs result in region-based beliefs is difficult~\cite{AP-SUP:02}, as even simple transitions of variables with simple distributions can lead to complex distributions. \aspref{asp:transitions-rewards} only ensures a finite partitioning of the state space for the transitions, but not that the updated belief places a uniform distribution over each region. We now provide conditions on the deterministic continuous components $\delta_E^i$, 
% see \aspref{asp:transitions}, 
of the environment transition function, under which region-based beliefs are closed.

\begin{lema}[Region-based belief closure]\label{lema:uniform-closure}
    If $\delta_E^i(\cdot, a) : S_E \rightarrow \delta_E^i(S_E,a)$ is piecewise differentiable and invertible and the Jacobian determinant of the inverse function %, i.e., for any $s_E' \in \delta_E^i(S_E,a)$: $\textup{Jac}(s_E') = \textup{det} ( \textup{d} \delta_E^{i,-1}(s_E', a) / \textup{d} s_E' )$
    is PWC for any $a \in \Act$ and $1 \leq i \leq n$, then region-based beliefs are closed under belief updates.
\end{lema}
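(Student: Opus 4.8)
The plan is to show that applying the belief update \eqnref{eq:belief-update} to a region-based belief yields, after normalisation, another finite weighted sum of indicator functions over regions, i.e. a region-based belief in the sense of \defiref{defi:region-based-belief}. The starting point is to rewrite the update using \aspref{asp:transitions}, which decomposes $\delta_E$ as the mixture $\sum_{i=1}^{N_e} \mu_i \delta_E^i$ of deterministic maps. Writing $T_i = \delta_E^i(\cdot, a) : S_E \to S_E$, each $T_i$ contributes a Dirac kernel $D(s_E' - T_i(s_E))$ to the transition density, so the numerator of \eqnref{eq:belief-update} becomes $\sum_{i=1}^{N_e} \mu_i \int_{S_E} b_E(s_E) D(s_E' - T_i(s_E)) \, \textup{d}s_E$, i.e. a sum of pushforwards of $b_E$ under the $T_i$.

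The core computation is the change of variables for each pushforward. Since $T_i$ is invertible with inverse $T_i^{-1}$, substituting $s_E = T_i^{-1}(u)$ gives
\[
\int_{S_E} b_E(s_E) D(s_E' - T_i(s_E)) \, \textup{d}s_E = b_E(T_i^{-1}(s_E')) \, |\det J_{T_i^{-1}}(s_E')|
\]
for $s_E'$ in the image of $T_i$ and $0$ otherwise, where $J_{T_i^{-1}}$ is the Jacobian of the inverse. Substituting the region-based form $b_E = \sum_{j=1}^{N_b} w_j \chi_{\phi_E^j}$ and using $\chi_{\phi_E^j}(T_i^{-1}(s_E')) = \chi_{T_i(\phi_E^j)}(s_E')$ rewrites the numerator as $\sum_i \sum_j \mu_i w_j \chi_{T_i(\phi_E^j)}(s_E') |\det J_{T_i^{-1}}(s_E')|$, a finite sum of indicators of the image regions $T_i(\phi_E^j)$ weighted by the Jacobian factor. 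The hypothesis that $|\det J_{T_i^{-1}}|$ is PWC is precisely what turns each Jacobian-weighted indicator into a genuine region-based term: intersecting each image region $T_i(\phi_E^j)$ with the regions of a constant-FCP of $|\det J_{T_i^{-1}}|$ yields finitely many subregions on each of which the density equals a constant $\mu_i w_j c$ times an indicator.

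It then remains to restrict to $S_E^{s_A'}$ and normalise. The update is $0$ outside $S_E^{s_A'}$, which by \lemaref{lema:perception-fcp} projects from a finite union of perception-FCP regions; intersecting each of the subregions above with $S_E^{s_A'}$ (and taking connected components where necessary) keeps the collection finite and region-based. The denominator of \eqnref{eq:belief-update} is a single scalar (the integral of the numerator over $S_E^{s_A'}$), so dividing by it merely rescales the finitely many weights and, since the update produces a probability density, automatically enforces $\sum_k w_k \textup{vol}(\phi_k) = 1$. Collecting all weighted subregions over $i$, $j$ and the Jacobian/perception partitions gives $b_E^{s_A,a,s_A'}$ in region-based form, which is the claim.

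I expect the main obstacle to be making the change-of-variables step rigorous for maps that are only piecewise differentiable and invertible rather than globally smooth. Care is needed to first subdivide each $\phi_E^j$ along the boundaries of the differentiable pieces of $T_i$, so that on each piece $T_i$ is a diffeomorphism onto its image, the image of a connected region is again connected (hence a region), and the change-of-variables formula with $|\det J_{T_i^{-1}}|$ applies. A secondary point is that intersections of image regions with the PWC Jacobian partition and with $S_E^{s_A'}$ need not be connected, so one must decompose into connected components; this stays finite because all partitions involved are finite. Note finally that \defiref{defi:region-based-belief} permits overlapping regions, so no disjointness has to be maintained, which simplifies the bookkeeping considerably.
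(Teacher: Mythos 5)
Your proposal is correct and takes essentially the same route as the paper's proof: the key step in both is the change of variables through the inverse of $\delta_E^i(\cdot,a)$, where the PWC hypothesis on the Jacobian determinant of the inverse is exactly what turns the pushforward of a uniform density on a region into a piecewise-uniform density over finitely many regions. The only difference is one of completeness rather than of method: the paper establishes just this pushforward fact for a single region and a single $\delta_E^i$ and then asserts closure, whereas you thread the same computation through the full update \eqnref{eq:belief-update} (the mixture over $i$, the restriction to $S_E^{s_A'}$, and normalisation), bookkeeping which the paper largely defers to \lemaref{lema:region-belief-update} and \algoref{alg:rb-belief-update}.
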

We next present an implementation of NS-HSVI using region-based beliefs for environment transition functions satisfying \lemaref{lema:uniform-closure}. 
% we demonstrate that the belief update, expected values and probability are easy to compute. 
For a region-based belief $(s_A, b_E)$, 
%represented by $\{ (\phi_E^i, w_i) \}_{i = 1}^{n_b}$, 
\algoref{alg:rb-belief-update} computes the belief update as the image of each region,
%$\phi_E^i$, 
dividing the images by perception functions into regions of $S_E$, updating weights and selecting the regions with desired observations. The region-based belief update and expected values are summarised in \lemaref{lema:region-belief-update}.

\begin{lema}[Region-based belief update]\label{lema:region-belief-update}
For region-based belief $(s_A, b_E)$ represented by $\{ (\phi_E^i, w_i) \}_{i = 1}^{n_b}$, action $a$ and observation $s_A'$, we have that $(s_A', b_E')$ returned by \algoref{alg:rb-belief-update} is region-based and $b_E' = b_E^{s_A, a, s_A'}$. Furthermore, if $h : S \to \mathbb{R}$ is PWC and $\Phi_E$ is a constant-FCP of $S_E$ for $h$ at $s_A$, then $\langle h, (s_A, b_E) \rangle = \sum_{i = 1}^{n_b} \sum_{\phi_E \in \Phi_E} h(s_A, s_E) w_i \textup{vol}(\phi_E^i \cap \phi_E)$ where $s_E \in \phi_E$.
\end{lema}
\begin{algorithm}[tb]
\caption{Region-based belief update}
\textbf{Input}: $(s_A, b_E)$ represented by $\{ (\phi_E^i, w_i) \}_{i = 1}^{n_b}$,  action $a$, observation $s_A' = (\loc', \per')$
\label{alg:rb-belief-update}
\begin{algorithmic}[1] %[1] enables line numbers
\If{$\delta_A(s_A, a)(\loc') > 0$}
\State $\mathcal{P} \leftarrow \emptyset$
\For{$i=1, \dots, n_b$, $j = 1, \dots, n$}
\State $\phi_E' \leftarrow \{ \delta_E^j(s_E, a) \mid s_E \in \phi_E^i \}$ \Comment{Image}
\State $\Phi_{\textup{image}} \leftarrow$ divide $\phi_E'$ into regions over $S_E$ by $\obs_A(\loc', \cdot)$ 
\State $w' \leftarrow (\textup{vol}(\phi_E^i) / \textup{vol}(\phi_E') ) w_i \mu_j$  \Comment{Weight update}
\State $\mathcal{P} \leftarrow \mathcal{P} \cup \{(\phi_E, w') \mid \phi_E \in \Phi_{\textup{image}} \wedge \phi_E \subseteq S_E^{s_A'} \}$
\EndFor
\State Normalise the weights in $\mathcal{P}$
\State $b_E'(s_E') \leftarrow \sum_{(\phi_E, w) \in \mathcal{P}} \chi_{\phi_E}(s_E') w$ for all $s_E'$
\Else $\; b_E'(s_E') \leftarrow 0$ for all $s_E'$
\EndIf
\State \textbf{return:} $(s_A', b_E')$
\end{algorithmic}
\end{algorithm}
For the upper bound $V_{\mathit{ub}}^{\Upsilon}$, the function $K_{\mathit{ub}}$ has to compare beliefs over regions. 
% The dependency between $S_E^{+}$ and $\lambda_i$ further complicates the computation. 
We let $K_{\mathit{ub}} = K$, and thus \eqref{eq:K-UB-condition} holds. Instead of a computationally expensive exact bound, which involves a large number of region intersections, \algoref{alg:rb-upper-bound} is approximate, based on maximum densities, and involves solving an LP.

\begin{algorithm}[tb]
\caption{Approximate region-based upper bound via maximum density}
\textbf{Input}: $(s_A, b_E)$ represented by $\{ (\phi_E^i, w_i) \}_{i = 1}^{n_b}$, $\Upsilon = \{ ((s_A^k, b_E^k), y_k) \mid k \in I\}$
\label{alg:rb-upper-bound}
\begin{algorithmic}[1] %[1] enables line numbers
\State $I_b \leftarrow \argmax_{I_b \subseteq \{1, \dots, n_b \}} \sum_{i \in I_b} w_i $ subject to: $\cap_{i \in I_b} \phi_E^i \neq \emptyset$ 
\State $\phi_E^{\textup{max}} \leftarrow \cap_{i \in I_b} \phi_E^i $ \Comment{Maximum density}
% \State $I_{s_{\scale{.75}{A}}} \leftarrow \{ k \in I \mid s_A^k = s_A \}$
\State $ p = \textup{minimize}  \sum_{k \in I_{s_{\scale{.75}{A}}}} \lambda_k y_k  + (U - L) c$ 
\State $\;\;\textup{subject to:} \; c \ge 1 - \sum_{k \in I_{s_{\scale{.75}{A}}}}\sum_{j = 1}^{n_{b}^k} \lambda_k  w_{kj} \textup{vol}(\phi_E^{kj} \cap \phi_E^{\textup{max}})$, 

$\;\;\;\;\;\;\;\;\;\;\;\;\lambda_k \ge 0, \; \sum_{k \in I_{s_{\scale{.75}{A}}}} \lambda_k = 1$
\State \textbf{return:} $(\phi_E^{\textup{max}},p)$
\end{algorithmic}
\end{algorithm}

\begin{lema}[Region-based upper bound]\label{rb-upper-bound}
For region-based belief $(s_A, b_E)$ represented by $\{ (\phi_E^i, w_i) \}_{i = 1}^{n_b}$ and $\Upsilon = \{ ((s_A^k, b_E^k), y_k) \mid k \in I\}$, if  $K_{\mathit{ub}} = K$, $(\phi_E^{\textup{max}},p)$ is returned by \algoref{alg:rb-upper-bound}, $b_E' = \sum_{k \in I_{s_{\scale{.75}{A}}}}  \lambda_k^{\star}  b_E^k$ and $b_E(s_E) > b_E'(s_E)$ for all $s_E \in \phi_E^{\textup{max}}$
% $\phi_E^{\textup{max}} \subseteq S_E^{b_E>b_E'}$ 
where $\lambda_k^{\star}$ is a solution to the LP of \algoref{alg:rb-upper-bound}, then $p$ is an upper bound of $V_{\mathit{ub}}^{\Upsilon}$ at $(s_A, b_E)$. Furthermore, if $n_b = 1$, then  $p =  V_{\mathit{ub}}^{\Upsilon}(s_A,b_E)$.
\end{lema}

\section{Implementation and Experimental Evaluation}\label{sec:case-study}
In this section, we present a prototype implementation and experimental evaluation of our NS-HSVI algorithm for solution and optimal strategy synthesis on NS-POMDPs. We first summarise the details of the experimental setup, then discuss the results of two case studies, and conclude the section by discussing  performance comparison.

\subsection{Implementation Overview}

We have developed a prototype Python implementation using the Parma Polyhedra Library \cite{BHZ08} to build and operate over perception FCP representations of preimages of NNs, $\alpha$-functions and reward structures. 
We recall that both $\alpha$-functions and reward functions are piecewise constant over the continuous environment. They can thus be represented by subdividing the entire environment into \emph{regions}, namely polyhedra over
the continuous variables to which we associate a value. 
We remark that, since our method crucially depends on the states in a given region, and those in the subregions arising from subsequent refinements, being mapped to the same percept, arbitrary discretisation is not applicable. We use $h$-representations, which describe polyhedra through linear constraints for intersecting finite half-spaces. Upper bound computation is performed by solving LPs with Gurobi~\cite{gurobi}.
To sample points with polyhedra,
we use the SMT solver Z3~\cite{Z3}.

% For both the car parking example and the VCAS, 
We use the method of  \cite{KM-FF:20} to compute the (exact) preimage of piecewise linear NNs, which iterates backwards through the layers. This method is %naturally 
only applicable when the NN has piecewise linear decision boundaries, for which the basic building blocks are polytopes. This includes NNs with ReLU or linear layers, but can also be applied to approximations of NNs obtained via, for example, linear relaxation.  With this preimage, we then construct a polyhedral representation of the environment space corresponding to the perception FCP. Regarding boundary points, we order regions and then assign boundary points to the region with the highest order, resolving ties via a measurable rule. 

\subsection{Car Parking Case Study}

\begin{figure}[t]
\vspace*{-0.4cm}
\centering
\begin{subfigure}{0.25\textwidth}
\raisebox{0.3\height}{\scalebox{.6}{\begin{tikzpicture}[scale = 0.7]
	bias = 4.5;
%  	\draw[white, thin, fill = green!30, opacity=0.8] (2, 3) rectangle (3, 4); 
    \filldraw[color=green!40, fill=green!30, very thick](2.5,3.5) circle (0.14);
     \draw[white, thin, fill = black!30, opacity=0.8] (1, 1) rectangle (2, 2);
	\draw[black, very thick] (0,0) rectangle (4,4);
    %\draw[black, thick] (2, 4) -- (2, 3);
	%\draw[black, thick] (2, 3) -- (3, 3);
	%\draw[black, thick] (3, 3) -- (3, 4);
    \draw[black, thick] (1, 2) -- (1, 1);
    \draw[black, thick] (1, 1) -- (2, 1);
    \draw[black, thick] (2, 1) -- (2, 2);
    \draw[black, thick] (1, 1) -- (2, 2);
    \draw[black, thick] (1, 2) -- (2, 1);
    \draw[black, thick] (1, 2) -- (2, 2);
	\node at (3.95, -0.3) {$4$};
	\node at (-0.2, -0.3) {$0$};
	\node at (-0.25, 3.95) {$4$};
 	\node at (-0.25, 3) {$3$};
	\node at (-0.25, 2) {$2$};
	\node at (-0.25, 1) {$1$};
	\node at (3, -0.3) {$3$};
	\node at (2, -0.3) {$2$};
	\node at (1, -0.3) {$1$};
	\node [sedan top,body color=red!30,window color=black!80,minimum width=0.9cm, rotate = 90] at (0.5,1) {};
\end{tikzpicture}}}
\end{subfigure}
\hfil
\begin{subfigure}{0.3\textwidth}
\scalebox{.6}{\begin{tikzpicture}[scale = 0.6]
  	\draw[white, thin, fill = green!30, opacity=0.8] (6, 7) rectangle (8, 8); 
    % obstacles
    \draw[white, thin, fill = black!30, opacity=0.8] (4, 0) rectangle (5, 1);
    \draw[white, thin, fill = black!30, opacity=0.8] (7, 2) rectangle (8, 3);
    \draw[white, thin, fill = black!30, opacity=0.8] (4, 4) rectangle (5, 5);
    \draw[white, thin, fill = black!30, opacity=0.8] (2, 7) rectangle (4, 8);
     \draw[black, very thick] (0,0) rectangle (8,8);
    \draw[black, thick] (6, 8) -- (6, 7);
	\draw[black, thick] (6, 7) -- (8, 7);
    \draw[black, thick] (4, 0) -- (4, 1);
    \draw[black, thick] (4, 1) -- (5, 1);
    \draw[black, thick] (5, 1) -- (5, 0);
    \draw[black, thick] (4, 0) -- (5, 1);
    \draw[black, thick] (4, 1) -- (5, 0);
    \draw[black, thick] (7, 3) -- (7, 2);
    \draw[black, thick] (7, 2) -- (8, 2);
    \draw[black, thick] (8, 3) -- (7, 3);
    \draw[black, thick] (7, 3) -- (8, 2);
    \draw[black, thick] (7, 2) -- (8, 3);
    \draw[black, thick] (4, 5) -- (4, 4);
    \draw[black, thick] (4, 4) -- (5, 4);
    \draw[black, thick] (5, 4) -- (5, 5);
    \draw[black, thick] (5, 5) -- (4, 5);
    \draw[black, thick] (4, 5) -- (5, 4);
    \draw[black, thick] (4, 4) -- (5, 5);
    \draw[black, thick] (2, 8) -- (2, 7);
    \draw[black, thick] (2, 7) -- (4, 7);
    \draw[black, thick] (4, 7) -- (4, 8);
    \draw[black, thick] (3, 8) -- (3, 7);
    \draw[black, thick] (2, 8) -- (3, 7);
    \draw[black, thick] (2, 7) -- (3, 8);
    \draw[black, thick] (3, 8) -- (4, 7);
    \draw[black, thick] (4, 8) -- (3, 7);
	\node at (-0.2, -0.3) {$0$};
    \node at (-0.25, 7.95) {$8$};
 	\node at (-0.25, 7) {$7$};
	\node at (-0.25, 6) {$6$};
	\node at (-0.25, 5) {$5$};
	\node at (-0.25, 4) {$4$};
 	\node at (-0.25, 3) {$3$};
	\node at (-0.25, 2) {$2$};
	\node at (-0.25, 1) {$1$};
  	\node at (7.95, -0.3) {$8$};
	\node at (7, -0.3) {$7$};
	\node at (6, -0.3) {$6$};
	\node at (5, -0.3) {$5$};
 	\node at (4, -0.3) {$4$};
	\node at (3, -0.3) {$3$};
	\node at (2, -0.3) {$2$};
	\node at (1, -0.3) {$1$};
	\node [sedan top,body color=red!30,window color=black!80,minimum width=0.9cm, rotate = 90] at (1.5,1) {};
\end{tikzpicture}}
\end{subfigure}
\caption{Car parking with obstacles.}
\label{fig:car_parking_obstacle}
\end{figure}
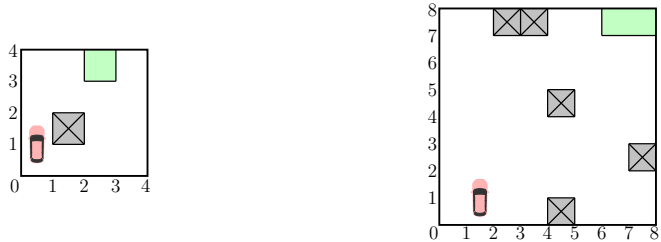

\iffalse
\begin{figure}[t]
\centering
\hspace{-1cm}
\begin{subfigure}{0.45\textwidth}
    \centering
    \includegraphics[height=0.68\textwidth]{figures/parking/4x4_3_initial_obstacle_-1000.png}
    %\subcaption{Collision reward = \num{-1e3}}
\end{subfigure}
\hspace{-1.0cm}
\raisebox{-0.065\height}{
\begin{subfigure}{0.1\textwidth}
    \centering
    \includegraphics[height=3.4\textwidth]{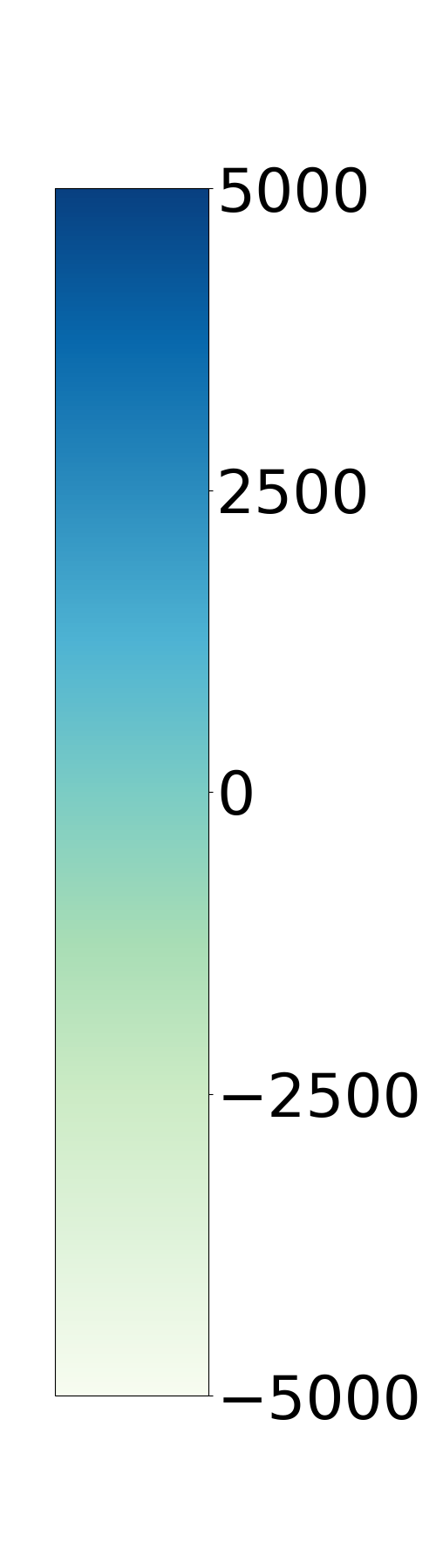}
\end{subfigure}
}
\hfil
\raisebox{-0.1\height}{
\begin{subfigure}{0.45\textwidth}
    \input{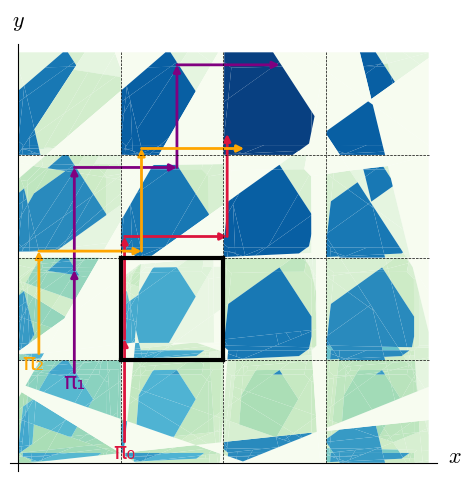}
\end{subfigure}
}

\begin{subfigure}{0.45\textwidth}
\centering
\hspace{-1.85cm}
    \includegraphics[height=0.68\textwidth]{figures/parking/4x4_3_initial_obstacle_-5000.png}
    %\subcaption{Collision reward = \num{-5e3}}
\end{subfigure}
\hspace{-1.85cm}
\raisebox{-0.065\height}{
\begin{subfigure}{0.1\textwidth}
    \centering
    \includegraphics[height=3.4\textwidth]{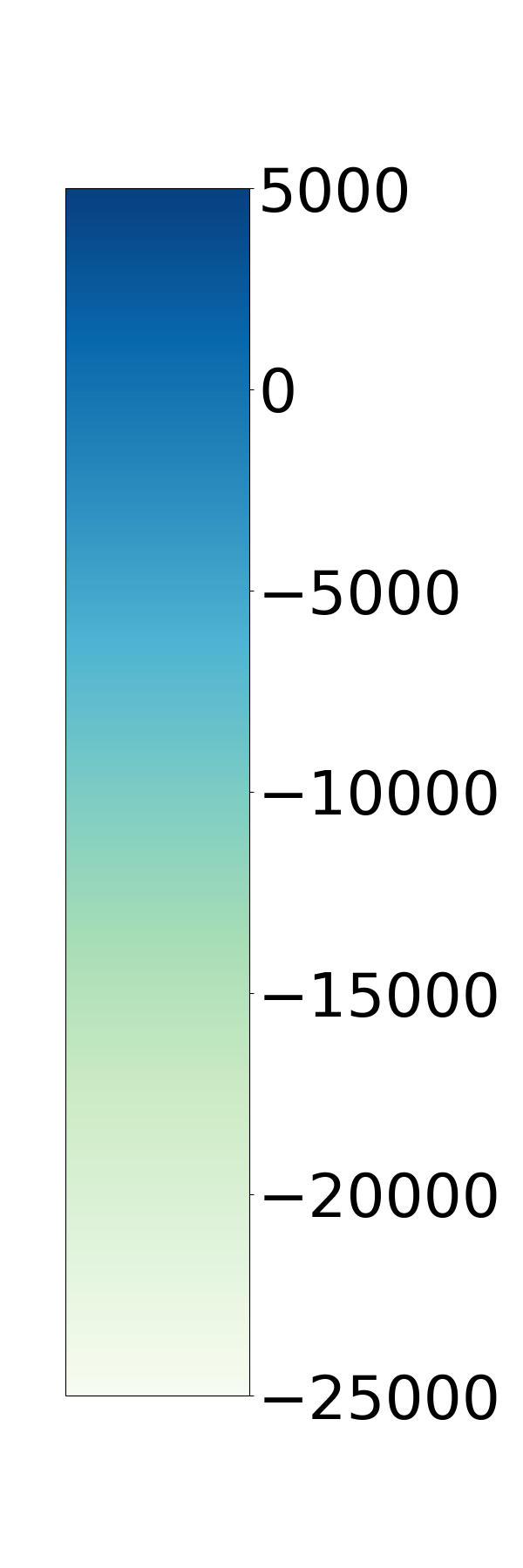}
\end{subfigure}
}
\hfil
\raisebox{-0.175\height}{
\begin{subfigure}{0.45\textwidth}
    \input{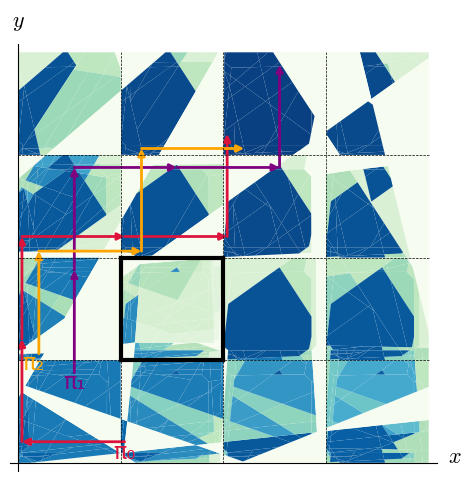}
\end{subfigure}
}

\vspace{-0.5cm}
\caption{Paths and values for car parking (obstacle indicated with black border, $\beta = 0.8$, collision rewards equal to $-1000$ (top) and $-5000$ (bottom)).}
\label{fig:car_parking_safe_unsafe}
\end{figure}
\fi

\begin{figure}
    \centering
    \begin{tabular}{lcr}
    \includegraphics[height=0.3\textwidth]{figures/parking/4x4_3_initial_obstacle_-1000.png}
    & 
    \raisebox{-0.065\height}{
    \includegraphics[height=0.32\textwidth]{figures/parking/colormap-1000.png}
    }
    & 
    \raisebox{-0.1\height}{
    %\begin{figure}%[h!]
%\centering
\scriptsize{
%\hspace{-0.0cm}
\begin{tikzpicture}
\begin{axis}[
    %no markers,
    title style={yshift=-2ex},
    title={\emph{value for the initial belief}},
    ylabel={},
    y label style={at={(axis description                cs:0.075,.5)},anchor=south,
    /pgf/number format/fixed,
    /pgf/number format/precision=4
    },
    xlabel={$k$},
    xmin=0, xmax=20,
    xtick={0,5,10,15,20},
    ymin=-5000, ymax=5000,
    ytick={-5000, -4000, -3000, -2000, -1000, 0,
            1000, 2000, 3000, 4000, 5000},
    ymajorgrids=true,
    grid style=dashed,
    grid=both,
    height=4.5cm,
    width=0.45\textwidth,
    legend entries={
                {\emph{lower bound}},
                {\emph{upper bound}}
                },
    legend style={at={(0.95,0.05)},
                anchor=south east, 
                nodes={scale=0.9, transform shape}}           
]
\addlegendimage{mark=square*,cyan,mark size=1.5pt}
\addlegendimage{mark=*,teal,mark size=1.5pt}
]

\addplot[mark=square*,cyan,opacity=1.0,mark size=1.0pt] table [x=k, y=lb, col sep=comma]{figures/parking/4x4_3_initial_obstacle_-1000.csv};
\addplot[mark=*,teal,opacity=1.0,mark size=1.0pt] table [x=k, y=ub, col sep=comma]{figures/parking/4x4_3_initial_obstacle_-1000.csv};

\end{axis}
\end{tikzpicture}
}
%\caption{}
%\label{}
%\end{figure}
    }
    \\
    \includegraphics[height=0.3\textwidth]{figures/parking/4x4_3_initial_obstacle_-5000.png}
    & 
    \raisebox{-0.065\height}{
    \includegraphics[height=0.32\textwidth]{figures/parking/colormap-5000.png}
    }
    & 
    \raisebox{-0.1\height}{
    %\begin{figure}%[h!]
%\centering
\scriptsize{
%\hspace{-0.0cm}
\begin{tikzpicture}
\begin{axis}[
    %no markers,
    title style={yshift=-2ex},
    title={\emph{value for the initial belief}},
    ylabel={},
    y label style={at={(axis description cs:0.075,.5)},anchor=south,
    /pgf/number format/fixed,
    /pgf/number format/precision=4
    },
    xlabel={$k$},
    xmin=0, xmax=16,
    xtick={0, 4, 8, 12, 16},
    ymin=-25000, ymax=5000,
    ytick={5000, 0, -5000, -10000, -15000, -20000, -25000},
    ymajorgrids=true,
    grid style=dashed,
    grid=both,
    height=4.5cm,
    width=0.45\textwidth,
    legend entries={
                {\emph{lower bound}},
                {\emph{upper bound}}
                },
    legend style={at={(0.95,0.05)},
                anchor=south east, 
                nodes={scale=0.9, transform shape}}            
]
\addlegendimage{mark=square*,cyan,mark size=1.5pt}
\addlegendimage{mark=*,teal,mark size=1.5pt}
]

\addplot[mark=square*,cyan,opacity=1.0,mark size=1.0pt] table [x=k, y=lb, col sep=comma]{figures/parking/4x4_3_initial_obstacle_-5000.csv};
\addplot[mark=*,teal,opacity=1.0,mark size=1.0pt] table [x=k, y=ub, col sep=comma]{figures/parking/4x4_3_initial_obstacle_-5000.csv};

\end{axis}
\end{tikzpicture}
}
%\caption{}
%\label{}
%\end{figure}
    }
    \\
    \includegraphics[height=0.3\textwidth]{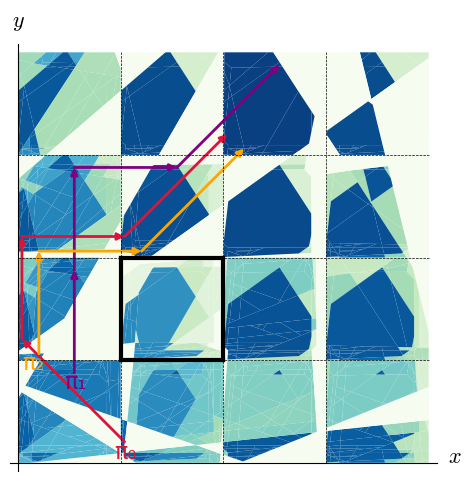}
    & 
    \raisebox{-0.065\height}{
    \includegraphics[height=0.32\textwidth]{figures/parking/colormap-5000.png}
    }
    & 
    \raisebox{-0.1\height}{
    %\begin{figure}%[h!]
%\centering
\scriptsize{
%\hspace{-0.0cm}
\begin{tikzpicture}
\begin{axis}[
    %no markers,
    title style={yshift=-2ex},
    title={\emph{value for the initial belief}},
    ylabel={},
    y label style={at={(axis description cs:0.075,.5)},anchor=south,
    /pgf/number format/fixed,
    /pgf/number format/precision=4
    },
    xlabel={$k$},
    xmin=0, xmax=20,
    xtick={0, 4, 8, 12, 16, 20},
    ymin=-25000, ymax=5000,
    ytick={5000, 0, -5000, -10000, -15000, -20000, -25000},
    ymajorgrids=true,
    grid style=dashed,
    grid=both,
    height=4.5cm,
    width=0.45\textwidth,
    legend entries={
                {\emph{lower bound}},
                {\emph{upper bound}}
                },
    legend style={at={(0.95,0.05)},
                anchor=south east, 
                nodes={scale=0.9, transform shape}}            
]
\addlegendimage{mark=square*,cyan,mark size=1.5pt}
\addlegendimage{mark=*,teal,mark size=1.5pt}
]

\addplot[mark=square*,cyan,opacity=1.0,mark size=1.0pt] table [x=k, y=lb, col sep=comma]{figures/parking/4x4_3_initial_obstacle_prob-5000.csv};
\addplot[mark=*,teal,opacity=1.0,mark size=1.0pt] table [x=k, y=ub, col sep=comma]{figures/parking/4x4_3_initial_obstacle_prob-5000.csv};

\end{axis}
\end{tikzpicture}
}
%\caption{}
%\label{}
%\end{figure}
    }
    \end{tabular}
    \vspace{-.25cm}
    \caption{Paths and values for car parking (obstacle indicated by a black border, $\beta = 0.8$, collision rewards equal to $-1000$ (top) and $-5000$ (middle, bottom)). The top two rows are for deterministic environments, the bottom row uses a probabilistic environment.}
    \label{fig:car_parking_safe_unsafe}
\end{figure}

The first case study is {\revise a modified version of} the dynamic vehicle parking problem from \egref{ex:parking:model}, which we extend both with obstacles and to a larger environment. {\revise We limit the agent to a single parking spot and consider both deterministic and probabilistic environments}. We were able to compute optimal strategies that lead the vehicle to the parking spot while avoiding obstacles (if present).

\startpara{$4 {\times} 4$ environment} To extend this example to the case when there is an obstacle region $\mathcal{R}_O = \{(x, y) \in \mathbb{R}^2 \mid 1 \leq x , y \leq 2\}$ (see \figref{fig:car_parking_obstacle}, left), {\revise we change the state reward function such that, for any $s = ((\mathit{ps},\per),s_E) \in S$: 
\[
r_S(s) = \left\{ \begin{array}{cl}
1000 & \mbox{if $f_\mathcal{R}^{\max}(\mathit{ps}) = \per$} \\
 - k_O & \mbox{if $s_E \in \mathcal{R}_O$} \\
0 & \mbox{otherwise}
\end{array} \right.
\]
where $k_O \in \mathbb{N}$, i.e., we add to the reward structure a negative reward if the vehicle hits the obstacle. The accuracy $\varepsilon$ is $10^{-3}$.}

\iffalse
\begin{figure}[t]
\centering
\hspace{-1.0cm}
\begin{subfigure}{0.45\textwidth}
    \centering
    \includegraphics[height=0.68\textwidth]{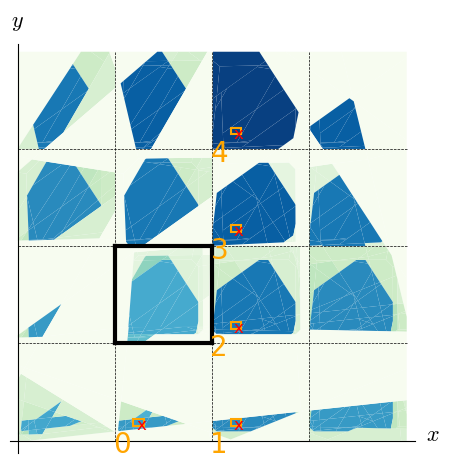}
\end{subfigure}
\hspace{-1.0cm}
\raisebox{-0.065\height}{
\begin{subfigure}{0.1\textwidth}
    \centering
    \includegraphics[height=3.4\textwidth]{figures/parking/colormap-1000.png}
\end{subfigure}
}
\raisebox{-0.1\height}{
\begin{subfigure}{0.45\textwidth}
    \centering
    \input{figures/parking/4x4_region_obstacle-1000.tex}
\end{subfigure}
}
\vspace{-.25cm}
\caption{Region-based paths and values for car parking with the obstacle, $\beta = 0.8$.}
\label{fig:car_parking_region}
\vspace{0.5cm}
\end{figure}
\fi

\begin{figure}
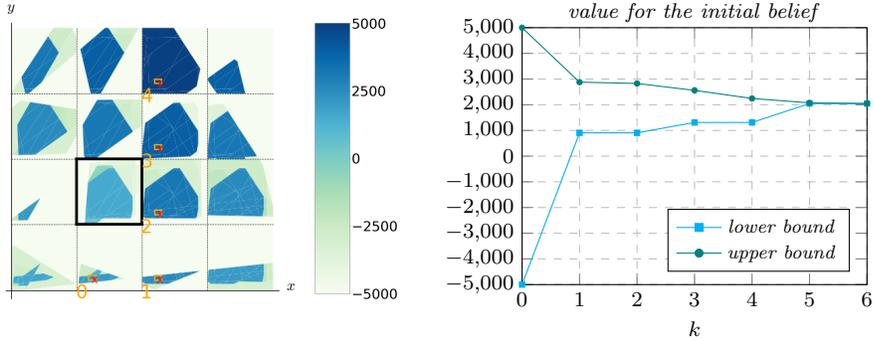

    \centering
    \begin{tabular}{lcr}
    \includegraphics[height=0.3\textwidth]{figures/parking/4x4_region_obstacle-1000.png}
    & 
    \raisebox{-0.065\height}{
    \includegraphics[height=0.32\textwidth]{figures/parking/colormap-1000.png}
    }
    & 
    \raisebox{-0.1\height}{
    %\begin{figure}%[h!]
%\centering
\scriptsize{
%\hspace{-0.0cm}
\begin{tikzpicture}
\begin{axis}[
    %no markers,
    title style={yshift=-2ex},
    title={\emph{value for the initial belief}},
    ylabel={},
    y label style={at={(axis description                cs:0.075,.5)},anchor=south,
    /pgf/number format/fixed,
    /pgf/number format/precision=4
    },
    xlabel={$k$},
    xmin=0, xmax=6,
    xtick={0,1,2,3,4,5,6},
    ymin=-5000, ymax=5000,
    ytick={-5000, -4000, -3000, -2000, -1000, 0,
            1000, 2000, 3000, 4000, 5000},
    ymajorgrids=true,
    grid style=dashed,
    grid=both,
    height=4.5cm,
    width=0.45\textwidth,
    legend entries={
                {\emph{lower bound}},
                {\emph{upper bound}}
                },
    legend style={at={(0.95,0.05)},
                anchor=south east, 
                nodes={scale=0.9, transform shape}}           
]
\addlegendimage{mark=square*,cyan,mark size=1.5pt}
\addlegendimage{mark=*,teal,mark size=1.5pt}
]

\addplot[mark=square*,cyan,opacity=1.0,mark size=1.0pt] table [x=k, y=lb, col sep=comma]{figures/parking/4x4_region_obstacle-1000.csv};
\addplot[mark=*,teal,opacity=1.0,mark size=1.0pt] table [x=k, y=ub, col sep=comma]{figures/parking/4x4_region_obstacle-1000.csv};

\end{axis}
\end{tikzpicture}
}
%\caption{}
%\label{}
%\end{figure}
    }
    \end{tabular}
    \vspace{-.25cm}
    \caption{Region-based paths and values for car parking with the obstacle, $\beta = 0.8$.}
    \label{fig:car_parking_region}
\end{figure}

\startpara{Strategy synthesis ($4 {\times} 4$)} 
\figref{fig:car_parking_safe_unsafe} presents paths ($\pi_1$, $\pi_2$ and $\pi_3$) for synthesised strategies starting from three particles in a given initial belief in two different scenarios, as well as the corresponding lower bound values for different regions of the environment. It also shows (on the right) the lower and upper bound values computed for the initial belief at each iteration.
In all cases, there is an obstacle, highlighted with black border.
We assume a uniform distribution over the points in the initial belief.

We consider strategies for when the {\revise reward $k_O$ associated to a collision equals $1000$ (\figref{fig:car_parking_safe_unsafe}, top), and when it equals $5000$} (\figref{fig:car_parking_safe_unsafe}, middle and bottom).
We see that, when the negative reward of a collision with the obstacle is increased, \figref{fig:car_parking_safe_unsafe} (middle), all the generated paths avoid the cell with the obstacle. We also see that, in the first step, the action chosen is to move {\em left}; while this is possible for path $\pi_0$ (red), taking that action from the other two initial belief points would take the agent out of the environment, in which case the agent would not move.

For the case {\revise when $k_O =  1000$}, \figref{fig:car_parking_safe_unsafe} (top), since the negative reward associated with a collision with the obstacle is lower, we see that %\marta{rephrased} 
such a reward can be compensated for by the agent afterwards, i.e., it can choose to move upwards from all points in the initial belief, resulting in a possibly unsafe strategy where a collision could happen. Finally, in \figref{fig:car_parking_safe_unsafe} (bottom), we plot the generated paths for the probabilistic environment of \egref{ex:parking:model}. Considering the initial position relative to the parking spot, the probabilistic environment can  be advantageous for the agent, as it is possible to reach the spot with fewer moves than in the deterministic case. This is indeed reflected through a slightly higher expected reward for the initial belief.

Similarly, \figref{fig:car_parking_region} shows values and strategies computed for the same scenario when considering a region-based belief. The regions reached from the initial position until arriving at the parking spot are indicated in orange, with the current state labelled by %\marta{there seem to be several x in red?}
\texttt{x}. %\gabriel{Changed.}. 
The lower and upper bound values
at each iteration are shown on the right-hand side, and the convergence demonstrates that the approximate upper bound for the region-based beliefs is tight if the belief has a unique region (see \lemaref{rb-upper-bound}). We notice that the synthesised strategy avoids the obstacle while also reaching the parking spot with the least number of possible steps, maximising the agent's reward. 

\figref{fig:4x4_alphas_regions} illustrates how computation progresses for Algorithm~\ref{alg:NS-HSVI}. Initially, we have an $\alpha$-function for each local state whose underlying structure is the same as the perception FCP (see \figref{fig:car_parking} right), with all regions initialised with the lower bound as described in Section~\ref{subsec:lb_up}. With each iteration, we refine the representation for the regions containing visited points and update their values. The figure shows the initial representation (left) and the maximum (over all local states) of the first 5, 25, and finally all the generated $\alpha$-functions, coinciding then with the values presented in \figref{fig:car_parking_safe_unsafe} (middle). We can see how the values for the regions progressively increase as the computation proceeds (top row, left to right), as well as how the subsequent representations are refinements of the initial FCP (bottom row).

\begin{figure}[t]
\centering
\vspace{-0.4cm}
\begin{minipage}{0.89\textwidth}
\begin{subfigure}{0.24\textwidth}
    \includegraphics[width=\textwidth]{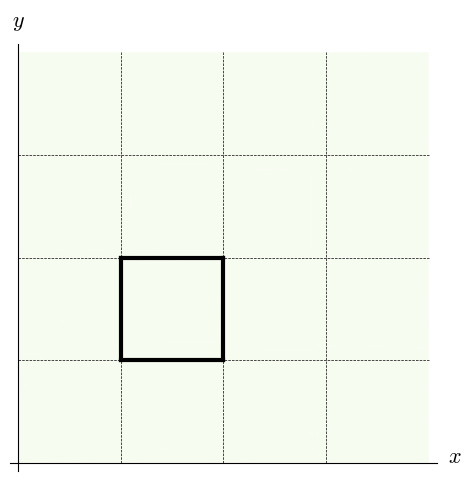}
\end{subfigure}
\begin{subfigure}{0.24\textwidth}
    \includegraphics[width=\textwidth]{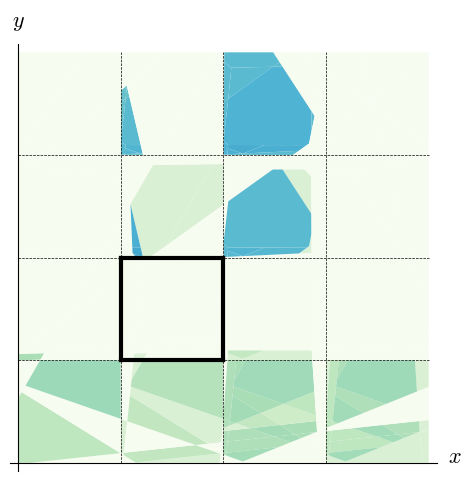}
\end{subfigure}
\begin{subfigure}{0.24\textwidth}
    \includegraphics[width=\textwidth]{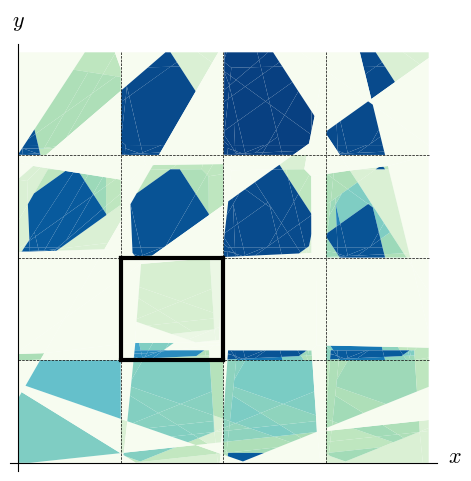}
\end{subfigure}
\begin{subfigure}{0.24\textwidth}
    \includegraphics[width=\textwidth]{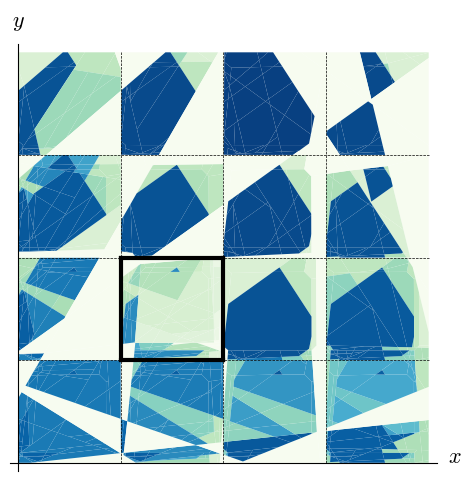}
\end{subfigure}

\begin{subfigure}{0.24\textwidth}
    \includegraphics[width=\textwidth]{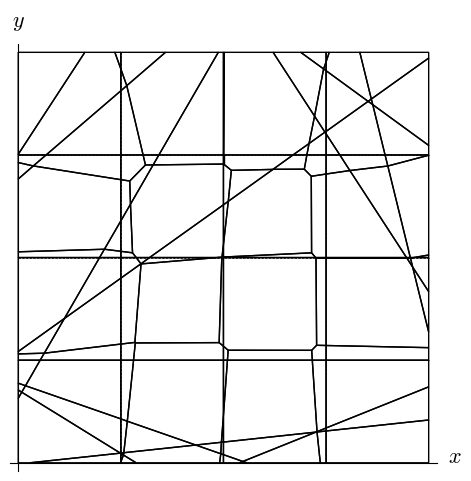}
\end{subfigure}
\begin{subfigure}{0.24\textwidth}
    \includegraphics[width=\textwidth]{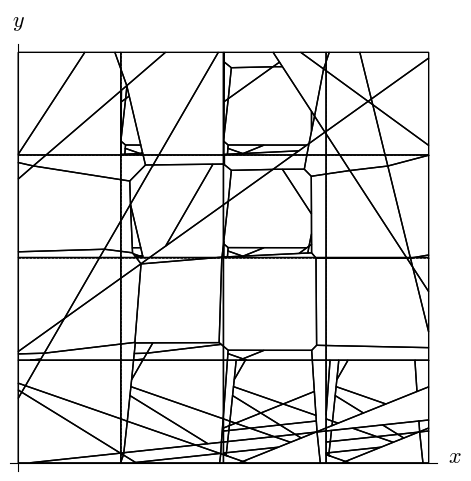}
\end{subfigure}
\begin{subfigure}{0.24\textwidth}
    \includegraphics[width=\textwidth]{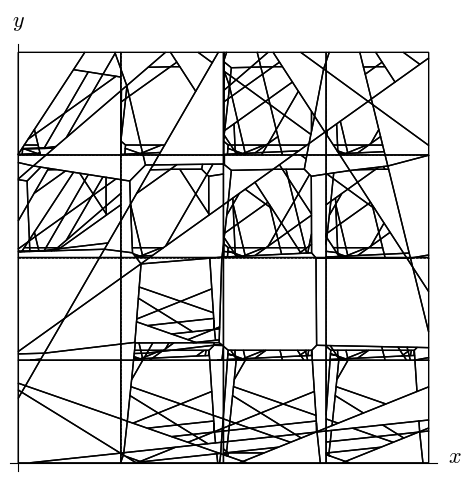}
\end{subfigure}
\begin{subfigure}{0.24\textwidth}
    \includegraphics[width=\textwidth]{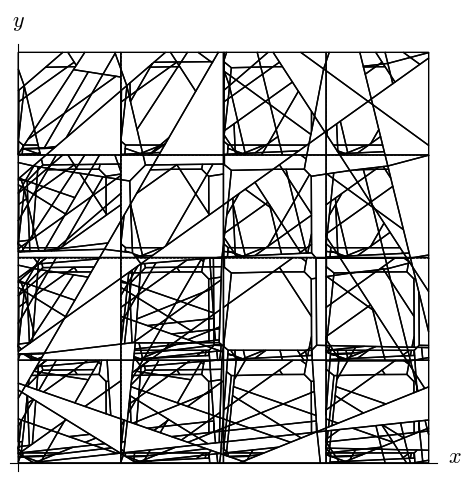}
\end{subfigure}
\end{minipage}
\begin{minipage}{0.1\textwidth}
    \includegraphics[width=\textwidth]{figures/parking/colormap-5000.png}
\end{minipage}

\caption{Values (top) and region outlines (bottom) for the initial and the maximum (over all local states) of the first $5$, $25$ and all the generated $\alpha$-functions (respectively from left to right) for the $4\times 4$ car parking example with an obstacle, $\beta = 0.8$.}
\label{fig:4x4_alphas_regions}
\end{figure}

\startpara{$8 {\times} 8$ environment} We consider a larger $8 {\times} 8$ environment $\mathcal{R}= \{(x, y) \in \mathbb{R}^2 \mid 0 \leq x, y \leq 8\}$ with 4 obstacles (\figref{fig:car_parking_obstacle}, right). In this model the parking spot is given by the region $\mathcal{R}_P = \{ (x,y) \in \mathbb{R}^2 \mid 6 \leq x  \leq 8 \wedge 7 \leq y \leq 8 \}$, and the same reward structure is considered.
To extend the NS-POMDP from Example~\ref{ex:parking:model} to this setting, {\revise there are 64 abstract grid cells (percepts) and the perception function is a feed-forward NN with one hidden ReLU layer and 15 neurons. \figref{fig:car_parking_8x8_VCAS_perceptions} (left) shows the perception FCP for the $8{\times} 8$ environment.}

\begin{figure}[t]
\centering
\begin{subfigure}{0.3\textwidth}
\includegraphics[width=\textwidth]{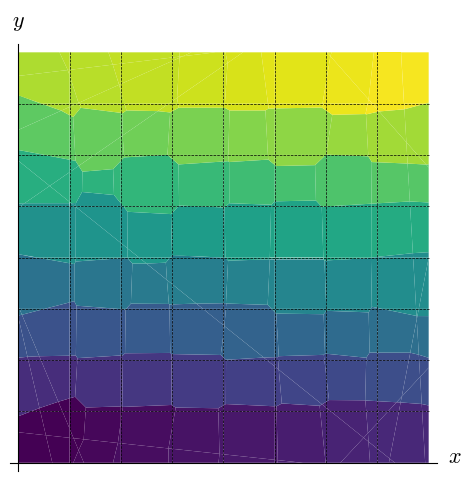}
\put(-120,-2){\scriptsize $0$}
\put(-89,-3){\scriptsize $2$}
\put(-64,-3){\scriptsize $4$}
\put(-38,-3){\scriptsize $6$}
\put(-13,-3){\scriptsize $8$}
\put(-120,28){\scriptsize $2$}
\put(-120,53){\scriptsize $4$}
\put(-120,79){\scriptsize $6$}
\put(-120,103){\scriptsize $8$}
\end{subfigure}
\hfil
\begin{subfigure}{0.48\textwidth}
\includegraphics[width=\textwidth]{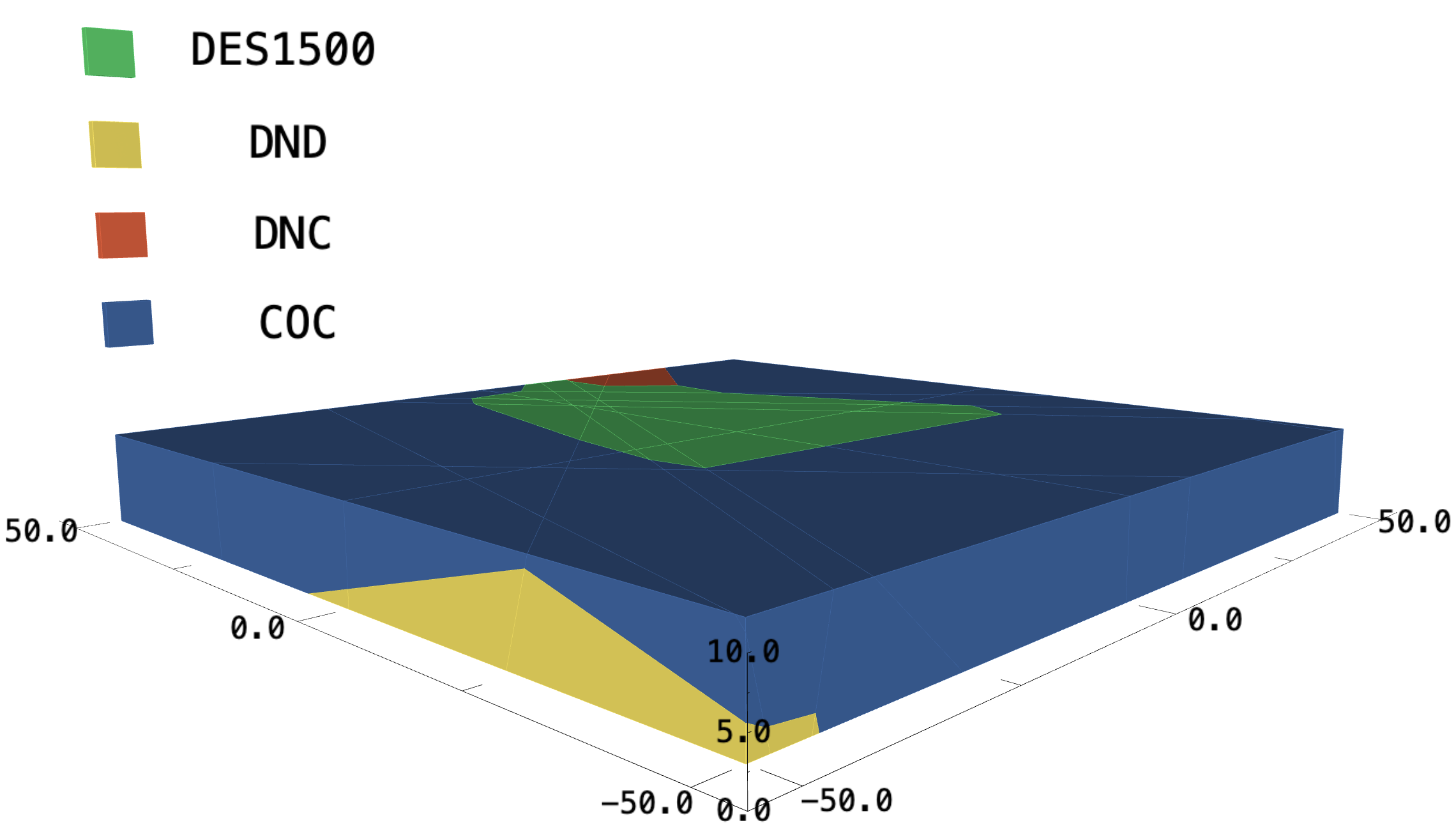}
\put(-180, 45){\scriptsize $t$}
\put(-30, 15){\scriptsize $h$}
\put(-160, 15){\scriptsize $\dot{h}_A$}
\end{subfigure}
\vspace*{0.0cm}
\caption{Perception FCP for car parking ($8{\times} 8$), and a slice of the perception FCP for the COC advisory of the VCAS ($h$ scaled 10:1).}
\label{fig:car_parking_8x8_VCAS_perceptions}
\end{figure}

\begin{figure}[t]
\centering
\vspace{-0.5cm}
\begin{subfigure}{0.42\textwidth}
\includegraphics[width=\textwidth]{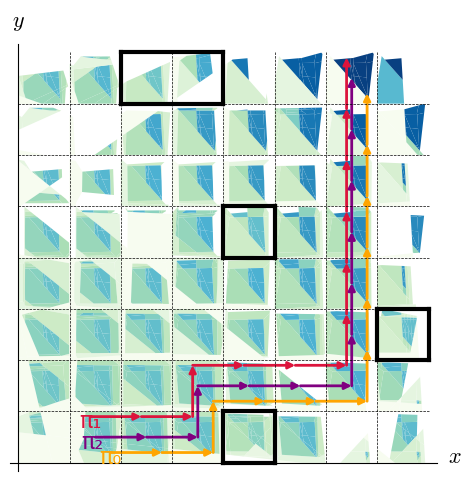}
\end{subfigure}
\hfil
\begin{subfigure}{0.48\textwidth}
\raisebox{-0.05\height}{%\begin{figure}%[h!]
%\centering
\scriptsize{
\hspace{-0.0cm}
\begin{tikzpicture}
\begin{axis}[
    %no markers,
    title style={yshift=-2ex},
    ylabel={},
    y label style={at={(axis description cs:0.075,.5)},anchor=south},
    xlabel={$k$},
    xmin=0, xmax=175, %34
    xtick={0, 50, 100, 150, 175}, %34
    ymin=-5000, ymax=5000,
    ytick={-5000, -3750, -2500, -1250, 0, 1250, 2500, 3750, 5000},
    ymajorgrids=true,
    grid style=dashed,
    grid=both,
    height=6.0cm,
    %height=0.7\textwidth,
    width=1\textwidth,
    legend entries={
                {\emph{lower bound}},
                {\emph{upper bound}}
                },
    legend style={at={(0.45,0.96)},
                anchor=north west, 
                nodes={scale=0.9, transform shape}}            
]
\addlegendimage{mark=square*,cyan,mark size=1.5pt}
\addlegendimage{mark=*,teal,mark size=1.5pt}
]

\addplot[mark=square*,cyan,opacity=1.0,mark size=0.5pt] table [x=k, y=lb, col sep=comma]{figures/parking/bounds_8x8.csv};

\addplot[mark=*,teal,opacity=1.0,mark size=0.5pt] table [x=k, y=ub, col sep=comma]{figures/parking/bounds_8x8.csv};

\end{axis}
\end{tikzpicture}
}
%\caption{}
%\label{}
%\end{figure}}
\end{subfigure}
\vspace*{-0.2cm}
\caption{Paths and values for car parking ($8{\times}8$, $\beta = 0.8$, partially reconstructed).}
\label{fig:car_parking_8x8_plot}
\end{figure}

\startpara{Strategy synthesis ($8 {\times} 8$)} For this extended model {\revise and $k_O=1000$}, \figref{fig:car_parking_8x8_plot} (left) presents the paths from the three particles in the initial belief for the synthesised strategy, as well as lower bound values for the regions of the environment. As the figure demonstrates, the vehicle is able to reach the parking spot while avoiding the obstacles. 
As the full set of $\alpha$-functions is large (see Table~\ref{tab:strat_stats_car}), to reduce computational effort we show approximate values obtained by maximising over a set of sampled $\alpha$-functions. 
\figref{fig:car_parking_8x8_plot} (right) shows how the lower and upper bound values for the initial belief change as the number of iterations of the NS-HSVI algorithm increases.

\iffalse
\begin{figure}[t]
\centering
%\hspace{-1.0cm}
\raisebox{0.12\height}{
\begin{subfigure}{0.28\textwidth}
\includegraphics[width=\textwidth]{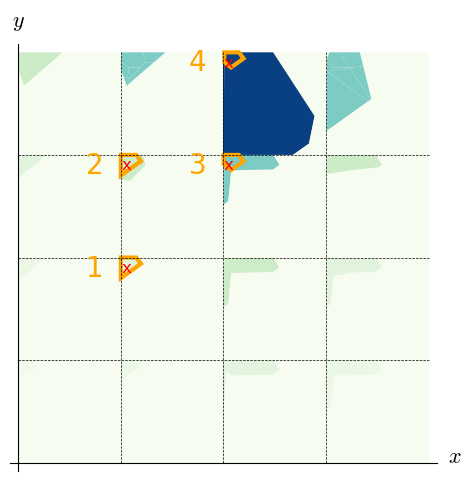}
\end{subfigure}
}
\hfil
\raisebox{0.04\height}{
\begin{subfigure}{0.1\textwidth}
\includegraphics[height=3.1\textwidth]{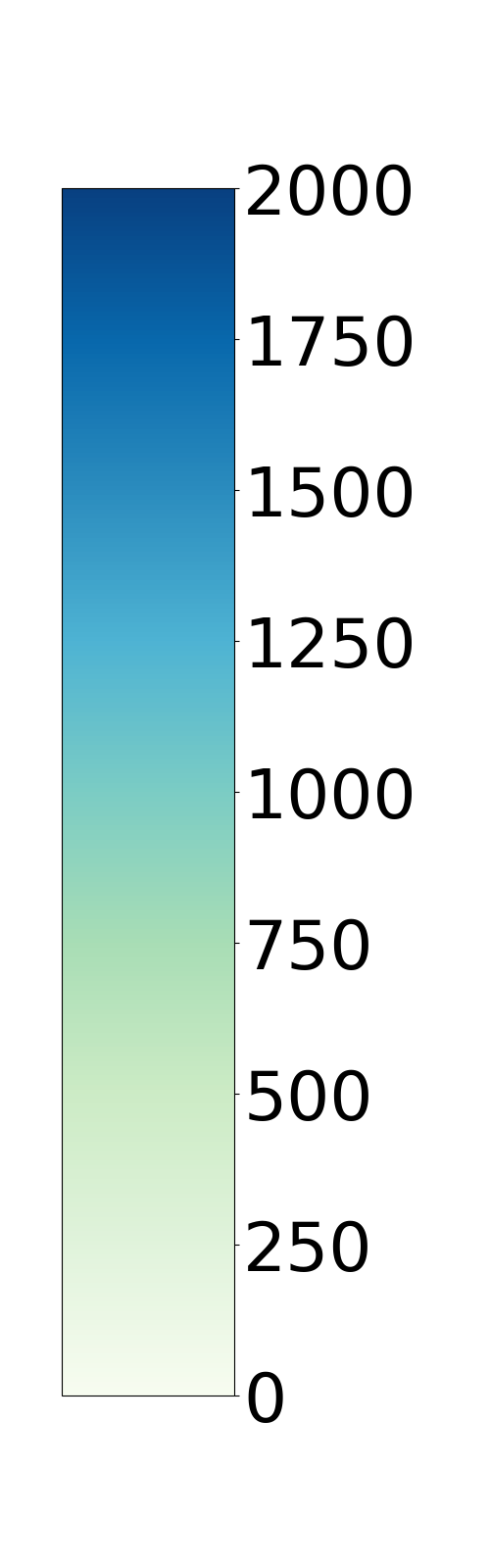}
\end{subfigure}
}
\hfil
\begin{subfigure}{0.5\textwidth}
\input{figures/parking/bounds_region.tex}
\end{subfigure}
\vspace*{-0.2cm}
\caption{Region-based paths and values for car parking (no obstacles).}
\label{fig:car_parking_region_plot}
\end{figure}
\fi

\subsection{VCAS Case Study}

%\subsection{VCAS}.

In this case study there are two commercial aircraft: an ownship aircraft equipped with an NN-controlled vertical collision avoidance system (VCAS) and an intruder aircraft. 
Each second, the avoidance system  gives a vertical climb-acceleration  advisory $\ad$ to the pilot of the ownship to avoid near mid-air collisions (NMACs), which occur when the aircraft are separated by less than 100 ft vertically and 500 ft horizontally. 
The avoidance system extends the classical VCAS~\cite{KDJ-MJK:19}, both by adding trust to measure uncertainty and by allowing for deviations from the advisories arising from the additional belief information.
In contrast to the VCAS model of~\cite{KDJ-MJK:19}, we allow  a non-zero constant climb rate for the intruder. We were able to compute optimal strategies that safely guide the ownship by avoiding the collision zone. 

\iffalse
\begin{figure}[t]
\centering
\includegraphics[width=0.5\textwidth]{figures/fcp_coc_final.png}
\put(-175, 45){\scriptsize $t$}
\put(-30, 15){\scriptsize $h$}
\put(-160, 15){\scriptsize $\dot{h}_A$}
\caption{Slice of the FCP representation for the COC advisory ($h$ scaled 10:1).}
\label{FCP-COC}
\vspace*{-0.0cm}
\end{figure}
\fi

\startpara{VCAS as an NS-POMDP} The input to VCAS is a tuple $(h, \dot{h}_A, t)$, where $h$ is the relative altitude of the two aircraft, $\dot{h}_A$ the climb rate of ownship, and $t$ the time  until the loss of horizontal separation between the aircraft. VCAS is implemented via nine feed-forward NNs, %$f_i:\mathbb{R}^3 \to \mathbb{R}^9$ for $1 \leq i \leq 9$, 
each of which outputs the scores of nine possible advisories (see \tabref{tab:advisory}). Each advisory provides a set of {\revise suggested} acceleration values and the ownship then either accelerates at one of these values or does not accelerate.
Each NN of VCAS has one hidden ReLU layer with 16 neurons, and therefore the regions in its preimage are polytopes. If we had instead considered HorizontalCAS \cite{KDJ-MJK:19-2}, the nonlinear environment transition function twists polytopes into non-polytopes, and 
would destroy our finite representations. 

\begin{table*}[t]
\setlength{\tabcolsep}{5pt}  
\centering
\scriptsize{
\begin{tabular}{|c|l|l|c|} \hline
Label & \multirow{2}{*}{Advisory}  & \multirow{2}{*}{Description}  & \multirow{1}{*}{Actions} \\
$(ad_i)$ & & & \multirow{1}{*}{$\textup{ft/s}^2$}
 \\ \hline \hline	
1 &  COC  & Clear of Conflict & $-3$, $0$, $+3$
\\
2 & DNC  & Do Not Climb & $-9.33$, $-8.33$, $-7.33$ 
\\
3 &  DND & Do Not Descend & $+7.33$, $+8.33$, $+9.33$
\\
4 & DES1500 & Descend at least 1500 ft/min &    $-9.33$, $-8.33$, $-7.33$
\\
5 & CL1500 & Climb at least 1500 ft/min &  $+7.33$, $+8.33$, $+9.33$
\\
6 & SDES1500 & Strengthen Descend to at least 1500 ft/min &  $-11.7$, $-10.7$, $-9.7$
\\
7 & SCL1500 & Strengthen Climb to at least 1500 ft/min &  $+9.7$, $+10.7$, $+11.7$
\\
8 & SDES2500 & Strengthen Descend to at least 2500 ft/min & $-11.7$, $-10.7$, $-9.7$
\\ 
9 & SCL2500 & Strengthen Climb to at least 2500 ft/min & $+9.7$, $+10.7$, $+11.7$
\\ \hline
\end{tabular}}
\vspace*{0.0cm}
\caption{Suggested actions for each advisory of VCAS~\cite{MEA-EB-PK-AL:20}.}
\label{tab:advisory}
\end{table*}

We model VCAS as an NS-POMDP, in which the agent $\agent$ is the ownship.
The agent has four trust levels $\{1, \dots, 4\}$, which represent the trust it has in the previous advisory. These levels increase if the 
{\revise the executed action is \emph{compliant} with the current advisory (i.e., is one of the suggested actions listed in \tabref{tab:advisory})}, and decrease with probability $0.5$ otherwise. A local state of the agent is of the form $(\ad_\mathit{pre}, \tr)$ consisting of the previous advisory and the trust level, and the percept of the agent stores the current VCAS advisory.
An environment state is a tuple $(h, \dot{h}_A, t)$ corresponding to the input of VCAS. Formally, we have:

\begin{itemize}
    \item $S_A= \Loc \times \Per$ with $\Loc = \{1,\dots,9\} \times \{1,\dots,4\}$ and $\Per = \{1,\dots,9\}$;
    
    \item $S_E = [-2000, 2000] \times [-50, 50] \times [0, 20]$;
    
    \item $\Act = \{0, \pm3.0, \pm7.33, \pm8.33, \pm9.33, \pm9.7, \pm10.7, \pm11.7 \}$;
    
     \item $\Delta_A(\loc, \per) = \Act$ for all $\loc \in \Loc$ and $\per \in \Per$;

     \item $\obs_A((\ad_\mathit{pre}, \tr),s_E)=\argmax (f_{\ad_\mathit{pre}}(s_E))$, 
     where $f_{\ad_\mathit{pre}}$ is the NN associated with the previous advisory $\ad_\mathit{pre}$ and the boundary point is resolved by assigning the advisory with the smallest label in \tabref{tab:advisory}; 
     
     \item for $s_A=((\ad_\mathit{pre},  \tr),\ad) \in S_A$, $(\ad', \tr') \in \Loc$ and $a \in \Act$, if $a$ is compliant with $\ad$ (see \tabref{tab:advisory}), then:
     \[
     \delta_A(s_A,a)((\tr',\ad')) = \left\{ \begin{array}{cl}
     1  & \mbox{if $(\tr\leq3) \wedge (\tr'=\tr+1) \wedge (\ad'=\ad)$} \\
     1 & \mbox{if $(\tr=4) \wedge (\tr'=\tr) \wedge (\ad'=\ad)$} \\
     0 & \mbox{otherwise}
     \end{array}  \right.
     \]
    and if $a$ is not compliant with $\ad$, then:
     \[
     \delta_A(s_A,a)((\tr',\ad')) = \left\{ \begin{array}{cl}
     0.5 & \mbox{if $(\tr\geq 2) \wedge (\tr'=\tr-1) \wedge (\ad'=\ad)$} \\
     0.5 & \mbox{if $(\tr\geq 2) \wedge (\tr'=\tr) \wedge (\ad'=\ad)$} \\
     1 & \mbox{if $(\tr=1) \wedge (\tr'=\tr) \wedge (\ad'=\ad)$} \\
     0 & \mbox{otherwise;}
     \end{array}  \right.
     \]
     \item for $s=(h, \dot{h}_A, t),s'=(h', \dot{h}_A', t') \in S$ if
      \[ \begin{array}{rcl}
     h'' & = & h-\Delta t(\dot{h}_A-\dot{h}_{\textup{int}})-0.5\Delta t^2 \ddot{h}_A \\
     \dot{h}_A'' & = & \dot{h}_A+\ddot{h}_A\Delta t \\
     t'' & = & t-\Delta t
     \end{array} \]     
     then
     \[
     \delta_E(s,a)(s') = \left\{
     \begin{array}{cl}
     1 & \mbox{if $(h'', \dot{h}_A'', t'') \in S_E$ and $s'=(h'', \dot{h}_A'', t'')$} \\
     1 & \mbox{if $(h'', \dot{h}_A'', t'') \not\in S_E$ and $s'=s$} \\
     0 & \mbox{otherwise} \\
         \end{array} \right. \] 
     where $\Delta t = 1.0$ is the time step and the intruder is assumed to be a constant climb rate $\dot{h}_{\text{int}} = 30$. 
\end{itemize}
In the reward structure we consider, all action rewards are zero and the state reward function is such that for any $s \in S$: $r_S(s) = -1000$ if $t \in [0,1] \wedge h \in [-100,100]$ and $0$ otherwise,
i.e., there is a negative reward if altitudes of the aircraft are within 100 ft at time 0 or 1. The accuracy $\varepsilon$ is $10^{-1}$.

\startpara{Strategy synthesis} To compute the perception FCP $\Phi_P$, i.e., the preimages of the NNs for this case study, we first trained these NNs.
This involved computing an MDP table policy using local approximate value iteration, reformatting this into training data and training the NNs~\cite{KDJ-SS-JBJ-MJK:19}. To generate the preimages, we adapted the method of \cite{KM-FF:20}, which was used to compute exact preimages for the NNs of HorizontalCAS \cite{KDJ-MJK:19-2}. For example, the preimage for the COC (Clear of Conflict) advisory is shown in \figref{fig:car_parking_8x8_VCAS_perceptions} (right), which shows VCAS next issuing the advisory DES1500 (Descend at least 1500 ft/min) for the environment states in the green region to avoid an NMAC given the small values of $h$ and $t$.

\figref{fig:vcas_safe} shows the paths from the four particles in the initial belief of a synthesised strategy for the VCAS case study. For the particles that would reach the collision zone at time 0 or 1 (coloured green in \figref{fig:vcas_safe}), there is a course correction that enables the ownship to narrowly escape a collision. 

\begin{figure}[t]
% \vspace*{-0.75cm}
\centering
\includegraphics[width=0.95\textwidth]{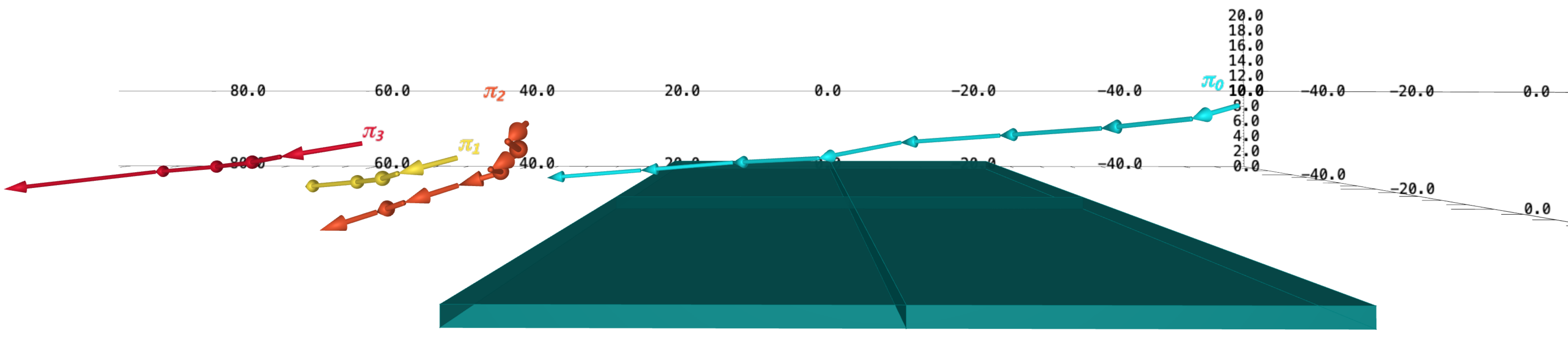}
\put(-88, 75){\small $t$}
\put(-175, -5){\small $h$}
\put(-46, 44){\small $\dot{h}_A$}
\vspace*{-0.2cm}
\caption{Paths from synthesised safe strategies for VCAS ($h$ scaled 5:1).}
\label{fig:vcas_safe}
% \vspace*{-0.4cm}
\end{figure}

\subsection{Performance Analysis}
To conclude the experimental analysis, we first discuss the performance of the implementation based on the statistics for two case studies, and then compare the performance when using particle-based and region-based beliefs, and against SARSOP.
\begin{table}[t]
\scriptsize
\setlength{\tabcolsep}{3.5pt}  
\centering
{
\begin{tabular}{|c|c||c|c|c||c|c|} \hline
\multirow{2}{*}{Model} & \multirow{2}{*}{Belief type} & Initial  & 
Discount & Pts./vol. &  
\multirow{2}{*}{\shortstack[c]{Iter.}} & 
Comput. \\ 
& & pts./regions & factor & updated & & time(s) \\ \hline \hline
\multirow{3}{*}{\shortstack[c]{Car parking \\ {\tiny (no obstacles, $4 {\times} 4$)}}} & \multirow{2}{*}{\shortstack[c]{Particle-based}} & 3 & 0.8 & 205 & 15 & 32.7 \\ \cline{3-7}
& & 5 & 0.8 & 392 & 11 & 36.3 \\ \cline{2-7}
& {\shortstack[c]{Region-based}} & 1 & 0.8 & 7.6 & 15 & 99.1 \\ 
\hline
\multirow{4}{*}{\shortstack[c]{Car parking \\ {\tiny (w/ obstacle, $4 {\times} 4$)}}} & \multirow{3}{*}{\shortstack[c]{Particle-based}} & 3 & 0.8 & 210 & 17 & 46.2 \\ \cline{3-7}
& & {\bf 3} & {\bf 0.8} & {\bf 199} & {\bf 21} & {\bf 87.7} \\ \cline{2-7}
& & 5 & 0.8 & 390 & 15 & 41.9 \\ \cline{2-7}
& {\shortstack[c]{Region-based}} & 1 & 0.8 & 7.6 & 8 & 80.4 \\ \hline
\multirow{3}{*}{\shortstack[c]{Car parking \\ {\tiny (w/ obstacles, $8 {\times} 8$)}}} & \multirow{2}{*}{\shortstack[c]{Particle-based}} & 3 & 0.8 & 960 & 174 & 1820 \\ \cline{3-7}
& & 5 & 0.8 & 1600 & 119 & 1337 \\ \cline{2-7} %\hline
& {\shortstack[c]{Region-based}} & 1 & 0.8 & 43.2 & 59 & 2075 \\ \hline
\multirow{4}{*}{\shortstack[c]{VCAS \\ {\tiny (3 actions)}}} & \multirow{3}{*}{\shortstack[c]{Particle-based}} & 4 & 0.75 & 441 & 40 & 228.2 \\ \cline{3-7}
& & 5 & 0.75 & 649 & 43 & 475.6 \\ \cline{3-7}
& & 6 & 0.75 & 476 & 23 & 1467 \\ \cline{2-7}
& {\shortstack[c]{Region-based}} & 1 & 0.75 & 4725 & 10 & 994.6 \\
\hline
\multirow{4}{*}{\shortstack[c]{VCAS \\ {\tiny (15 actions)}}} & \multirow{3}{*}{\shortstack[c]{Particle-based}} & 4 & 0.75 & 259 & 11 & 183.7 \\ \cline{3-7}
& & 5 & 0.75 & 425 & 15 & 357.7 \\ \cline{3-7}
& & 6 & 0.75 & 228 & 6 & 127.4 \\ \cline{2-7}
& {\shortstack[c]{Region-based}} & 1 & 0.75 & 4059 & 7 & 2419 \\
\hline
\end{tabular}}
\vspace*{0.0cm} 
\caption{Statistics for a set of NS-POMDP solution instances. {\revise The bold entries are for the instance with a probabilistic environment.}}
\label{tab:stats}
% \vspace*{-0.3cm}
\end{table}
\startpara{Experimental results} The experimental results reported in this section were generated on a 2.10GHz Intel Xeon Gold. Our NS-HSVI implementation is able to compute values and strategies for particle-based and region-based instances of the models we considered in less than 1 hour (\tabref{tab:stats}). In the table, we report the model we consider, the belief type, the number of initial points or regions, the discount factor ($\beta$), the number of updated points or the volume of the updated regions (depending on the belief type), and the overall number of iterations of Algorithm~\ref{alg:NS-HSVI} as well as the time taken until convergence. We found that the branching factor of the environment transition function, the number of agent states and actions, and the number of polyhedra in the perception FCP $\Phi_P$ can all have a significant impact on the computation time. \tabref{tab:stats} shows that computation for region-based beliefs normally takes longer because the number of regions of the perception FCP $\Phi_P$ over which the algorithm puts positive probabilities is usually larger, and thus it requires more ISPP backups. Moreover, while the update for particle-based beliefs only involves simple operations, updating region-based beliefs is far more complex due to the need of the polyhedra image computations, intersections and volume calculations. 

\begin{figure}[t]
\centering
\hspace{-.7cm}
\begin{subfigure}{0.4\textwidth}
    \centering
    %\begin{figure}%[h!]
%\centering
\scriptsize{
\hspace{-0.0cm}
\begin{tikzpicture}
\begin{axis}[
    %no markers,
    title style={yshift=-2ex},
    title={},
    ylabel={\emph{time(s)}},
    xlabel={$\beta$},
    xmin=0.5, xmax=0.95,
    xtick={0.5,0.6,0.7,0.8,0.9,0.95},
    xticklabel={
        \pgfmathparse{\tick <= 0.9 ? 10*\tick : 100*\tick} .\pgfmathprintnumber\pgfmathresult
    },
    ymin=15, ymax=90,
    ytick={15,30,45,60,75,90},
    ymajorgrids=true,
    grid style=dashed,
    height=5.0cm,
    width=0.9\textwidth,
    legend entries={
                {\emph{$4{\times4}$, no obstacles}},
                {\emph{$4{\times}4$, with obstacle}},
                {\emph{$8{\times}8$, with obstacles}}
                },
    legend style={at={(0.02,0.96)},
                anchor=north west, 
                nodes={scale=0.9, transform shape}}            
]
]

\addplot[mark=square*,teal,opacity=1.0,mark size=1.5pt,thick] table [x=beta, y=time,col sep=comma]{figures/parking/4x4_3_initial_no_obstacle_discounts.csv};
\addplot[mark=triangle*,red,opacity=1.0,mark size=1.5pt,thick] table [x=beta, y=time,col sep=comma]{figures/parking/4x4_3_initial_obstacle_discounts.csv};
% \addplot[mark=o,orange,opacity=1.0,mark size=1.5pt,thick] table [x=beta, y=time,col sep=comma]{figures/parking/8x8_3_initial_obstacles_discounts.csv};

\end{axis}
\end{tikzpicture}
}
%\caption{}
%\label{}
%\end{figure}
\end{subfigure}
\hspace{-1.0cm}
\begin{subfigure}{0.38\textwidth}
    \centering
    %\begin{figure}%[h!]
%\centering
\scriptsize{
\hspace{-0.0cm}
\begin{tikzpicture}
\begin{axis}[
    %no markers,
    title style={yshift=-2ex},
    xlabel={$\beta$},
    xmin=0.5, xmax=0.95,
    xtick={0.5,0.6,0.7,0.8,0.9,0.95},
    xticklabel={
        \pgfmathparse{\tick <= 0.9 ? 10*\tick : 100*\tick} .\pgfmathprintnumber\pgfmathresult
    },
    ymin=1000, ymax=1500,
    ytick={1000,1100,1200,1300,1400,1500},
    ymajorgrids=true,
    grid style=dashed,
    height=5.0cm,
    width=\textwidth,
    legend entries={
                {\emph{$8 \times 8$, with obstacles}}
                },
    legend style={at={(0.02,0.96)},
                anchor=north west, 
                nodes={scale=0.9, transform shape}}            
]
]

\addplot[mark=o,orange,opacity=1.0,mark size=1.5pt,thick] table [x=beta, y=time,col sep=comma]{figures/parking/8x8_5_initial_obstacles_discounts.csv};

\end{axis}
\end{tikzpicture}
}
%\caption{}
%\label{}
%\end{figure}
\end{subfigure}
\hspace{-1.0cm}
\begin{subfigure}{0.37\textwidth}
    \centering
    %\begin{figure}%[h!]
%\centering
\scriptsize{
\hspace{-0.0cm}
\begin{tikzpicture}
\begin{axis}[
    %no markers,
    title style={yshift=-2ex},
    title={},
    xlabel={$\beta$},
    xmin=0.5, xmax=0.95,
    xtick={0.5,0.6,0.7,0.8,0.9,0.95},
    xticklabel={
        \pgfmathparse{\tick <= 0.9 ? 10*\tick : 100*\tick} .\pgfmathprintnumber\pgfmathresult
    },
    ymin=0, ymax=610,
    ytick={0,120,240,360,480,600},
    ymajorgrids=true,
    grid style=dashed,
    height=5.0cm,
    width=\textwidth,
    legend entries={
                {\emph{VCAS}},
                },
    legend style={at={(0.02,0.96)},
                anchor=north west, 
                nodes={scale=0.9, transform shape}}            
]
]
\addplot[mark=+,cyan,opacity=1.0,mark size=1.5pt,thick] table [x=beta, y=time,col sep=comma]{figures/vcas/vcas_3_initial_discounts.csv};
\end{axis}
\end{tikzpicture}
}
%\caption{}
%\label{}
%\end{figure}
\end{subfigure}
%\vspace*{-0.2cm}
\caption{Solution times for different discount factors (for particle-based beliefs).}
\label{fig:disc}
\end{figure}
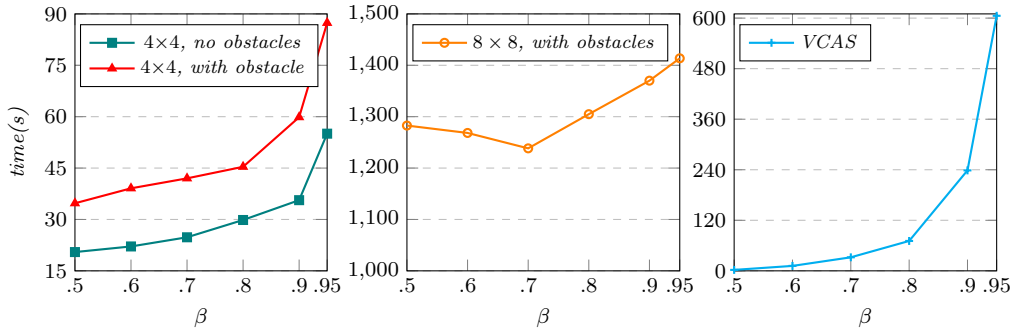

Another crucial aspect is the choice of the discount factor ($\beta$). \figref{fig:disc} shows how verification times vary for the different case studies as a function of that parameter. As expected, the trend we are able to observe is that it takes longer for the algorithm to converge as the value of $\beta$ increases. The small drop in the curve for the $8\times 8$ version of the car parking example for the lower values of $\beta$ can be explained by the inherent nondeterminism of HSVI exploration, especially in the early stages of the computation when many regions may have the same lower and upper bounds. This may lead to the algorithm being indifferent with respect to the actions it takes, and thus constructing paths that have lower impact on the values of the initial belief.

Finally, another element that impacts the running time is the choice of the initial belief and the model's dynamics. This is especially evident when comparing the two instances of VCAS. The beliefs for the version with 15 actions have lower values for $t$ and are thus much closer to the boundaries of the environment, which considerably limits the number of reachable states and makes it possible for the algorithm to converge more quickly despite the higher number of actions.

\begin{table}[t]
\scriptsize
\setlength{\tabcolsep}{3.5pt}  
\centering
{
\begin{tabular}{|c|c||c|c|c||c|c|} \hline
\multirow{2}{*}{Model} & Belief type & Total %/avg. 
 regions & Lower & Upper & Strat.  \\ 
 & \#initial & ($\alpha$-functions) & bound & bound & time (s)  \\ \hline \hline
\multirow{3}{*}{\shortstack[c]{Car parking \\ {\tiny(no obstacles, $4 {\times} 4$)}}} & PB, 3 & 80,494  
& 2389.3309 & 2389.3333 & 19.3 \\ \cline{2-6}
 & PB, 5 & 42,224
& 2047.9989 & 2048.0000 & 14.0  \\ \cline{2-6}
 & RB, 1 & 36,467
& 2047.9992 & 2048.0000 & 50.0  \\ \hline
\multirow{4}{*}{\shortstack[c]{Car parking \\ {\tiny(w/ obstacle, $4 {\times} 4$)}}} & PB, 3 & 99,513
& 2218.6653 & 2218.6666 & 24.5  \\ \cline{2-6}
 & {\bf PB, 3} & {\bf 91,808}
& {\bf 1554.3770} & {\bf 1554.3897} & {\bf 20.0}  \\ \cline{2-6}
 & PB, 5 & 47,719
& 2047.9990 & 2048.0000 & 14.2 \\ \cline{2-6}
 & RB, 1 & 35,751
& 2047.9988 & 2048.0000 & 39.4  \\ \hline
\multirow{3}{*}{\shortstack[c]{Car parking \\ {\tiny(w/ obstacles, $8 {\times} 8$)}}} & PB, 3 & 1,410,799
& 343.5969 & 343.5974 & 338.9  \\ \cline{2-6}
 & PB, 5 & 547,753 
& 343.5970 & 343.5974 & 158.4  \\ \cline{2-6} %\hline
& RB, 1 & 550,685
 & 343.5964 & 343.5974 & 473.8  \\ \hline
\end{tabular}}
\vspace*{0.0cm}
\caption{Further statistics for a set of NS-POMDP solution instances of the car parking case study. {\revise The bold entries are for the instance with a probabilistic environment.}}
\label{tab:strat_stats_car}
%\vspace*{-0.2cm}
\end{table}

\begin{table}[t]
\scriptsize
\setlength{\tabcolsep}{3.5pt}  
\centering
{
\begin{tabular}{|c|c||c|c|c||c|c|c|c|} \hline
\multirow{2}{*}{Model} & Belief type & Total %/avg. 
 regions & Lower & Upper & Strat. & Following & Avg. \\ 
 & \#initial & ($\alpha$-functions) & bound & bound & time (s) & ratio & trust\\ \hline \hline
\multirow{4}{*}{\shortstack[c]{VCAS \\ {\tiny (3 actions)}}} & PB, 4 & 154,009
& -1.2281 & 0.0 & 75.3 & - & - \\ \cline{2-8}
 & PB, 5 & 278,447 
& -1.2398 & 0.0 & 127.5 & - & -  \\ \cline{2-8}
 & PB, 6 & 868,257
& -0.2498 & 0.0 & 400.8 & - & -\\ \cline{2-8}
 & RB, 1 & 22,919
& -0.0715 & 0.0 & 65.5 & - & - \\ \cline{2-8}
\hline
\multirow{4}{*}{\shortstack[c]{VCAS \\ {\tiny (15 actions)}}} & PB, 4 & 32,387
& -0.6718 & 0.0 & 18.7 & 33\% & 1.3 \\ \cline{2-8}
 & PB, 5 & 30,003
& -0.9874 & 0.0 & 21.7 & 0\% & 1.0 \\ \cline{2-8}
 & PB, 6 & 19,218
& -1.0789 & 0.0 & 13.0 & 33\% & 1.3 \\ \cline{2-8}
 & RB, 1 & 21,102
& -0.6133 & 0.0 & 49.9 & 0\% & 1.0 \\ \cline{2-8}
\hline
\end{tabular}}
\vspace*{0.0cm}
\caption{Further statistics for a set of NS-POMDP solution instances of the VCAS case study.}
\label{tab:strat_stats_vcas}
%\vspace*{-0.2cm}
\end{table}
%I left colour black so we know which ones are not from lovelace

Tables~\ref{tab:strat_stats_car} and \ref{tab:strat_stats_vcas} shows, for a number of instances of each case study and for each belief type, particle-based (PB) and region-based (RB):
the total number of polyhedra that make up the $\alpha$-functions computed, the lower and upper bounds on values for the initial belief and the time required for strategy synthesis, i.e., reading $\alpha$-functions, finding maximum actions and updating beliefs.
For the VCAS case study Table~\ref{tab:strat_stats_vcas} also show the compliance ratio with respect to the suggested actions as well as average trust values over 20 paths generated from the synthesised strategies.

For the car parking case study
(recall the accuracy is $10^{-3}$), in general, the more iterations are needed for convergence, the higher the number of $\alpha$-functions generated and 
consequently the total number of regions. Strategy synthesis for region-based beliefs tends to be comparatively slower due to the complexity of the mathematical operations involved.

Regarding VCAS, the statistics in Table~\ref{tab:strat_stats_vcas} 
are for
the accuracy of $10^{-1}$. 
The $\alpha$-functions generally have a large number of regions, as the perception FCP for each of the 9 NNs of VCAS has many regions, and hence many intersections. 
In addition, we note that, for this model, the following ratio and average trust values are low, and in fact have been omitted for the model with 3 actions. This is because (see Table~\ref{tab:advisory}) the number of suggested actions associated to each advisory is only a fraction of the 15 actions we considered and, for a given belief, there are many strategies that can lead to the optimal value. Recall also that it is assumed that the intruder aircraft is always climbing and the beliefs we considered were all reasonably close to the collision zone. We analysed the synthesised strategies and found that, in many cases, the agent chose actions that would at first lead to a faster descent than those suggested in \tabref{tab:advisory}, but then compensated by descending
less, or not at all, at later stages. While the values of the actions differed, all strategies we observed led to the ownship lowering its altitude, which would lead to an increase of the overall height difference so as to escape a potential collision. 
Thus, 
% these measures 
the low following ratios do not reflect an inadequacy of the advisories.

\iffalse
Figs.~\ref{fig:car_parking_no_obstacle_plot} and \ref{fig:car_parking_obstacle_plot} (right) show the lower and upper bound values %for the initial belief 
at each iteration. % (166.66) than the case without obstacle (208.33), 
Although the minimal steps for each particle to reach the parking spot for two cases are the same, the difference in values is a result of the irregular shapes of the perception FCP (see \figref{fig:car_parking}). The shape of top left grid in the perception FCP means that after one step, in \figref{fig:car_parking_obstacle_plot}, the three particles are indistinguishable, and thus, since $\mathit{right}$ would cause the path from $\pi_1$ to hit the obstacle, $\mathit{up}$ is taken. However, $up$ creates a self-loop for the path from $\pi_2$, because the vehicle does not move if the action takes the vehicle outside of $\mathcal{R}$, which leads to a longer parking time. % and thus a smaller optimal value.
\fi

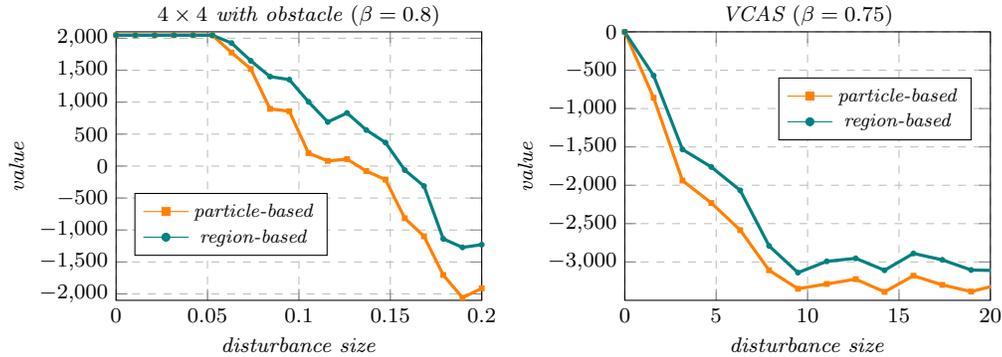
\begin{figure}[t]
\centering
\begin{subfigure}{0.47\textwidth}
%robustness_data_points = "dp, avgp, maxp, minp\n"
%robustness_data_regions = "dr, avgr, maxr, minr\n"
\newcommand{\colorp}{orange}
\newcommand{\colorr}{teal}
\scriptsize{
\hspace{-0.0cm}
\begin{tikzpicture}
\begin{axis}[
    %no markers,
    title style={yshift=-2ex},
    title={\emph{$4 \times 4$ with obstacle $(\beta = 0.8)$}},
    ylabel={\emph{value}},
    xlabel={\emph{disturbance size}},
    xmin=0, xmax=0.20, %34
    xtick={0,0.05,0.10,0.15,0.20,0.25,0.30}, %34
    ymin=-2100, ymax=2100,
    ytick={-2000,-1500,-1000, -500, -0, 500, 1000, 1500, 2000},
    xmajorgrids=true,
        xticklabel style={
        /pgf/number format/fixed,
        /pgf/number format/precision=3
    },
    scaled x ticks=false,
    grid style=dashed,
    grid=both,
    height=0.8\textwidth,
    width=\textwidth,
    legend entries={
                {\emph{particle-based}},
                {\emph{region-based}}
                },
    legend style={at={(0.05,0.4)},
                anchor=north west, 
                nodes={scale=0.9, transform shape}}            
]
\addlegendimage{mark=square*,\colorp,mark size=1.5pt}
\addlegendimage{mark=*,teal,\colorr,mark size=1.5pt}
]

% \addplot[name path=minp, mark=,\colorp,opacity=0.5,mark size=0.5pt] table [x=dp, y=minp, col sep=comma]{figures/parking/robustness_parking_data_points.csv};

\addplot[mark=square*,very thick,\colorp,opacity=1.0,mark size=0.5pt] table [x=dp, y=avgp, col sep=comma]{figures/parking/robustness_parking_data_points.csv};

% \addplot[name path=maxp, mark=,\colorp,opacity=0.5,mark size=0.5pt] table [x=dp, y=maxp, col sep=comma]{figures/parking/robustness_parking_data_points.csv};

% \addplot[opacity=.40,\colorp!20] fill between[of=minp and maxp];

%%

% \addplot[name path=minr, mark=,\colorr,opacity=0.5,mark size=0.5pt] table [x=dr, y=minr, col sep=comma]{figures/parking/robustness_parking_data_regions.csv};

\addplot[mark=*,very thick,\colorr,opacity=1.0,mark size=0.5pt] table [x=dr, y=avgr, col sep=comma]{figures/parking/robustness_parking_data_regions.csv};

% \addplot[name path=maxr, mark=,\colorr,opacity=0.5,mark size=0.5pt] table [x=dr, y=maxr, col sep=comma]{figures/parking/robustness_parking_data_regions.csv};

% \addplot[opacity=.40,\colorr!20] fill between[of=minr and maxr];

\end{axis}
\end{tikzpicture}
}
\end{subfigure}
\hfil
%\raisebox{-0.07\height}{
\begin{subfigure}{0.47\textwidth}
%robustness_data_points = "dp, avgp, maxp, minp\n"
%robustness_data_regions = "dr, avgr, maxr, minr\n"
\newcommand{\colorp}{orange}
\newcommand{\colorr}{teal}
\scriptsize{
\hspace{-0.0cm}
\begin{tikzpicture}
\begin{axis}[
    %no markers,
    title style={yshift=-2ex},
    title={\emph{VCAS $(\beta=0.75)$}},
    ylabel={\emph{value}},
    xlabel={\emph{disturbance size}},
    xmin=0, xmax=20, %30
    xtick={0,5, 10, 15, 20, 25, 30, 35, 40, 45, 50}, %34
    ymin=-3500, ymax=0,
    ytick={0, -500, -1000, -1500, -2000, -2500, -3000},
    xmajorgrids=true,
        xticklabel style={
        /pgf/number format/fixed,
        /pgf/number format/precision=3
    },
    scaled x ticks=false,
    grid style=dashed,
    grid=both,
    height=0.8\textwidth,
    width=\textwidth,
    legend entries={
                {\emph{particle-based}},
                {\emph{region-based}}
                },
    legend style={at={(0.42,0.83)},
                anchor=north west, 
                nodes={scale=0.9, transform shape}}            
]
\addlegendimage{mark=square*,\colorp,mark size=1.5pt}
\addlegendimage{mark=*,teal,\colorr,mark size=1.5pt}
]

% \addplot[name path=minp, mark=,\colorp,opacity=0.5,mark size=0.5pt] table [x=dp, y=minp, col sep=comma]{figures/vcas/robustness_vcas_data_points.csv};

\addplot[mark=square*,very thick,\colorp,opacity=1.0,mark size=0.5pt] table [x=dp, y=avgp, col sep=comma]{figures/vcas/robustness_vcas_data_points_new.csv};

% \addplot[name path=maxp, mark=,\colorp,opacity=0.5,mark size=0.5pt] table [x=dp, y=maxp, col sep=comma]{figures/vcas/robustness_vcas_data_points.csv};

%\addplot[opacity=.40,\colorp!20] fill between[of=minp and maxp];

%%

% \addplot[name path=minr, mark=,\colorr,opacity=0.5,mark size=0.5pt] table [x=dr, y=minr, col sep=comma]{figures/vcas/robustness_vcas_data_regions.csv};

\addplot[mark=*,very thick,\colorr,opacity=1.0,mark size=0.5pt] table [x=dr, y=avgr, col sep=comma]{figures/vcas/robustness_vcas_data_regions_new.csv};

% \addplot[name path=maxr, mark=,\colorr,opacity=0.5,mark size=0.5pt] table [x=dr, y=maxr, col sep=comma]{figures/vcas/robustness_vcas_data_regions.csv};

%\addplot[opacity=.40,\colorr!20] fill between[of=minr and maxr];

\end{axis}
\end{tikzpicture}
}
\end{subfigure}
%}
\vspace*{-0.5cm}
\caption{Comparison between particle-based and region-based values.}
\label{fig:car_parking_robustness}
\end{figure}

\startpara{Performance comparison} Finally, we compare values obtained for particle-based and region-based initial beliefs, where the initial region covers the particles, after they have been disturbed by shifting their position along a sampled direction. This models a realistic scenario, in which the actual initial belief differs from the initial belief used to compute offline lower and upper bound functions, for example due to measurement imprecision.
For a range of disturbance sizes (the distances by which the particles are shifted), the lower bound values for the average of 100 sampled points are presented in \figref{fig:car_parking_robustness}. %for both cases. 
The results show that, in all cases, the region-based belief values are greater than or equal to the particle-based values, and therefore the region-based approach is more robust to disturbance (i.e., generates lower bound values closer to the optimum).

As the number of reachable states for a given number of transitions from an initial particle-based belief is finite, we also compare the robustness of values obtained with our particle-based NS-HSVI and the finite-state POMDP solver SARSOP, for the $4 \times 4$ dynamic vehicle parking without obstacles in \figref{fig:car_parking_sasop_robustness}. For an initial particle-based belief, we build two finite-state POMDPs by unrolling the model's execution when considering $4$ and $6$ transitions, respectively. Note that no new distinct states can be reached for paths whose length exceeds $6$ in this example, as any cell in the grid can be reached from any other cell in $6$ steps. Then, we compute the value function, represented as a set of $\alpha$-vectors, for each finite-state POMDP with SARSOP. Using the value function, we approximate the values of beliefs disturbed by shifting as above, in which each shifted particle takes the value of the closest point in the finite-state space of the unrolled POMDP. The optimal value of each shifted belief is computed by unrolling from the shifted belief for a maximum of $6$ transitions and solving the resulting finite-state POMDP with SARSOP.

\begin{figure}[t]
\centering
\begin{subfigure}{0.47\textwidth}
%robustness_data_points = "dp, avgp, maxp, minp\n"
%robustness_data_sarsop = "dr, avgr, maxr, minr\n"
%robustness_data_optimal = "dr, avgr, maxr, minr\n"
\newcommand{\colorp}{orange}
\newcommand{\colorseight}{red}
\newcommand{\colorsfour}{cyan}
\newcommand{\coloro}{teal}
\scriptsize{
\hspace{-0.0cm}
\begin{tikzpicture}
\begin{axis}[
    %no markers,
    title style={yshift=-2ex},
    title={\emph{$4 \times 4$  without obstacle $(\beta = 0.5)$}},
    ylabel={\emph{value}},
    xlabel={\emph{disturbance size}},
    xmin=0, xmax=0.10, %34
    xtick={0,0.02,0.04,0.06,0.08,0.1}, %34
    ymin=30, ymax=140,
    ytick={40, 50, 60,70, 80, 90, 100, 110, 120, 130, 140},
    xmajorgrids=true,
        xticklabel style={
        /pgf/number format/fixed,
        /pgf/number format/precision=3
    },
    scaled x ticks=false,
    grid style=dashed,
    grid=both,
    height=0.9\textwidth,
    width=\textwidth,
    legend entries={
                {\emph{optimal}},
                {\emph{particle-based}},
                {\emph{SARSOP-6}},
                {\emph{SARSOP-4}}
                },
    legend style={at={(0.05,0.325)},
                anchor=north west, 
                nodes={scale=0.65, transform shape}}            
]
\addlegendimage{mark=*,teal,\coloro,mark size=1.5pt}
\addlegendimage{mark=square*,\colorp,mark size=1.5pt}
\addlegendimage{mark=triangle*,\colorseight,mark size=1.5pt}
\addlegendimage{mark=diamond*,\colorsfour,mark size=1.5pt}
]

% \addplot[name path=minp, mark=,\colorp,opacity=0.5,mark size=0.5pt] table [x=dp, y=minp, col sep=comma]{figures/parking/robustness_parking_data_points.csv};

\addplot[mark=square*,very thick,\colorp,opacity=1.0,mark size=0.5pt] table [x=dp, y=avgp, col sep=comma]{figures/parking/s-robustness_data_points_0.5.csv};

% \addplot[name path=maxp, mark=,\colorp,opacity=0.5,mark size=0.5pt] table [x=dp, y=maxp, col sep=comma]{figures/parking/robustness_parking_data_points.csv};

% \addplot[opacity=.40,\colorp!20] fill between[of=minp and maxp];

%%

% \addplot[name path=minr, mark=,\colorr,opacity=0.5,mark size=0.5pt] table [x=dr, y=minr, col sep=comma]{figures/parking/robustness_parking_data_regions.csv};

\addplot[mark=triangle*,very thick,\colorseight,opacity=1.0,mark size=0.5pt] table [x=ds, y=avgs, col sep=comma]{figures/parking/s-robustness_data_sarsop_8_0.5.csv};

\addplot[mark=diamond,very thick,\colorsfour,opacity=1.0,mark size=0.5pt] table [x=ds, y=avgs, col sep=comma]{figures/parking/s-robustness_data_sarsop_4_0.5.csv};

% \pgfplotstabletypeset[col sep=comma]{figures/parking/s-robustness_data_sarsop_4.csv}

\addplot[mark=*,teal,very thick,\coloro,opacity=1.0,mark size=0.5pt] table [x=dr, y=avgr, col sep=comma]{figures/parking/s-robustness_data_optimal_0.5.csv};

% \addplot[name path=maxr, mark=,\colorr,opacity=0.5,mark size=0.5pt] table [x=dr, y=maxr, col sep=comma]{figures/parking/robustness_parking_data_regions.csv};

% \addplot[opacity=.40,\colorr!20] fill between[of=minr and maxr];

\end{axis}
\end{tikzpicture}
}
\end{subfigure}
\hfil
\raisebox{-0.06\height}{
\begin{subfigure}{0.47\textwidth}
\input{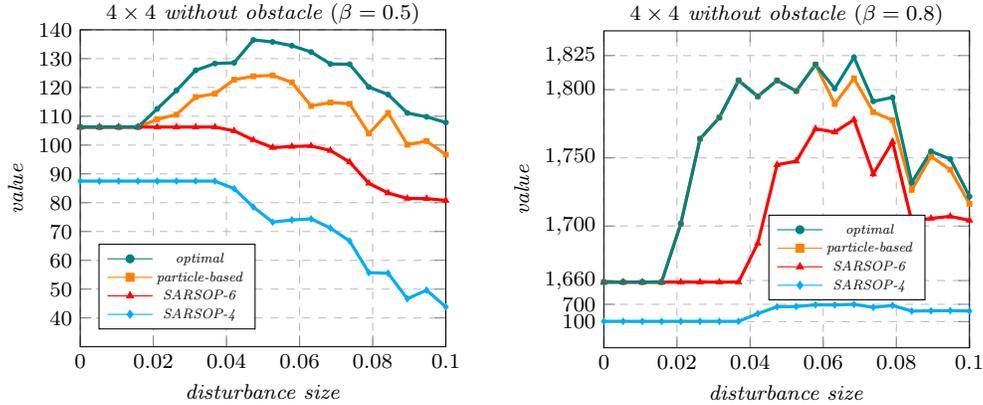}
\end{subfigure}
}
\vspace*{-0.5cm}
\caption{Comparison between particle-based and SARSOP values.}
\label{fig:car_parking_sasop_robustness}
\end{figure}

SARSOP performs better with respect to the computational time taken, which is understandable as SARSOP takes as input a discretised version of the model and does not operate over a continuous abstraction, as NS-HSVI does, requiring expensive operations over polyhedra. Nevertheless, the results shown in \figref{fig:car_parking_sasop_robustness} demonstrate that the values achieved by strategies generated using SARSOP highly depend on how much of the model's execution we are able to construct beforehand, as the impact of missing reward-critical states with a shorter horizon can be considerable. It also shows that particle-based NS-HSVI obtains greater or equal lower bound values compared to SARSOP within a small disturbance range. 
This is due to the fact that, when performing the ISPP backup, we update not only the values for the visited points but also for the regions that contain them. The optimal values of the shifted beliefs indicate that the values of  the particle-based NS-HSVI and SARSOP are both valid lower bounds.
\section{Related work}\label{rw-sect}

Continuous-state POMDPs arise in a multitude of applications, including robotics and autonomous systems. 
Many approaches have been proposed to solve such models, as well as the induced fully-observable belief MDPs, including point-based value iteration \cite{JMP-NV-MTS-PP:06,LB-IL-NRA:19,ZZ-SS-PP-KK:12}, simulation-based policy iteration \cite{XJ-JT-XT-HX:18}, discrete space approximation \cite{SB-TG-RD:13}, locally-valid approximation \cite{JVDB-SP-RA:12} and tree search planning \cite{MHL-CJT-ZNS:21}. However, these approaches focus on traditional symbolic systems and, while extended to continuous transitions via sampling~\cite{JMP-NV-MTS-PP:06}, they are not adapted to data-driven perception functions.

A common approach to solving continuous-state MDPs is by discretising and reduction to finite-state problems, which suffers from exponential growth in the state space size and time horizon. 
An alternative is to work directly with the continuous state space.
The point-based methods of~\cite{JMP-NV-MTS-PP:06,LB-IL-NRA:19,ZZ-SS-PP-KK:12} use $\alpha$-functions, similarly to our approach, but represent value functions as Gaussian mixtures or dynamic Bayes nets, which may result in looser approximation for NNs than our polyhedral representation.
This is because our P-PWLC representation exploits the underlying piecewise constant structure of the continuous-state model and the neural perception mechanism (for which the value function may not be piecewise constant).

Rather than approximating value functions with parametric functions, such as Gaussians, we instead exploit structure in the underlying model
and work directly with the continuous state space, similarly to~\cite{ZF-RD-NM-RW:04}. With suitable restrictions, one can ensure the existence of 
a finite piecewise constant representation of the value function,
based on a partition of the state space created dynamically during solution.
While the existence of a finite representation permits a VI approach, it is rarely scalable. A more efficient (approximate) solution method is HSVI, 
a point-based value iteration for finite-state POMDPs \cite{TS-RS:04,TS-RS:05}, which %we adapt for the first time to the continuous-state POMDPs. HSVI \cite{TS-RS:04,TS-RS:05} 
uses  effective heuristics to guide the forward exploration towards beliefs that significantly reduce the gap between the upper and lower bounds on the optimal value function. 

Alternative solution methods include 
PBVI \cite{JP-GG-ST:03}, the first point-based algorithm to demonstrate good performance on large POMDPs. FSVI \cite{GS-RIB-SES:07}, also a point-based value iteration method, explores the belief space by maintaining the true states, using the optimal value function of the underlying MDP to decide which action to take and then sampling the next states and observations. SARSOP \cite{HK-DH-WL:08}, one of the fastest point-based algorithms, first approximates the optimally reachable belief space in each iteration by sampling a belief according to its stored lower and upper bound functions, then performs backups at selected nodes in the belief tree and prunes the $\alpha$-vectors that are dominated by others over a neighbourhood of the belief tree. 

\iffalse
%either
HSVI is a point-based value iteration for finite-state POMDPs \cite{TS-RS:04,TS-RS:05},
\daveM{could add \url{https://arxiv.org/abs/2406.02871} (UAI24) but don't think it is needed since we don't do reach/verif}
which was recently extended to stochastic games  \cite{KH-BB-VK-CK:23} and works in the continuous belief space, but, to the best our knowledge, has not been applied to continuous-state POMDPs. 

\martaM{revise as in FM}Approaches based on discretisation suffer from loss of accuracy and exponential growth in the number of states and the finite horizon.

While our VI and NS-HSVI algorithms work directly in the continuous state space of the POMDP, most existing approaches rely on constructing a finite-state POMDP to approximate the continuous-state POMDP and then solving the finite-state model. 
\fi

{\revise 
Modelling formalisms and rigorous verification methodologies for systems that incorporate data-driven components are beginning to emerge. This includes verification for non-stochastic neuro-symbolic systems~\cite{MEA-EB-PK-AL:20}, risk verification of the closed-loop stochastic systems for data-driven controllers~\cite{MC-LL-RI-GJP:22}, perception contracts~\cite{Mitra-10.1145/3622875}, probabilistic abstractions for perception modules~\cite{DBLP:journals/tse/CalinescuIMRPSV24}, and synthesis of verified NN-based POMDP policies~\cite{CJT20}. 
This paper is part of an effort to develop effective {\em optimal} policy synthesis for stochastic neuro-symbolic multi-agent systems with neural perception mechanisms, for which solutions for simpler fully-observable variants have been proposed~\cite{YSD+22,nscsgs,kwiatkowska_et_al:LIPIcs.MFCS.2022.4}.
To the best of our knowledge, our approach is the first 
%approximate 
value computation method for partially observable continuous-state POMDPs with neural perception.
Since this paper was submitted, a point-based value iteration algorithm was presented for the more general setting of one-sided neuro-symbolic partially observable stochastic games (NS-POSGs) in~\cite{RY-GS-GN-DP-MK:23-2}, which allows partial observability under similar assumptions to those used here (in fact, our model is a single-agent variant of~\cite{RY-GS-GN-DP-MK:23-2} and shares it syntax). 
The methods of~\cite{RY-GS-GN-DP-MK:23-2} exploit a finite representation that generalises $\alpha$-vectors to approximate value computation, for which \emph{online} strategy synthesis methods have also since been developed~\cite{YSN+24}.
}

\section{Conclusions}\label{sec:conclusions}
We have introduced NS-POMDPs, the first partially observable neuro-symbolic model for an agent operating in continuous state space and perceiving the environment using a {\revise neural perception mechanism}. %NNs.
Motivated by the need for safety guarantees for such systems, we focus on {\em optimal} policy synthesis with discounting.
By placing mild assumptions on %the structure of 
NS-POMDPs, we are able to exploit their structure to approximate the value function from below and above using a representation of PWC $\alpha$-functions and belief-value induced functions. Using NS-HSVI, a variant of the classical HSVI algorithm, we synthesised optimal strategies for an agent parking a car and safe strategies for an agent using an aircraft collision avoidance system, employing the popular particle-based and novel region-based beliefs. 

Our main achievement is demonstrating the practicality of the methodology for small systems with realistic neural network components. 
To make progress in this challenging problem domain,
similarly to other POMDP approaches, we initially focus on discounted objectives,
and aim to later extend to the more complex undiscounted case (which is already
undecidable for finite-state POMDPs). However, as the case studies demonstrate,
we can use our approach to synthesise strategies that can then be shown to
be safe in terms of provably avoiding ``unsafe'' parts of the state space.
Further work includes efficiency improvement by incorporating sampling, approximating preimage of NNs, and adapting NS-HSVI to more general {\revise probabilistic} neural perception mechanisms. 

\startpara{Acknowledgements}
This project was funded by the ERC under the European
Union’s Horizon 2020 research and innovation programme
(grant agreement No.834115, FUN2MODEL).
{\revise We also gratefully acknowledge various helpful suggestions from the anonymous reviewers of this paper.}

\newpage
\appendix 
\setcounter{lema}{0}
\setcounter{thom}{0}
\section{Proofs from Section~\ref{sec:value-iteration}}

\noindent
Before we give the proofs of Section~\ref{sec:value-iteration} we require the following definition.

\begin{defi}
For FCPs $\Phi_1$ and $\Phi_2$ of S, we denote by $\Phi_1+\Phi_2$ the smallest FCP of $S$ such that $\Phi_1+\Phi_2$ is a refinement of both $\Phi_1$ and $\Phi_2$, which can be computed by all combinations of intersections between regions in $\Phi_1$ and $\Phi_2$.
\end{defi}

\begin{lema}[Perception FCP]
    There exists a smallest FCP of $S$, called the perception FCP, denoted $\Phi_{P}$, such that all states in any $\phi \in \Phi_{P}$ are observationally equivalent, i.e., if $(s_A,s_E),(s_A',s_E')\in \phi$, then $s_A=s_A'$ and we let $s_A^\phi= s_A$.
\end{lema}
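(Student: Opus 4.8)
The plan is to reduce observational equivalence to equality of agent states, build a candidate partition from the connected components of the observationally homogeneous ``layers'' of $S$, and then verify that this partition is the coarsest (hence smallest) FCP with the desired property. First I would recall that, by \defiref{semantics-def}, $\obs(s_A,s_E)=s_A$, so two states of $\sem{\pomdp}$ are observationally equivalent precisely when they share the same agent state. Since every percept-compatible state has the form $((\loc,\per),s_E)$ with $\per=\obs_A(\loc,s_E)$, the state space decomposes as the finite disjoint union
\[
S = \bigcup_{s_A=(\loc,\per)\in S_A} \{s_A\} \times S_E^{s_A}, \qquad S_E^{s_A}=\{\,s_E\in S_E \mid \obs_A(\loc,s_E)=\per\,\},
\]
indexed by the finite set $S_A$. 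Each ``layer'' $\{s_A\}\times S_E^{s_A}$ consists exactly of the states carrying the common agent state $s_A$, so it is observationally homogeneous, and these layers are precisely the observational equivalence classes of $S$.

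Next I would exploit the fact that $\obs_A$ is piecewise constant. For each fixed $\loc\in\Loc$ there is a finite connected partition of $S_E$ on which $\obs_A(\loc,\cdot)$ is constant, so each $S_E^{s_A}$ is a union of finitely many connected regions and therefore has only finitely many connected components. Consequently every layer $\{s_A\}\times S_E^{s_A}$ has finitely many connected components, and I would take $\Phi_P$ to be the collection of all connected components of all layers. Because $S_A$ is finite, $\Phi_P$ is finite; its members are connected by construction, pairwise disjoint (they are components of disjoint layers), and they cover $S$, so $\Phi_P$ is an FCP. Moreover, each region of $\Phi_P$ lies inside a single layer and hence carries a single agent state, which justifies writing $s_A^\phi$; thus $\Phi_P$ is observationally homogeneous.

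Finally, to establish minimality I would show that any observationally homogeneous FCP $\Phi'$ refines $\Phi_P$. Indeed, a region $\phi'\in\Phi'$ has a single agent state $s_A$, so $\phi'\subseteq\{s_A\}\times S_E^{s_A}$; being connected, $\phi'$ must lie inside a single connected component of that layer, i.e.\ inside a single region of $\Phi_P$. Hence $\Phi'$ refines $\Phi_P$, which (by the paper's convention that the \emph{smallest} FCP is the coarsest one refined by all others, as in the definition of $\Phi_1+\Phi_2$) shows $\Phi_P$ is the smallest observationally homogeneous FCP; uniqueness follows since any two coarsest such FCPs refine each other.

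The hard part is the finiteness claim: I must guarantee that each layer contributes only finitely many connected components, and this is exactly where the PWC hypothesis on $\obs_A$ is indispensable — without it, $S_E^{s_A}$ could be an arbitrary Borel set with infinitely many components, violating the FCP requirement. The second delicate point, which drives the minimality argument rather than mere homogeneity, is that connectedness of a homogeneous region forces it into a single component of its layer; this is what makes $\Phi_P$ genuinely coarsest instead of just observationally homogeneous.
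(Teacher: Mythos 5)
Your proposal is correct and follows essentially the same route as the paper: decompose $S$ by agent state into the sets $\{s_A\}\times S_E^{s_A}$ (finitely many, since $S_A$ is finite), use the PWC property of $\obs_A$ to express each $S_E^{s_A}$ as finitely many disjoint connected regions, and take the minimal such decomposition over all agent states. Your explicit use of connected components and the refinement argument for minimality simply spell out what the paper compresses into choosing ``a representation that minimises the number of such regions'' and asserting the result is smallest, so this is a more detailed write-up of the same proof rather than a different one.
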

\begin{proof}
   Since $\obs_A$ is PWC and $S_A$ is finite, using \defiref{defi:NS-POMDP} we have that for any $s_A = (\loc, \per) \in S_A$ the set $S_E^{s_A} = \{ s_E \in S_E \mid \obs_A(\loc, s_E) = \per \}$ can be expressed as a number of disjoint regions of $S_E$ and we let $\Phi_E^{s_A}$ be such a representation that minimises the number of such regions. It then follows that $\{ \{ (s_A, s_E) \mid s_E \in \phi_E \} \mid \phi_E \in \Phi_E^{s_A} \wedge s_A \in S_A \}$ is a smallest FCP of $S$ such that all states in any region are observationally equivalent. 
\end{proof}

% theorem 1
\begin{thom}[P-PWLC closure and convergence]\label{thom:PWC-consistency-app}
If $V \in \mathbb{F}(S_B)$ and P-PWLC, then so is $[TV]$. If $V^0 \in \mathbb{F}(S_B)$ and P-PWLC, then the sequence $(V^t)_{t = 0}^{\infty}$, such that $V^{t+1} = [TV^t]$, is P-PWLC and converges to $V^\star$.
\end{thom}

\begin{proof}
Consider any $V \in \mathbb{F}(S_B)$ that is P-PWLC, by \defiref{defi:PWLC} there exists a finite set $\Gamma \subseteq \mathbb{F}_{C}(S)$ such that:
\begin{equation}\label{eq:V-max-expression-app}
    V(s_A,b_E) = \max\nolimits_{\alpha \in \Gamma}\langle \alpha, (s_A,b_E) \rangle \; \mbox{for all $(s_A,b_E) \in S_B$.}
\end{equation}
Now consider any $(s_A, b_E), (s_A', b_E') \in S_B$ where $s_A' = (\loc', \per')$ and action $a  \in \Delta_A(s_A)$, and letting $P_1 \coloneqq P(s_A' \mid (s_A,b_E), a)$, by \eqnref{eq:V-max-expression-app} we have:
\begin{align}\
   \lefteqn{\!\!\!\!\!\!\! V (s_A', b_E^{s_A,a,s_A'}) \; = \; \max_{\alpha \in \Gamma}\langle \alpha, (s_A',b_E^{s_A,a,s_A'}) \rangle} \nonumber \\
   & \!\!\!\!\!\!\!\!\!\! = \max_{\alpha \in \Gamma} \int_{s_E' \in S_E} \alpha(s_A',s_E')b_E^{s_A,a,s_A'}(s_E') \textup{d} s_E' & \mbox{by \eqnref{expectation-eq}}  \nonumber\\
    & \!\!\!\!\!\!\!\!\!\! = \max_{\alpha \in \Gamma} \int_{s_E' \in S_E} \alpha(s_A',s_E') \frac{P((s_A',s_E') \mid (s_A,b_E), a)}{P(s_A' \mid (s_A,b_E), a)} \textup{d} s_E' & \mbox{by \eqnref{eq:belief-update1}} \nonumber \\
       & \!\!\!\!\!\!\!\!\!\! = \max_{\alpha \in \Gamma} \int_{s_E' \in S_E} \alpha(s_A',s_E') \frac{P((s_A',s_E') \mid (s_A,b_E), a)}{P_1} \textup{d} s_E'  & \mbox{by definition of $P_1$} \nonumber \\
   & \!\!\!\!\!\!\!\!\!\! \; \lefteqn{=\frac{1}{P_1} \max_{\alpha \in \Gamma} \int_{s_E' \in S_E} \alpha(s_A',s_E') P((s_A',s_E') \mid (s_A,b_E), a)  \textup{d} s_E'} & \mbox{rearranging} \nonumber \\
    & \!\!\!\!\!\!\!\!\!\! \lefteqn{\;=\frac{1}{P_1} \max_{\alpha \in \Gamma} \int_{s_E' \in S_E} \!\!\!\!\!\!\!\! \alpha(s_A',s_E') \delta_A(s_A, a)(\loc')  \left(\int_{s_E' \in S_E^{s_A'} \wedge s_E \in S_E} \!\!\!\!\!\!\!\!\!\!\!\!\!\!\!\!\!\!\!\!\!\! b_E(s_E) \delta_E(s_E,a)(s_E')  \textup{d} s_E\right) \textup{d} s_E'} \nonumber \\
    && \mbox{by \eqnref{eq:new-agent-obs}} \nonumber \\
    & \!\!\!\!\!\!\!\!\!\! \lefteqn{\;= \frac{1}{P_1}\max_{\alpha \in \Gamma} \int_{s_E \in E} \left( \delta_A(s_A, a)(\loc') \int_{ s_E' \in S_E^{s_A'}}   \!\! \alpha(s_A',s_E')  \delta_E(s_E,a)(s_E')  \textup{d} s_E'  \right) b_E(s_E) \textup{d} s_E} \nonumber \\
    &&\mbox{rearranging.} \label{eq:V-updated-belief-app}
\end{align}
Next, for any $\alpha \in \mathbb{F}_C(S)$, $s_A' \in S_A$ and $a \in \Act$ we let $\alpha^{a, s_A'} : S \rightarrow \mathbb{R}$ be the function where for any $s = (s_A,s_E) \in S$ if $a \in \Delta_A(s_A)$, then: 
\begin{align}
\lefteqn{\alpha^{a, s_A'}(s_A,s_E)  = \delta_A(s_A, a)(\loc') \left( \int_{ s_E' \in S_E^{s_{\scale{.75}{A}}'}}  \alpha(s_A',s_E')  \delta_E(s_E,a)(s_E')  \textup{d} s_E' \right)}
\nonumber \\
& \; = \left( \int_{ s_E' \in S_E^{s_{\scale{.75}{A}}'}}  \alpha(s_A',s_E')  \delta(s,a)(s_A', s_E')  \textup{d} s_E' \right) & \mbox{by \defiref{semantics-def}} 
\label{eq:alpha-new-middle}
\end{align}
and otherwise $\alpha^{a, s_A'}(s_A,s_E) = L$. 
% where $\Theta_{s_{\scale{.75}{E}}}^a = \{ s_E' \in S_E \mid \delta_E(s_E, a)(s_E') > 0 \}$.  
Now combining \eqref{eq:V-updated-belief-app} and \eqnref{eq:alpha-new-middle} we have:
\begin{equation}\label{eq:V-updated-belief-app2}
\!\!\! V(s_A', b_E^{s_A,a,s_A'}) =  \frac{1}{P(s_A' \mid (s_A,b_E), a)}\max_{\alpha \in \Gamma} \int_{s_E \in E} \alpha^{a, s_A'}(s_A,s_E) b_E(s_E) \textup{d} s_E \, .
\end{equation}
We next prove that $\alpha^{a, s_A'}$ is PWC, i.e., $\alpha^{a, s_A'} \in \mathbb{F}_{C}(S)$. Since $\alpha \in \mathbb{F}_C(S)$, there exists an FCP $\Phi'$ of $S$ such that $\alpha$ is constant in each region of $\Phi'$. {\revise According to \aspref{asp:transitions-rewards}, there exists a preimage FCP $\Phi$ of $S$ and a transition function $\delta_{\Phi}$ such that, for all $s \in \phi \in \Phi$, $a \in \Act$ and $s' \in \phi' \in \Phi'$, $\delta(s, a)(s') = \delta_{\Phi}(\phi, a)(\phi')$. Therefore, using \eqref{eq:alpha-new-middle}, we obtain that $\alpha^{a, s_A'}$ is constant in each region of $\Phi$, i.e., $\alpha^{a, s_A'}$ is PWC.}

Substituting \eqref{eq:V-updated-belief-app2} into \defiref{max-def} it follows that $[TV](s_A, b_E)$ equals:
\begin{align*}
& \lefteqn{\max_{a  \in \Delta_A(s_A)} \left\{ \langle R_{a}, (s_A,b_E) \rangle + \beta \mbox{$\sum\nolimits_{s_A' \in S_A}$}  \max_{\alpha \in \Gamma} \int_{s_E \in E} \alpha^{a, s_A'}(s_A,s_E) b_E(s_E) \textup{d} s_E \right\}}
\\
& = \; \max_{a  \in \Delta_A(s_A)} \left\{ \langle R_{a}, (s_A,b_E) \rangle + \beta \mbox{$\sum\nolimits_{s_A' \in S_A}$}  \max_{\alpha \in \Gamma} \langle \alpha^{a, s_A'},  (s_A,b_E) \rangle  \right\} & \mbox{by \eqnref{expectation-eq}.}
\end{align*}
Therefore letting $\Gamma^{a, s_A'} = \{ \alpha^{a, s_A'} \mid \alpha \in \Gamma \}$ and
\[
\alpha_{b_E}^{a, s_A'} \in \arg \max\nolimits_{\alpha^{a, s_{\scale{.75}{A}}'} \in \Gamma^{a, s_{\scale{.75}{A}}'}} \langle \alpha^{a, s_A'},  (s_A,b_E) \rangle
\]
where $\alpha_{b_E}^{a, s_A'}$ is independent of $s_A$ due to \eqref{eq:alpha-new-middle}, it then follows from \defiref{max-def} that:
\begin{align}
[TV](s_A, b_E) = & \; \lefteqn{\max_{a  \in \Delta_A(s_A)} \left\{ \langle R_{a}, (s_A,b_E) \rangle + \beta \mbox{$\sum_{s_A' \in S_A}$}  \langle \alpha^{a, s_A'}_{b_E},  (s_A,b_E) \rangle  \right\}}\nonumber \\
    = & \max_{a  \in \Delta_A(s_A)} \left\langle R_{a} + \beta \mbox{$\sum_{s_A' \in S_A}$}  \alpha^{a, s_A'}_{b_E}, (s_A,b_E) \right\rangle & \mbox{by \eqnref{expectation-eq}.} \label{expectation1-eqn}
\end{align}
Furthermore, we have that
\[
\Gamma_{(s_A, b_E)} = \left\{ R_{a} + \beta \mbox{$\sum_{s_A' \in S_A}$} \alpha^{a, s_A'}_{b_E} \mid a  \in \Delta_A(s_A) \right\} 
\]
and from \eqnref{expectation1-eqn}:
\[
[TV](s_A, b_E) = \max\nolimits_{\alpha \in \Gamma_{(s_{\scale{.75}{A}}, b_{\scale{.75}{E}})}} \langle \alpha, (s_A,b_E) \rangle \, .
\]
Finally, since $S_A$, $\Act$ and $\Gamma$ are finite, $R_{a}$ is PWC by \aspref{asp:transitions-rewards} and $\alpha^{a, s_A'}$ is PWC, defining $\Gamma'$ to be the set containing the PWC functions:
\[
R_{a} + \beta \mbox{$\sum_{s_A' \in S_A}$}  \alpha^{a, s_A'} \in  \mathbb{F}_C(S)
\]
for all $a  \in \Act$, $s_A' \in S_A$ and $ \alpha^{a, s_A'} \in \Gamma^{a, s_A'}$,
we have for any $(s_A,b_E) \in S_B$:
\begin{equation*}\label{eq:new-func-T}
[TV](s_A, b_E) = \max\nolimits_{\alpha \in \Gamma'} \langle \alpha, (s_A,b_E) \rangle.
\end{equation*}
Therefore, $[TV]$ is P-PWLC. As can be seen $|\Gamma'| = |\Act| |\Gamma|^{|S_A|}$, and hence the number of $\alpha$-functions representing $[TV]$ given grows exponentially in the number of agent states for those representing $V$.

The remainder of the proof follows from Banach's fixed point theorem and the fact we have proved that if $V \in \mathbb{F}(S_B)$ and P-PWLC, so is $[TV]$. 
\end{proof}

\begin{thom}[Convexity and continuity]\label{thom:continuity-app}
For any $s_A \in S_A$, the value function $V^{\star}(s_A, \cdot) : \mathbb{P}(S_E) \rightarrow \mathbb{R}$ is convex and for any $b_E, b_E' \in  \mathbb{P}(S_E)$:
\begin{equation}\label{eq:K-definition2}
    |V^{\star}(s_A, b_E) - V^{\star}(s_A, b_E')| \leq K(b_E, b_E')
\end{equation}
where $K(b_E, b_E') = \frac{1}{2}(U - L) \int_{s_E \in S_E^{s_{\scale{.75}{A}}}} | b_E(s_E) - b_E'(s_E)| \textup{d}s_E$.
\end{thom}
\begin{proof}
According to Theorem~\ref{thom:PWC-consistency} there exists a (possibly infinite) set $\Gamma \subseteq \mathbb{F}_C(S)$ such that for any $(s_A, b_E) \in S_B$:
\begin{equation}\label{eq:V-star-sup}
    V^{\star}(s_A, b_E) = \sup\nolimits_{\alpha \in \Gamma} \langle \alpha, (s_A, b_E) \rangle \, .
\end{equation}
Given $s_A \in S_A$, consider any $b_E, b_E' \in \mathbb{P}(S_E)$ and $\lambda \in [0, 1]$, and we have:
\begin{align*}
    \lefteqn{\lambda V^{\star}(s_A, b_E) + (1 - \lambda ) V^{\star}(s_A, b_E')} \\
    & = \; \lambda \sup\nolimits_{\alpha \in \Gamma} \langle \alpha, (s_A, b_E) \rangle + (1 - \lambda ) \sup\nolimits_{\alpha \in \Gamma} \langle \alpha, (s_A, b_E') \rangle & \mbox{by \eqnref{eq:V-star-sup}} \\
    & = \; \sup\nolimits_{\alpha \in \Gamma} \langle \alpha, (s_A, \lambda  b_E) \rangle +  \sup\nolimits_{\alpha \in \Gamma} \langle \alpha, (s_A, (1 - \lambda ) b_E') \rangle & \mbox{by \eqnref{expectation-eq}} \\
    & \ge \; \sup\nolimits_{\alpha \in \Gamma} \langle \alpha, (s_A, \lambda  b_E + (1 - \lambda ) b_E') \rangle & \mbox{rearranging} \\
    & = \; V^{\star}(s_A, \lambda  b_E + (1 - \lambda ) b_E') & \mbox{by \eqnref{eq:V-star-sup}}
\end{align*}
which proves that $V^{\star} (s_A, \cdot)$ is convex.

Next given $\alpha$ and $s_A$, let $V_{\alpha, s_A} (b_E) \coloneqq \langle \alpha, (s_A, b_E) \rangle$ for $(s_A, b_E) \in S_B$. For any $(s_A, b_E), (s_A, b_E') \in S_B$, without loss of generality, we can assume that $V_{\alpha, s_A}(b_E) \ge V_{\alpha, s_A} (b_E')$, and therefore:
\begin{align}
 & |V_{\alpha, s_A}(b_E) - V_{\alpha, s_A} (b_E')| = V_{\alpha, s_A}(b_E) - V_{\alpha, s_A} (b_E') \nonumber \\
 & = \langle \alpha, (s_A,b_E) \rangle - \langle  \alpha (s_A, b_E') \rangle   & \mbox{by definition of $V_{\alpha, s_A}$} \nonumber \\
 & \; \lefteqn{= \int_{s_E \in S_E^{s_A}} \alpha(s_A, s_E) b_E(s_E) \textup{d}s_E - \int_{s_E \in S_E^{s_A}}  \alpha (s_A, s_E) b'_E(s_E) \textup{d}s_E}  & \mbox{by \eqnref{expectation-eq}} \nonumber \\
 & = \int_{s_E \in S_E^{s_A}} \alpha (s_A, s_E) (b_E(s_E) - b_E'(s_E)) \textup{d}s_E & \mbox{rearranging.} \label{lip3-eqn}
\end{align}
Since $b_E, b_E' \in \mathbb{P}(S_E)$ and $(s_A, b_E), (s_A, b_E') \in S_B$, we have: 
\begin{equation}\label{lip1-eqn}
\int_{s_E \in S_E^{s_A}} b_E(s_E) \textup{d} s_E = \int_{s_E \in S_E^{s_A}} b_E'(s_E) \textup{d} s_E = 1 \, .
\end{equation}
Now, letting $S_E^{+} = \{ s_E \in S_E^{s_A} \mid b_E(s_E) - b_E'(s_E) > 0 \}$ and $S_E^{-} = \{ s_E \in S_E^{s_A} \mid b_E(s_E) - b_E'(s_E) \leq 0 \}$, rearranging \eqnref{lip1-eqn} and using the fact that $S_E^{+} \cup S_E^{-} =S_E^{s_A}$ it follows that:
\begin{align}
\int_{s_E \in S_E^{-}} (b_E(s_E) - b_E'(s_E)) \textup{d} s_E = - \int_{s_E \in S_E^{+}} (b_E(s_E) - b_E'(s_E)) \textup{d} s_E. \label{lip-eqn}
\end{align}
Next, using \eqnref{lip3-eqn}, the definition of $V_{\alpha, s_A}$ and \eqnref{expectation-eq}, it follows that $|V_{\alpha, s_A}(b_E) - V_{\alpha, s_A}(b_E')|$ equals:
\begin{align}
  \lefteqn{\!\!\!\!\! \int_{s_E \in S_E^+}\!\! \alpha(s_A, s_E) (b_E(s_E) - b_E'(s_E)) \textup{d}s_E + \int_{s_E \in S_E^-}\!\! \alpha (s_A, s_E) (b_E(s_E) - b_E'(s_E)) \textup{d}s_E} \nonumber \\
  & \; \lefteqn{\leq \int_{s_E \in S_E^+} U (b_E(s_E) - b_E'(s_E)) \textup{d}s_E +  \int_{s_E \in S_E^-}  L (b_E(s_E) - b_E'(s_E)) \textup{d}s_E} \nonumber \\
  & & \mbox{by definition of $S_E^+$, $S_E^-$, $U$ and $L$} \nonumber \\
  & \; \lefteqn{=  U \int_{s_E \in S_E^+}  (b_E(s_E) - b_E'(s_E)) \textup{d}s_E - L \int_{s_E \in S_E^+}  (b_E(s_E) - b_E'(s_E)) \textup{d}s_E} \nonumber \\
  & & \mbox{rearranging and using \eqnref{lip-eqn}} \nonumber \\
  & = k \int_{s_E \in S_E^+}  (b_E(s_E) - b_E'(s_E)) \textup{d}s_E  & \mbox{rearranging and letting $k = U - L$.} \label{lip2-eqn}
\end{align}
We can now derive the following upper bound for $V^{\star} (s_A, b_E)$:
\begin{align*}
\lefteqn{V^{\star} (s_A, b_E)  = \sup\nolimits_{\alpha \in \Gamma} \langle \alpha, (s_A, b_E) \rangle} && \mbox{by \eqref{eq:V-star-sup}} \\
& = \sup_{\alpha \in \Gamma}  V_{\alpha, s_A} (b_E) & \mbox{by definition of $ V_{\alpha, s_A}$} \\
& = \sup_{\alpha \in \Gamma} ( V_{\alpha, s_A} (b_E') + V_{\alpha, s_A} (b_E) - V_{\alpha, s_A} (b_E') ) & \mbox{rearranging} \\
& = \sup_{\alpha \in \Gamma} ( V_{\alpha, s_A} (b_E') + | V_{\alpha, s_A} (b_E) - V_{\alpha, s_A} (b_E')  | ) & \mbox{rearranging} \\
& \leq \sup_{\alpha \in \Gamma} \left\{ V_{\alpha, s_A} (b'_E) + k \int_{s_E \in S_E^+} \!\!\!\!\!\!  (b_E(s_E) - b_E'(s_E)) \textup{d}s_E \right\} & \mbox{by \eqnref{lip2-eqn}} \\
& = \sup_{\alpha \in \Gamma} ( V_{\alpha, s_A} (b'_E) ) + k \int_{s_E \in S_E^+} \!\!\!\!\!\!  (b_E(s_E) - b_E'(s_E)) \textup{d}s_E & \mbox{rearranging} \\
& = \sup_{\alpha \in \Gamma} \langle \alpha, (s_A,b'_E) \rangle + k \int_{s_E \in S_E^+} \!\!\!\!\!\!  (b_E(s_E) - b_E'(s_E)) \textup{d}s_E & \mbox{by definition of $V_{\alpha, s_A}$} \\
 & = V^{\star} (s_A, b_E') +  k \int_{s_E \in S_E^+} \!\!\!\!\!\!  (b_E(s_E) - b_E'(s_E)) \textup{d}s_E  & \mbox{by \eqnref{eq:V-star-sup}} \\
 & = V^{\star} (s_A, b_E') +  \frac{1}{2}k \int_{s_E \in S_E^{s_A}} \!\!  |b_E(s_E) - b_E'(s_E)| \textup{d}s_E  & \mbox{by \eqnref{lip-eqn}}.
\end{align*}    
Using similar steps we can also show:
\[
V^{\star} (s_A, b'_E) \leq V^{\star} (s_A, b_E) +  \frac{1}{2}k \int_{s_E \in S_E^{s_A}} \!\!  |b_E(s_E) - b_E'(s_E)| \textup{d}s_E
\]
and therefore the second part of the theorem follows. 
\end{proof}

\section{Proofs from Section~\ref{sec:HSVI}}

\begin{lema}[Lower bound]\label{lema:new-pwc-alpha-app}
At belief $(s_A, b_E) \in S_{B}$, the function $\alpha^{\star}$ generated by \algoref{alg:point-based-update-belief} is a PWC $\alpha$-function satisfying \eqref{eq:update-lb-condition}, $V_{\mathit{lb}}^{\Gamma} \leq V_{\mathit{lb}}^{\Gamma'} \leq V^{\star}$ and $V_{\mathit{lb}}^{\Gamma'}(s_A, b_E) \ge [TV_{\mathit{lb}}^{\Gamma}](s_A, b_E)$.
\end{lema}
\begin{proof}
    By following the proof of \thomref{thom:PWC-consistency} and how $\bar{a}$ and $\alpha^{s_A'}$ are constructed for all $s_A' \in S_A$, we can easily verify that $\alpha^{\star}$ in \algoref{alg:point-based-update-belief} is a PWC $\alpha$-functions that satisfies \eqref{eq:update-lb-condition}. 

    For any $V_1, V_2 \in \mathbb{F}(S_B)$, we use the shorthand $V_1 \leq V_2$ to denote that $V_1(\hat{s}_A, \hat{b}_E) \leq V_2(\hat{s}_A, \hat{b}_E)$ for all $(\hat{s}_A, \hat{b}_E) \in S_B$. 
    Now, in the case of the lower bound consider any $V_{\mathit{lb}}^{\Gamma}$ such that $V_{\mathit{lb}}^{\Gamma} \leq V^{\star}$. Since $\Gamma' = \Gamma \cup \{\alpha^{\star}\}$ after updating $V_{\mathit{lb}}^{\Gamma}$ at $(s_A, b_E)$ through \algoref{alg:point-based-update-belief}, for any $(\hat{s}_A,\hat{b}_E) \in S_B$:
\[         \max\nolimits_{\alpha \in \Gamma}\langle \alpha, (\hat{s}_A,\hat{b}_E) \rangle \leq \max\nolimits_{\alpha \in \Gamma'}\langle \alpha, (\hat{s}_A,\hat{b}_E) \rangle \, .
\]
Therefore combining this with the fact that $V_{\mathit{lb}}^{\Gamma}$ is a P-PWLC function
% a point-wise maximum
for the finite set $\Gamma$, see \defiref{defi:PWLC}, we have $V_{\mathit{lb}}^{\Gamma}(\hat{s}_A,\hat{b}_E) \leq V_{\mathit{lb}}^{\Gamma'}(\hat{s}_A,\hat{b}_E)$
and since $(\hat{s}_A,\hat{b}_E)$ was arbitrary $V_{\mathit{lb}}^{\Gamma} \leq V_{\mathit{lb}}^{\Gamma'}$.

Next, again using the fact $V_{\mathit{lb}}^{\Gamma}$ is a P-PWLC function
% a point-wise maximum
for the finite set $\Gamma$ we have:
\begin{align*}
V_{\mathit{lb}}^{\Gamma'}(s_A, b_E) & = \max\nolimits_{\alpha \in \Gamma'}\langle \alpha, (s_A, b_E) \rangle \\
& \ge \langle \alpha^{\star}, (s_A, b_E) \rangle & \mbox{rearranging} \\
& = [TV_{\mathit{lb}}^{\Gamma}](s_A, b_E) & \mbox{by \eqref{eq:update-lb-condition}}.
\end{align*}
In \algoref{alg:point-based-update-belief}, if the backup at line 6 is executed, then the Bellman operator is applied for some states in $\phi$ which may result in non-optimal Bellman backup for the other states in $\phi$, and if the backup at line 7 is executed, $\alpha^{\star}$ is assigned the lower bound $L$ in $\phi$. Therefore we have for any $(\hat{s}_A,\hat{b}_E) \in S_B$:
\begin{align}
\langle \alpha^{\star}, (\hat{s}_A,\hat{b}_E) \rangle 
& \leq [TV_{\mathit{lb}}^{\Gamma}](\hat{s}_A,\hat{b}_E) \nonumber \\
& \leq [TV^{\star}](\hat{s}_A,\hat{b}_E) & \mbox{since $V_{\mathit{lb}}^{\Gamma} \leq V^{\star}$} \nonumber \\
& = V^{\star} (\hat{s}_A,\hat{b}_E) & \mbox{by \thomref{thom:PWC-consistency}.} \label{eq:lb-alpha-v-1}
    \end{align}
Combining this inequality with $V_{\mathit{lb}}^{\Gamma} \leq V^{\star}$, we have $V_{\mathit{lb}}^{\Gamma'} \leq V^{\star}$ as required.
\end{proof}

\begin{lema}[ISPP backup]\label{lema:ISPP-backup-app}
    The FCP $\Phi_{\textup{product}}$ returned by \algoref{alg:ISPP-backup} is a constant-FCP of $\phi$ for $\alpha^{\star}$ and the region-by-region backup for $\alpha^*$ satisfies \eqref{eq:backup-value}.
\end{lema}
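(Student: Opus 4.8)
The plan is to show that $\alpha^{\star}$, defined pointwise on $\phi$ by \eqref{eq:backup-value}, decomposes into finitely many piecewise-constant ingredients, and that the ISPP algorithm builds precisely the common refinement on which every ingredient is constant; constancy of $\alpha^{\star}$ on each region of $\Phi_{\textup{product}}$ then makes the single-representative (region-by-region) backup exact. First I would expand the definition of $\mathit{bval}$ and use the decomposition $\delta_E = \sum_{i=1}^{N_e} \mu_i \delta_E^i$ from \aspref{asp:transitions} to rewrite, for $(s^\phi_A, s_E) \in \phi$ with $s_A' = (\loc', \per')$,
\[
\mathit{bval}((s^\phi_A, s_E), \bar{a}, s_A', \alpha^{s_A'}) = \beta\, \delta_A(s^\phi_A, \bar{a})(\loc') \sum_{i \,:\, \delta_E^i(s_E, \bar{a}) \in S_E^{s_A'}} \mu_i\, \alpha^{s_A'}(s_A', \delta_E^i(s_E, \bar{a})) \, .
\]
Since $\phi \in \Phi_P$ the agent state $s^\phi_A$ is fixed on $\phi$ (\lemaref{lema:perception-fcp}), so the coefficients $\delta_A(s^\phi_A, \bar{a})(\loc')$ are constants, and the only dependence on $s_E$ sits inside the terms $\alpha^{s_A'}(s_A', \delta_E^i(s_E, \bar{a}))$ and in $R_{\bar{a}}(s^\phi_A, s_E)$. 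The sum over $s_A'$ is thus effectively a sum over reachable local states $\loc' \in \Loc'$ (those with $\delta_A(s^\phi_A, \bar{a})(\loc') > 0$) and over the environment branches $i = 1, \dots, N_e$, exactly the loop of \algoref{alg:ISPP-backup}.

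Next I would argue, for each fixed pair $(\loc', i)$, that the map $s_E \mapsto \alpha^{(\loc', \per')}\big((\loc', \per'), \delta_E^i(s_E, \bar{a})\big)$, with $\per' = \obs_A(\loc', \delta_E^i(s_E, \bar{a}))$, is piecewise constant on $\phi$ and that the Image-Split-Preimage steps compute the corresponding partition. The Image step takes the image $\phi_E'$ of $\phi$ under $\delta_E^i$ and divides it by $\obs_A(\loc', \cdot)$, so each image region $\phi_{\textup{image}}$ carries a single percept, hence a unique agent state $s_A^{\phi_{\textup{image}}} = (\loc', \per')$ and a well-defined $\alpha$-function $\alpha^{s_A^{\phi_{\textup{image}}}}$. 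The Split step intersects $\phi_{\textup{image}}$ with a constant-FCP $\Phi_\alpha$ of this $\alpha$-function, so $\alpha^{s_A^{\phi_{\textup{image}}}}$ is constant on every refined region of $\Phi_{\textup{split}}$. The Preimage step pulls each such region back through $\delta_E^i$ to a sub-region of $\phi$, producing an FCP $\Phi_{\textup{pre}}$ on which both $\per'$ and the value $\alpha^{s_A'}(s_A', \delta_E^i(s_E, \bar{a}))$ are constant; the membership $\delta_E^i(s_E, \bar{a}) \in S_E^{s_A'}$ is likewise determined by the percept and so is constant. Consequently the single summand indexed by $(\loc', i)$ is constant on each region of $\Phi_{\textup{pre}}$.

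The Product step then forms the common refinement of these FCPs over all $(\loc', i)$, so the full inner sum $\sum_{s_A'} \mathit{bval}$ is constant on every region of the accumulated $\Phi_{\textup{product}}$; intersecting once more with the reward FCP $\Phi_R^{\bar{a}}$ (\aspref{asp:PWC-rewards}) makes $R_{\bar{a}}$ constant as well. Hence $\alpha^{\star} = R_{\bar{a}} + \sum_{s_A'}\mathit{bval}$ is constant on each final region, i.e.\ $\Phi_{\textup{product}}$ is a constant-FCP of $\phi$ for $\alpha^{\star}$, and evaluating \eqref{eq:backup-value} at any representative $(\hat{s}_A, \hat{s}_E)$ of a region---as the Value-backup loop does---returns its value throughout that region, proving the region-by-region backup agrees with \eqref{eq:backup-value}. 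I expect the main obstacle to be the bookkeeping that ties the abstract sum over $s_E' \in \Theta_{s_E}^{\bar{a}} \cap S_E^{s_A'}$ in $\mathit{bval}$ to the concrete branch-by-branch images $\delta_E^i(s_E, \bar{a})$: one must check that grouping image points by $\obs_A(\loc', \cdot)$ correctly selects the agent state $s_A'$ whose $\alpha$-function is used in the Split, and that the percept-compatibility constraint $s_E' \in S_E^{s_A'}$ is automatically respected by this grouping, so that no contributing term is dropped or double-counted.
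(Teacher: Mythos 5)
Your proof is correct and takes essentially the same route as the paper's: both arguments show that the Image--Split--Preimage--Product operations yield a refinement of $\phi$ on which the reward and every backup contribution from the functions $\alpha^{s_A'}$ are constant, so that evaluating \eqref{eq:backup-value} at a single representative per region of $\Phi_{\textup{product}}$ is exact. If anything, your branch-by-branch expansion of $\mathit{bval}$ via $\delta_E = \sum_{i=1}^{N_e}\mu_i\delta_E^i$, together with the observation that grouping image points by $\obs_A(\loc',\cdot)$ selects exactly one agent state per branch, is more explicit than the paper's proof, which aggregates the constant-FCPs of the $\alpha^{s_A'}$ into a single FCP $\Phi$, invokes \aspref{asp:transitions} for a preimage FCP, and asserts that all states in a product region have the same reward and reach the same regions of $\Phi$.
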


\begin{proof}
     For the PWC $\alpha$-functions in the input of \algoref{alg:ISPP-backup}, let $\Phi = \sum_{s_A' \in \bar{S}_A} \Phi_{s_A'}$, where $\Phi_{s_A'}$ is an FCP of $S$ for $\alpha^{s_A'}$.
    
    According to \aspref{asp:transitions-rewards}, there exists a preimage-FCP of $\Phi$.
    % for action $\bar{a}$. 
    Through the image, split, preimage and product operations of \algoref{alg:ISPP-backup}, all the states in any region $\phi' \in \Phi_{\textup{product}}$ have the same reward and reach the same regions of $\Phi$. Since each $\alpha$-function $\alpha^{s_A'}$ is constant over each region in $\Phi$, all states in $\phi'$ have the same backup value from $\alpha^{s_A'}$ for $s_A' \in \bar{S}_A$. This implies that $\Phi_{\textup{product}}$ is a preimage FCP of $\Phi$ for action $\bar{a}$. Since the value backup \eqref{eq:backup-value} is used for each region in $\Phi_{\textup{product}}$ and the image is from the region $\phi$, then $\Phi_{\textup{product}}$ is a constant-FCP of $\phi$ for $\alpha^{\star}$, and thus the value backup \eqref{eq:backup-value} for $\alpha^{\star}$ is achieved by considering the regions of $\Phi_{\textup{product}}$. 
\end{proof}

\begin{lema}[Upper bound]
Given belief $(s_A, b_E) \in S_{B}$, if $p^{\star} = [TV_{\mathit{ub}}^{\Upsilon}](s_A, b_E)$, then $p^\star$ is an upper bound of $V^{\star}$ at $(s_A, b_E)$, i.e., $p^{\star} \ge V^{\star} (s_A, b_E)$, and if $\Upsilon' = \Upsilon \cup \{ ((s_A, b_E), p^{\star}) \}$, then $V_{\mathit{ub}}^{\Upsilon} \ge V_{\mathit{ub}}^{\Upsilon'} \ge V^{\star}$ and $V_{\mathit{ub}}^{\Upsilon'}(s_A, b_E) \leq [TV_{\mathit{ub}}^{\Upsilon}](s_A, b_E)$.
\end{lema}
\begin{proof}
    Consider an upper bound $V_{\mathit{ub}}^{\Upsilon}$ such that $V_{\mathit{ub}}^{\Upsilon} \ge V^{\star}$. By construction, each pair $((s_A^i, b_E^i), y_i)$ in $\Upsilon$ satisfies $V^{\star}(s_A^i, b_E^i) \leq y_i$. 

     Now suppose for belief $(s_A, b_E) \in S_{B}$ we let $p^{\star} = [TV_{\mathit{ub}}^{\Upsilon}](s_A, b_E)$ and $\Upsilon' = \Upsilon \cup \{((s_A, b_E), p^{\star})\}$. The new upper bound $V_{\mathit{ub}}^{\Upsilon'}$  after updating $V_{\mathit{ub}}^{\Upsilon}$ at $(s_A, b_E)$ through \algoref{alg:point-based-update-belief}, satisfies $V_{\mathit{ub}}^{\Upsilon} \ge V_{\mathit{ub}}^{\Upsilon'}$ by \eqref{eq:new-ub}. By construction of $p^{\star}$ we have:
    \begin{align*}
        p^{\star} & = [TV_{\mathit{ub}}^{\Upsilon}] (s_A, b_E) \\
        & \ge [TV^{\star}] (s_A, b_E) & \mbox{since $V_{\mathit{ub}}^{\Upsilon} \ge V^{\star}$} \\
        & = V^{\star} (s_A, b_E) & \mbox{by \thomref{thom:PWC-consistency}.}
    \end{align*}
    Next we have:
    \begin{align*}
        V_{\mathit{ub}}^{\Upsilon'}(s_A, b_E) & \leq p^{\star}  & \mbox{since $((s_A, b_E), p^{\star}) \in \Upsilon'$ and \eqref{eq:new-ub}} \\
        & = [TV_{\mathit{ub}}^{\Upsilon}] (s_A, b_E) & \mbox{by construction of $p^{\star}$.}
    \end{align*}
It therefore remains to prove the last part, i.e.\ that $V_{\mathit{ub}}^{\Upsilon'} \ge V^{\star}$. Now for any $(s_A',b_E') \in S_B$, if $s_A' \neq s_A$, then using the  fact that $\Upsilon' = \Upsilon \cup \{ ((s_A, b_E), p^{\star}) \}$ and \eqref{eq:new-ub} we have: 
\begin{align*}
 V_{\mathit{ub}}^{\Upsilon'}(s_A',b_E') & = V_{\mathit{ub}}^{\Upsilon}(s_A',b_E') \\
 & \geq V^{\star}(s_A',b_E') & \mbox{since $V_{\mathit{ub}}^{\Upsilon} \ge V^{\star}$.}
\end{align*}
On the other hand, if $s_A' = s_A$, then using \eqref{eq:new-ub} there exists $\langle \hat{\lambda}_i \rangle_{i \in I_{s_{\scale{.75}{A}}}}$ with $\hat{\lambda}_i \ge 0$ and $\mbox{$\sum\nolimits_{i \in I_{s_{\scale{.75}{A}}}}$} \hat{\lambda}_i = 1$ such that:
\begin{align}\label{ub-eqn}
     V_{\mathit{ub}}^{\Upsilon'}(s_A',b_E') = \mbox{$\sum\nolimits_{i \in I_{s_{\scale{.75}{A}}}}$} \hat{\lambda}_i y_i  + K_{\mathit{ub}} \left(b_E', \mbox{$\sum\nolimits_{i \in I_{s_{\scale{.75}{A}}}}$} \hat{\lambda}_i b_E^i \right) \, .
\end{align}
Now using \thomref{thom:continuity} we have:
\begin{align*}
\lefteqn{ V^{\star} (s_A',b_E') \leq V^{\star} (s_A,\mbox{$\sum\nolimits_{i \in I_{s_{\scale{.75}{A}}}}$} \hat{\lambda}_i b_E^i) + K \left(b_E', \mbox{$\sum\nolimits_{i \in I_{s_{\scale{.75}{A}}}}$} \hat{\lambda}_i b_E^i \right)} \\
& \; \lefteqn{\leq \mbox{$\sum\nolimits_{i \in I_{s_{\scale{.75}{A}}}}$} \hat{\lambda}_i  V^{\star} (s_A,b_E^i) + K \left(b_E', \mbox{$\sum\nolimits_{i \in I_{s_{\scale{.75}{A}}}}$} \hat{\lambda}_i b_E^i \right)} & \mbox{since $V^\star$ is convex in $S_B$}  \\
& \; \lefteqn{\leq \mbox{$\sum\nolimits_{i \in I_{s_{\scale{.75}{A}}}}$} \hat{\lambda}_i  V^{\star} (s_A,b_E^i) + K_{\mathit{ub}} \left(b_E', \mbox{$\sum\nolimits_{i \in I_{s_{\scale{.75}{A}}}}$} \hat{\lambda}_i b_E^i \right)} & \mbox{by \eqref{eq:K-UB-condition}}  \\
& \leq \mbox{$\sum\nolimits_{i \in I_{s_{\scale{.75}{A}}}}$} \hat{\lambda}_i  y_i + K_{\mathit{ub}} \left(b_E', \mbox{$\sum\nolimits_{i \in I_{s_{\scale{.75}{A}}}}$} \hat{\lambda}_i b_E^i \right) &   \mbox{since if $i \in I_{s_{\scale{.75}{A}}}$, then $((s_A,b_E^i),y_i) \in \Upsilon$}  \\
& = V_{\mathit{ub}}^{\Upsilon'}(s_A',b_E') & \mbox{by \eqnref{ub-eqn}.} 
\end{align*}
Therefore since these are the only cases to consider for $(s_A',b_E') \in S_B$ we have  $V_{\mathit{ub}}^{\Upsilon'} \ge V^{\star}$ as required.
\end{proof}

\begin{lema}[LP for upper bound]\label{lema:pb-upper-bound-app}
The function $K_{\mathit{ub}}$ from \eqref{eq:pb-k-ub} satisfies \eqref{eq:K-UB-condition}, and for particle-based belief $(s_A, b_E)$ represented by $\{ (s_E^i, w_i) \}_{i=1}^{N_b}$, we have that $V_{\mathit{ub}}^{\Upsilon}(s_A,b_E)$ is the optimal value of the LP: 
\[    \begin{array}{rl}
    \textup{minimize:} \; & \sum_{k \in I_{s_{\scale{.75}{A}}}} \lambda_k y_k  + (U - L) N_b c \\
  \textup{subject to:} \; & c \ge | w_i - \sum_{k \in I_{s_{\scale{.75}{A}}}} \lambda_k P(s_E^i; b_E^k) |  \textup{ for } 1 \leq i \leq N_b \\
  & \lambda_k \ge 0 \textup{ for } k \in I_{s_{\scale{.75}{A}}} \; \mbox{and} \; \sum_{k \in I_{s_{\scale{.75}{A}}}} \lambda_k = 1\, .
\end{array}
\]
\end{lema}

\begin{proof}
    Consider any particle-based beliefs $(s_A,b_E)$ and $(s_A,b_E')$ where $(s_A,b_E)$ is represented by the weighted particle set $\{ (s_E^i, w_i) \}_{i=1}^{N_b}$. Let  $S_E^{b_E>b_E'} = \{ s_E \in S_E^{s_A} \mid b_E(s_E) - b_E'(s_E) > 0 \}$. Now by definition of $K(b_E, b_E')$, see \thomref{thom:continuity}, we have:
    \begin{align*}
        \lefteqn{K(b_E, b_E')  = (U - L) \int_{s_E \in S_E^{b_{\scale{.75}{E}}>b_{\scale{.75}{E}}^{\scale{.75}{'}}}} (b_E(s_E) - b_E'(s_E)) \textup{d}s_E} \\
        & = (U - L) \int_{s_E \in S_E^{b_{\scale{.75}{E}}>b_{\scale{.75}{E}}^{\scale{.75}{'}}}} | b_E(s_E) - b_E'(s_E)| \textup{d}s_E & \mbox{by definition of $S_E^{b_E>b_E'}$} \\
        & \leq (U - L) \mbox{$\sum\nolimits_{i=1}^{N_b}$} \left| P(s_E^i;b_E) - P(s_E^i;b_E') \right|
        & \mbox{by \defiref{defi:particle-based-belief}} \\
        & \leq  (U - L)  N_b \max\nolimits_{1 \leq i \leq N_b} \left|P(s_E^i; b_E) - P(s_E^i; b_E') \right| & \mbox{rearranging} \\
        & \; \lefteqn{=  (U - L)  N_b \max\nolimits_{s_E \in S_E \wedge b_E(s_E) > 0} \left| P(s_E; b_E) - P(s_E; b_E') \right|} & \mbox{rearranging} \\
        & = K_{\mathit{ub}}(b_E, b_E') & \mbox{by \eqnref{eq:pb-k-ub}.}
    \end{align*}
Furthermore, \eqref{eq:pb-k-ub} implies $K_{\mathit{ub}}(b_E, b_E) = 0$, and therefore  $K_{\mathit{ub}}$ satisfies \eqref{eq:K-UB-condition}. 

 Next suppose $\Upsilon = \{ ((s_A^k, b_E^k), y_k) \mid k \in I\}$, letting $b_E' = \sum_{k \in I_{s_{\scale{.75}{A}}}} \lambda_k b_E^k$ and $c = \max\nolimits_{s_E \in S_E \wedge b_E(s_E) > 0} |P(s_E; b_E) - P(s_E; b_E')|$, by definition of $K_{\mathit{ub}}$, see \eqref{eq:pb-k-ub}, we have:
 \begin{align*}
\hspace*{-2cm} K_{\mathit{ub}}(b_E, b_E') & \; \lefteqn{= (U - L)  N_b \max\nolimits_{s_E \in S_E \wedge b_E(s_E) > 0} |P(s_E; b_E) - P(s_E; b_E')|} \\
& =  (U - L)  N_b c & \mbox{by definition of $c$}
\end{align*}
% by definition of $c$, 
and therefore:
    \begin{align*}
        \mbox{$\sum\nolimits_{k \in I_{s_{\scale{.75}{A}}}}$}\lambda_k y_k + K_{\mathit{ub}}(b_E, b_E') = \mbox{$\sum\nolimits_{k \in I_{s_{\scale{.75}{A}}}}$}\lambda_k y_k + (U - L)  N_b c \, .
    \end{align*}
Furthermore, for any $1 \leq i \leq N_b$, by definition of $c$ and since $(s_A,b_E)$ is represented by the weighted particle set $\{ (s_E^i, w_i) \}_{i=1}^{N_b}$ we have:
    \begin{align*}
        c & \ge \left|P(s_E^i; b_E) - P(s_E^i; b_E') \right| \\
        & = \left|w_i - P(s_E^i; b_E') \right| & \mbox{since $\{ (s_E^i, w_i) \}_{i=1}^{N_b}$ represents $b_E$} \\
        & = \left|w_i - P \left(s_E^i; \mbox{$\sum_{k \in I_{s_{\scale{.75}{A}}}}$} \lambda_k b_E^k \right) \right| & \mbox{by definition of $b_E'$.}
    \end{align*}
    Therefore, the optimisation problem \eqref{eq:new-ub} can be equivalently written as the LP of \lemaref{lema:pb-upper-bound}, i.e., the optimal value is equal to $ V_{\mathit{ub}}^{\Upsilon}(s_A,b_E)$.
\end{proof}

\begin{lema}[Region-based belief closure]
    If $\delta_E^i(\cdot, a) : S_E \rightarrow \delta_E^i(S_E,a)$ is piecewise differentiable and invertible from $S_E$ to $T \subseteq S_E$, and the Jacobian determinant of the inverse function, i.e., for any $s_E' \in T$:
    \begin{align*}
        \textup{Jac}(s_E') \coloneqq \textup{det} \left( \frac{\textup{d} \delta_E^{i,-1}(s_E', a)}{ \textup{d} s_E'} \right)
    \end{align*}
    is PWC for $a \in \Act$ and $1 \leq i \leq N_e$, then region-based beliefs are closed under belief updates.
\end{lema}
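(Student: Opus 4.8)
The plan is to reduce the belief update \eqref{eq:belief-update} to a composition of elementary operations, each of which provably maps a piecewise constant (PWC) density to a PWC density; since a region-based belief is exactly a normalised PWC density supported on $S_E^{s_A}$, closure then follows. First I would use $\delta_E = \sum_{i=1}^{N_e} \mu_i \delta_E^i$ from \aspref{asp:transitions} to split the numerator of \eqref{eq:belief-update} as $\sum_{i=1}^{N_e} \mu_i \int_{s_E \in S_E} b_E(s_E)\delta_E^i(s_E,a)(s_E')\,\textup{d}s_E$, and recognise each summand as the density at $s_E'$ of the image (pushforward) of the measure $b_E(s_E)\,\textup{d}s_E$ under the deterministic map $g_i \coloneqq \delta_E^i(\cdot, a)$. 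Because the mixture index set is finite, it suffices to prove the PWC property for a single pushforward and then sum.

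The key step is a change-of-variables argument. On each piece where $g_i$ is differentiable and invertible, the pushforward density equals $(b_E \circ g_i^{-1})(s_E')\cdot|\textup{Jac}(s_E')|$, with $\textup{Jac}$ the Jacobian determinant of the inverse as defined in the lemma. I would then argue that both factors are PWC in $s_E'$: the hypothesis supplies $\textup{Jac}$ PWC directly, while $b_E \circ g_i^{-1}$ is PWC because $b_E$ is the constant $w_j$ on each region $\phi_E^j$, so $b_E \circ g_i^{-1}$ is $w_j$ on the image $g_i(\phi_E^j)$, which is a connected region since $g_i$ is continuous and invertible. As the product of two PWC functions is PWC (refine the two underlying FCPs via the $\Phi_1+\Phi_2$ operation), each pushforward density is PWC, and the $\mu_i$-weighted sum over $i$ is again PWC.

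Finally I would dispatch the restriction to $S_E^{s_A'}$ and the normalisation. Multiplying by $\chi_{S_E^{s_A'}}$, whose support is a union of perception-FCP regions by \lemaref{lema:perception-fcp}, preserves piecewise constancy, and dividing by the strictly positive constant denominator of \eqref{eq:belief-update} merely rescales the weights; the outcome is a weighted sum of region indicators supported in $S_E^{s_A'}$, i.e.\ a region-based belief, matching the region-by-region computation of \algoref{alg:rb-belief-update}. The hard part will be making the pushforward/change-of-variables step rigorous: the quantity $\delta_E^i(s_E,a)(s_E')$ is a Dirac-type transition kernel rather than an honest density, so I must justify carefully that $\int b_E(s_E)\delta_E^i(s_E,a)(s_E')\,\textup{d}s_E$ coincides with the change-of-variables expression on each differentiable-and-invertible piece, and then patch the finitely many pieces together — handling their boundaries and possible overlaps — without destroying the PWC structure.
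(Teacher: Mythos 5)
Your proposal is correct, and its core step---the change-of-variables computation showing that the pushforward density is $(b_E \circ \delta_E^{i,-1})(s_E')\cdot\textup{Jac}(s_E')$, with both factors PWC---is exactly the argument the paper uses. The differences are in scope and formulation. The paper proves only the single-map, single-region statement: a uniform density on one region $\phi_E$, pushed through one differentiable-and-invertible piece of $\delta_E^i(\cdot,a)$, yields a PWC density over finitely many image regions; closure is then asserted, leaving the mixture over $i$, the restriction to $S_E^{s_A'}$ and the normalisation implicit (these are worked out separately in \algoref{alg:rb-belief-update} and \lemaref{lema:region-belief-update}). You instead decompose the full belief update \eqref{eq:belief-update} and track the PWC property through every stage---mixture, indicator restriction, normalisation---which is more complete bookkeeping, and your observations there (finiteness of the mixture, the support of $\chi_{S_E^{s_{\scale{.75}{A}}'}}$ being a union of perception-FCP regions, the denominator being a positive constant in $s_E'$) are all sound. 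Finally, the ``hard part'' you flag---justifying the change of variables when $\delta_E^i(s_E,a)(s_E')$ is a Dirac-type kernel---is resolved in the paper precisely by avoiding pointwise kernel manipulation: it introduces the transformed random variable $X_E' = \delta_E^i(X_E,a)$ and computes $P(X_E' \in \phi_1')$ for subregions $\phi_1'$ of the image as an ordinary integral of $b_E$ over the preimage, then applies the change of variables to that integral and reads off the density $b_E'$ from the resulting representation; your plan becomes rigorous by the same device. One small point of care: the change-of-variables formula requires the absolute value of the Jacobian determinant (as you wrote it), which the paper's notation omits.
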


\begin{proof}
    Since $\delta_E^i(\cdot, a)$ is piecewise differentiable and piecewise invertible, let $\phi_E \subseteq S_E$ be a region over which  $\delta_E^i(\cdot, a)$ is differentiable and invertible. Suppose that $X_E$ is a random variable taking values in $\phi_E$, and that $X_E$ has a continuous uniform distribution with probability density function $b_E$. Due to the differentiability and thus continuity of $\delta_E^i(\cdot, a)$, the image $\phi_E' = \{ s_E' \mid s_E' = \delta_E^i(s_E, a) \wedge s_E \in \phi_E \}$ is a region in $S_E$. Furthermore, suppose $X_E' = \delta_E^i(X_E, a)$ is a new random variable taking values in $\phi_E'$ and let $b_E'$ be the probability density function for $X_E'$ over $\phi_E'$. We next prove that $b_E'$ is a PWC uniform distribution under the given conditions.

    Let $\delta_E^{i,-1}(\cdot, a)$ be the inverse function of $\delta_E^i(\cdot, a)$ in $\phi_E$. If $\phi_1' \subseteq \phi_E'$, letting $\phi_1$ be the preimage of $\phi'_1$, then 
    \begin{align}
        P(X_E' \in \phi_1') & = P(\delta_E^i(X_E, a) \in \phi_1') & \mbox{since $X_E' = \delta_E^i(X_E, a)$} \nonumber \\
        & = \int_{s_E \in \phi_1} b_E(s_E) \textup{d} s_E & \mbox{by definition of $b_E$.} \label{region1-eqn}
    \end{align}
    Using the change of variables $s_E = \delta_E^{i, -1}(s_E', a)$ we have that: 
    \begin{align*}
        \textup{d} s_E & = \textup{det} \left( \frac{\textup{d} \delta_E^{i,-1}(s_E', a)}{ \textup{d} s_E'} \right) \textup{d} s_E' \\
        & = \textup{Jac}(s_E') \textup{d} s_E' & \mbox{by definition of the Jacobian determinant}   \end{align*}
and substituting this into \eqnref{region1-eqn} we have:
\[
P(X_E' \in \phi_1') = \int_{s_E' \in \phi_1'} b_E(\delta_E^{i,-1}(s_E', a)) \textup{Jac}(s_E') \textup{d} s_E' \, .
\]
Therefore we have that for any $s_E' \in \phi_1'$:
           \begin{align*} 
        b_E'(s_E') = b_E(\delta_E^{i,-1}(s_E', a)) \textup{Jac}(s_E')
    \end{align*}
and since $b_E(\delta_E^{i,-1}(s_E', a)) = b_E(s_E)$ for $s_E \in \phi_E$ is constant and by construction $ \textup{Jac}(s_E')$ is PWC, we have that $b_E'$ is PWC over $\phi_E'$ as required.

We conclude that $\delta_E^i(\cdot, a)$ transforms a random variable which has a continuous uniform distribution in a region into a new random variable which has a continuous uniform distribution over finitely many regions. Therefore, region-based belief are closed under $\delta_E^i(\cdot, a)$.
\end{proof}

\begin{lema}[Region-based belief update] For region-based belief $(s_A, b_E)$ represented by $\{ (\phi_E^i, w_i) \}_{i = 1}^{N_b}$, action $a$ and observation $s_A'$: $(s_A', b_E')$ returned by \algoref{alg:rb-belief-update} is region-based and $b_E' = b_E^{s_A, a, s_A'}$. Furthermore, if $h : S \to \mathbb{R}$ is PWC and $\Phi_E$ is a constant-FCP of $S_E$ for $h$ at $s_A$, then $\langle h, (s_A, b_E) \rangle = \sum_{i = 1}^{N_b} \sum_{\phi_E \in \Phi_E} h(s_A, s_E) w_i \textup{vol}(\phi_E^i \cap \phi_E)$ where $s_E \in \phi_E$.
\end{lema}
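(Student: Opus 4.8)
The plan is to treat the two parts separately, establishing correctness of the belief update first and then the expectation formula, which is essentially a direct calculation. For Part~1, I would begin with the case split in \algoref{alg:rb-belief-update}. If $\delta_A(s_A,a)(\loc')=0$, then by \eqnref{eq:probability-obs-agent-state} the probability $P(s_A'\mid(s_A,b_E),a)=0$, so the observation $s_A'=(\loc',\per')$ cannot occur and the updated belief is the zero function, matching the else branch. For the main case $\delta_A(s_A,a)(\loc')>0$, I would start from the closed-form update \eqnref{eq:belief-update}, noting that the $\delta_A$ factor has already cancelled between numerator and denominator so that only $\delta_E$ enters. Substituting the decomposition $\delta_E=\sum_{j=1}^{N_e}\mu_j\delta_E^j$ from \aspref{asp:transitions} and the representation $b_E=\sum_{i=1}^{N_b}\chi_{\phi_E^i}w_i$, the unnormalised numerator becomes a sum over $i$ and $j$ of the pushforward of the uniform density $w_i$ on $\phi_E^i$ under the deterministic map $\delta_E^j(\cdot,a)$, scaled by $\mu_j$.

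The technical heart is computing each such pushforward. Under the hypotheses inherited from \lemaref{lema:uniform-closure} (piecewise differentiability, invertibility, and a PWC Jacobian determinant of the inverse), I would apply the change of variables $s_E=\delta_E^{j,-1}(s_E',a)$ on the image region $\phi_E'=\{\delta_E^j(s_E,a)\mid s_E\in\phi_E^i\}$, refining $\phi_E^i$ first so that the Jacobian is constant on it, exactly as in the proof of \lemaref{lema:uniform-closure}. This yields a pushforward density $w_i\mu_j\,\textup{Jac}(s_E')$ that is constant over $\phi_E'$. Using the identity $\int_{\phi_E'}\textup{Jac}(s_E')\,\textup{d}s_E'=\textup{vol}(\phi_E^i)$ together with this constancy gives $\textup{Jac}=\textup{vol}(\phi_E^i)/\textup{vol}(\phi_E')$, so the image density equals $w'=(\textup{vol}(\phi_E^i)/\textup{vol}(\phi_E'))w_i\mu_j$, precisely the weight computed in the \emph{Weight update} step. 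Hence the numerator of \eqnref{eq:belief-update} is exactly represented by the weighted regions accumulated in $\mathcal{P}$.

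To finish Part~1, I would account for the observation constraint: subdividing each image by $\obs_A(\loc',\cdot)$ and retaining only the regions $\phi_E\subseteq S_E^{s_A'}$ realises both the restriction to $s_E'\in S_E^{s_A'}$ and, when the retained mass is totalled, the summation $\sum_{s_E''\in S_E^{s_A'}}$ appearing in the denominator of \eqnref{eq:belief-update}. Normalising the weights in $\mathcal{P}$ then divides by precisely this denominator, so $b_E'$ coincides with $b_E^{s_A,a,s_A'}$; since it is a finite nonnegative combination of region indicators, it is region-based in the sense of \defiref{defi:region-based-belief}. For Part~2, I would expand $\langle h,(s_A,b_E)\rangle$ via \eqnref{expectation-eq}, insert $b_E=\sum_i\chi_{\phi_E^i}w_i$, and split each integral over $\phi_E^i$ along the constant-FCP $\Phi_E$; since $h(s_A,\cdot)$ is constant on every $\phi_E\in\Phi_E$, each piece contributes $h(s_A,s_E)\,\textup{vol}(\phi_E^i\cap\phi_E)$ for any $s_E\in\phi_E$, giving the stated double sum.

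I expect the change-of-variables/weight-update step to be the main obstacle, since it requires carefully handling the deterministic (Dirac) nature of $\delta_E^j$ in the continuous setting, ensuring the refinement that makes the Jacobian constant on each region, and checking that the overlapping regions permitted by \defiref{defi:region-based-belief} do not disrupt the linearity of the pushforward; by contrast, the normalisation and the Part~2 computation are routine.
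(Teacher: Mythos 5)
Your proposal is correct and follows essentially the same route as the paper's proof: substitute the region-based representation into the closed-form update \eqnref{eq:belief-update}, establish that each deterministic component $\delta_E^j(\cdot,a)$ pushes a uniform density on $\phi_E^i$ to a uniform density on its image with weight $(\textup{vol}(\phi_E^i)/\textup{vol}(\phi_E'))w_i\mu_j$, then restrict to $S_E^{s_A'}$ and normalise, with the identical direct calculation for the expectation formula. The only difference is stylistic: where the paper simply cites the ``uniformly reached'' conclusion of \lemaref{lema:uniform-closure}, you re-derive it inline via the change of variables and the identity $\textup{Jac}=\textup{vol}(\phi_E^i)/\textup{vol}(\phi_E')$, which makes the origin of the weight-update formula more explicit but is the same underlying argument.
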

\begin{proof}
Consider a region-based belief $(s_A, b_E)$ represented by $\{ (\phi_E^i, w_i) \}_{i = 1}^{N_b}$, action $a$ and observation $s_A'$ and suppose that the belief $(s_A', b_E')$ is returned by \algoref{alg:rb-belief-update}.

Since $\delta_E^i(\cdot, a)$ is piecewise continuous by \lemaref{lema:uniform-closure}, then for any region $\phi_E \subseteq \Phi_E$, the image $\{ \delta_E^i(s_E, a) \mid s_E \in \phi_E \}$ can be represented as a union of regions. Furthermore, due to the invertibility of $\delta_E^i(\cdot, a)$, these regions are disjoint and the image is uniformly reached.  Letting $\phi_{ij} = \{ \delta_E^j(s_E, a) \mid s_E \in \phi_E^i \}$, according to the belief update \eqref{eq:belief-update} and the belief expression in \defiref{defi:region-based-belief}, we have:
    \begin{align*}
        \lefteqn{\int_{s_E \in S_E} \!\!\!\!\!\! b_E(s_E) \delta_E(s_E,a)(s_E') \textup{d} s_E  = \int_{s_E \in S_E} \left( \mbox{$\sum\nolimits_{i = 1}^{N_b}$} \chi_{\phi_E^i}(s_E) w_i \right) \delta_E(s_E,a)(s_E') \textup{d} s_E} \\
        & = \mbox{$\sum\nolimits_{i = 1}^{N_b}$} \left( \int_{s_E \in S_E}  \chi_{\phi_E^i}(s_E) w_i  \delta_E(s_E,a)(s_E') \textup{d} s_E \right) & \mbox{rearranging} \\
        & = \mbox{$\sum\nolimits_{i = 1}^{N_b}$} \left( \int_{s_E \in \phi_E^i}   w_i \delta_E(s_E,a)(s_E') \textup{d} s_E \right) & \mbox{by definition of $\chi_{\phi_E^i}$} \\
        & = \mbox{$\sum\nolimits_{i = 1}^{N_b}$} \left(  \int_{s_E \in \phi_E^i} w_i  \left( \mbox{$\sum\nolimits_{j = 1}^{N_e}$} \chi_{\phi_E^{ij}}(s_E') \frac{\mu_j}{\textup{vol}(\phi_E^{ij})}\textup{d} s_E \right) \right) \\
        & \lefteqn{\hspace*{0.68cm} \mbox{by definition of $\phi_{ij}$ and since it is uniformly reached by \lemaref{lema:uniform-closure}}} \\
        & = \mbox{$\sum\nolimits_{i = 1}^{N_b}$} \mbox{$\sum\nolimits_{j = 1}^{N_e}$}  \left( \int_{s_E \in \phi_E^i} w_i  \chi_{\phi_E^{ij}}(s_E') \frac{\mu_j}{\textup{vol}(\phi_E^{ij})}\textup{d} s_E \right) & \mbox{rearranging} \\
        & = \mbox{$\sum\nolimits_{i = 1}^{N_b}$} \mbox{$\sum\nolimits_{j = 1}^{N_e}$}  w_i  \chi_{\phi_E^{ij}}(s_E') \frac{\mu_j}{\textup{vol}(\phi_E^{ij})} \left( \int_{s_E \in \phi_E^i} \textup{d} s_E \right) &\mbox{rearranging} \\
        & = \mbox{$\sum\nolimits_{i = 1}^{N_b}$} \mbox{$\sum\nolimits_{j = 1}^{N_e}$} \chi_{\phi_E^{ij}}(s_E') \frac{ w_i   \mu_j \textup{vol}(\phi_E^i) }{\textup{vol}(\phi_E^{ij})} & \mbox{by definition of $\textup{vol}$.}
    \end{align*}
 Therefore, $b_E'$ can be constructed by normalizing $\int_{s_E \in S_E}b_E(s_E) \delta_E(s_E,a)(s_E') \textup{d} s_E$, which is a region-based belief. 
 
Next, consider any observation $s_A$ and PWC $h : S \to \mathbb{R}$ where $\Phi_E$ is a constant-FCP of $S_E$ for $h$ at $s_A$. By \eqref{expectation-eq} we have:
    \begin{align*}
        \lefteqn{\langle h, (s_A, b_E) \rangle = \int_{s_E \in S_E} h(s_A, s_E) b_E(s_E) \textup{d} s_E} \\
        & = \int_{s_E \in S_E} h(s_A, s_E) \mbox{$\sum\nolimits_{i = 1}^{N_b}$} \chi_{\phi_E^i}(s_E) w_i \textup{d} s_E & \mbox{by \defiref{defi:region-based-belief}}\\
        & = \mbox{$\sum\nolimits_{i = 1}^{N_b}$} \left( \int_{s_E \in S_E} h(s_A, s_E) \chi_{\phi_E^i}(s_E)  w_i  \textup{d} s_E \right) & \mbox{rearranging} \\
        & = \mbox{$\sum\nolimits_{i = 1}^{N_b}$}  \mbox{$\sum\nolimits_{\phi_E \in \Phi_E}$} \left( \int_{s_E \in \phi_E} h(s_A, s_E) \chi_{\phi_E^i}(s_E)  w_i  \textup{d} s_E \right) & \mbox{since $\Phi_E$ is an FCP} \\
        & = \mbox{$\sum\nolimits_{i = 1}^{N_b}$}  \mbox{$\sum\nolimits_{\phi_E \in \Phi_E}$} h(s_A, \phi_E) \left( \int_{s_E \in \phi_E} \chi_{\phi_E^i}(s_E)  w_i  \textup{d} s_E \right) \\
        & \lefteqn{\hspace{4cm} \mbox{since $\Phi_E$ is a constant-FCP of $S_E$ for $h$ at $s_A$}}  \\
        & = \mbox{$\sum\nolimits_{i = 1}^{N_b}$}  \mbox{$\sum\nolimits_{\phi_E \in \Phi_E}$} h(s_A, \phi_E) w_i \textup{vol}(\phi_E^i \cap \phi_E) & \mbox{by definition of $\textup{vol}$}
    \end{align*}  
which completes the proof. 
\end{proof}

\begin{lema}[Region-based upper bound]\label{rb-upper-bound-app}
For region-based belief $(s_A, b_E)$ represented by $\{ (\phi_E^i, w_i) \}_{i = 1}^{N_b}$ and $\Upsilon = \{ ((s_A^k, b_E^k), y_k) \mid k \in I\}$, if  $K_{\mathit{ub}} = K$, $(\phi_E^{\textup{max}},p)$ is returned by \algoref{alg:rb-upper-bound}, $b_E' = \sum_{k \in I_{s_{\scale{.75}{A}}}}  \lambda_k^{\star}  b_E^k$ and $b_E(s_E) > b_E'(s_E)$ for all $s_E \in \phi_E^{\textup{max}}$ where $\lambda_k^{\star}$ is a solution to the LP of \algoref{alg:rb-upper-bound}, then $p$ is an upper bound of $V_{\mathit{ub}}^{\Upsilon}$ at $(s_A, b_E)$. Furthermore, if $N_b = 1$, then  $p =  V_{\mathit{ub}}^{\Upsilon}(s_A,b_E)$.
\end{lema}
\begin{proof}
Suppose that $(s_A, b_E)$ is a region-based belief represented by $\{ (\phi_E^i, w_i) \}_{i = 1}^{N_b}$, $\Upsilon = \{ ((s_A^k, b_E^k), y_k) \mid k \in I\}$ and suppose  $K_{\mathit{ub}} = K$ and $(\phi_E^{\textup{max}},p)$ is returned by \algoref{alg:rb-upper-bound}, $b_E' = \sum_{k \in I_{s_{\scale{.75}{A}}}}  \lambda_k^{\star}  b_E^k$ where $\lambda_k^{\star}$ is a solution to the LP of \algoref{alg:rb-upper-bound} and $\phi_E^{\textup{max}} \subseteq S_E^{b_E>b_E'}$. 
Furthermore for each $k \in I$ suppose that $(s_A^k, b_E^k)$ is a region-based belief represented by $\{ (\phi_E^{kj}, w_{kj}) \}_{j = 1}^{N_b^k}$.

    Since $K_{\mathit{ub}} = K$, by definition of $K$ (see \thomref{thom:continuity}) we have:
    \begin{align}
        \lefteqn{K_{\mathit{ub}}(b_E,b_E') = (U - L) \int_{s_E \in S_E^{b_{\scale{.75}{E}}>b_{\scale{.75}{E}}^{\scale{.75}{'}}}} (b_E(s_E) - b_E'(s_E)) \textup{d}s_E} \nonumber \\
        & \; \lefteqn{= (U - L) \left( \int_{s_E \in S_E^{b_{\scale{.75}{E}}>b_{\scale{.75}{E}}^{\scale{.75}{'}}}} b_E(s_E) - \int_{s_E \in S_E^{b_{\scale{.75}{E}}>b_{\scale{.75}{E}}^{\scale{.75}{'}}}} b_E'(s_E) \textup{d}s_E \right)} & \mbox{rearranging} \nonumber \\
        & \; \lefteqn{\leq (U - L) \left( \int_{s_E \in S_E} b_E(s_E) - \int_{s_E \in S_E^{b_{\scale{.75}{E}}>b_{\scale{.75}{E}}^{\scale{.75}{'}}}} b_E'(s_E) \textup{d}s_E \right)} & \mbox{rearranging} \nonumber \\
        & = (U - L) \left( 1 - \int_{s_E \in S_E^{b_{\scale{.75}{E}}>b_{\scale{.75}{E}}^{\scale{.75}{'}}}} b_E'(s_E) \textup{d}s_E \right) \qquad \qquad \quad & \mbox{since $b_E \in \mathbb{P}(S_E)$.} \label{region3-eqn}
    \end{align}
Now since $\phi_E^{\textup{max}} \subseteq S_E^{b_E>b_E'}$ we have:
    \begin{align}
        \lefteqn{\int_{s_E \in S_E^{b_{\scale{.75}{E}}>b_{\scale{.75}{E}}^{\scale{.75}{'}}}} b_E'(s_E) \textup{d}s_E \; \geq \; \int_{s_E \in \phi_E^{\textup{max}}} b'(s_E) \textup{d}s_E} \nonumber \\
        & = \int_{s_E \in \phi_E^{\textup{max}}} \left( \mbox{$\sum_{k \in I_{s_{\scale{.75}{A}}}}$} \lambda_k^{\star} b_E^k(s_E) \textup{d}s_E \right)  & \mbox{by definition of $b_E'$}\nonumber  \\
        & = \mbox{$\sum\nolimits_{k \in I_{s_{\scale{.75}{A}}}}$} \left( \int_{s_E \in \phi_E^{\textup{max}}} \lambda_k^{\star} b_E^k(s_E) \textup{d}s_E \right) & \mbox{rearranging} \nonumber \\
        & \; \lefteqn{=  \mbox{$\sum\nolimits_{k \in I_{s_{\scale{.75}{A}}}}$} \left( \int_{s_E \in \phi_E^{\textup{max}}} \lambda_k^{\star} \mbox{$\sum\nolimits_{j = 1}^{N_{b}^k}$} \chi_{\phi_E^{kj}}(s_E) w_{kj} \textup{d}s_E \right)} & \nonumber \\
        & & \mbox{since $\{ (\phi_E^{kj}, w_{kj}) \}_{j = 1}^{N_b^k}$ represents $b^k_E$} \nonumber \\
        & \; \lefteqn{=  \mbox{$\sum\nolimits_{k \in I_{s_{\scale{.75}{A}}}}$} \mbox{$\sum\nolimits_{j = 1}^{N_{b}^k}$}  \lambda_k^{\star} \left( \int_{s_E \in \phi_E^{\textup{max}}} \chi_{\phi_E^{kj}}(s_E) w_{kj} \textup{d}s_E \right)} \nonumber & \mbox{rearranging} \\
        & \; \lefteqn{= \mbox{$\sum\nolimits_{k \in I_{s_{\scale{.75}{A}}}}$} \mbox{$\sum\nolimits_{j = 1}^{N_{b}^k}$} \lambda_k^{\star} w_{kj} \textup{vol}(\phi_E^{kj}  \cap \phi_E^{\textup{max}} )} \label{region2-eqn} & \mbox{by definition of $\textup{vol}$.}
    \end{align}
    Thus, substituting \eqnref{region2-eqn} into \eqnref{region3-eqn} we have:
    \[
    K_{\mathit{ub}}(b_E,b_E') \leq (U - L) \left( 1 - \mbox{$\sum\nolimits_{k \in I_{s_{\scale{.75}{A}}}}$} \mbox{$\sum\nolimits_{j = 1}^{N_{b}^k}$} \lambda_k^{\star} w_{kj} \textup{vol}(\phi_E^{kj}  \cap \phi_E^{\textup{max}} ) \right)
    \]
    and using \eqref{eq:new-ub}, it follows that the optimal value $p$ to the LP of \algoref{alg:rb-upper-bound} is an upper bound of $V_{\mathit{ub}}^{\Upsilon}$ at $(s_A, b_E)$.

    Finally, suppose that $N_b = 1$. Therefore $\phi_E^{\textup{max}} = \phi_E^1$ and since $\phi_E^1$ is the unique region with positive probabilities for $b_E$, by definition of $S_E^{b_E>b_E'}$ it follows that $S_E^{b_E>b_E'} \subseteq \phi_E^1$. Combining these with $\phi_E^{\textup{max}} \subseteq S_E^{b_E>b_E'}$, we have that $S_E^{b_E>b_E'} = \phi_E^{\textup{max}} = \phi_E^1$. Therefore, all the inequalities above become equalities, and therefore  $p =  V_{\mathit{ub}}^{\Upsilon}(s_A,b_E)$. 
\end{proof}

\clearpage

\clearpage

%% The Appendices part is started with the command \appendix;
%% appendix sections are then done as normal sections
%% \appendix

%% \section{}
%% \label{}

%% If you have bibdatabase file and want bibtex to generate the
%% bibitems, please use
%%
 \bibliographystyle{elsarticle-num} 
 \bibliography{references}
%%  \bibliography{<your bibdatabase>}

%% else use the following coding to input the bibitems directly in the
%% TeX file.

% \begin{thebibliography}{00}

% %% \bibitem{label}
% %% Text of bibliographic item

% \bibitem{}

% \input{appendix_case_studies}
% \end{thebibliography}
\end{document}